\RequirePackage{snapshot}
\documentclass[11pt,twoside,english,british]{article}
\usepackage[utf8]{inputenc}
\usepackage[a4paper]{geometry}
\geometry{verbose,tmargin=2.5cm,bmargin=2.5cm,lmargin=2.5cm,rmargin=2.5cm}
\usepackage{fancyhdr}
\pagestyle{fancy}
\usepackage{color}
\usepackage{babel}
\usepackage{array}
\usepackage{verbatim}
\usepackage{refstyle}
\usepackage{booktabs}
\usepackage{mathrsfs}
\usepackage{mathtools}
\usepackage{enumitem}
\usepackage{amsmath}
\usepackage{amsthm}
\usepackage{amssymb}
\usepackage{graphicx}
\usepackage{setspace}
\usepackage[authoryear,longnamesfirst]{natbib}
\onehalfspacing
\usepackage[unicode=true,
 bookmarks=true,bookmarksnumbered=false,bookmarksopen=false,
 breaklinks=true,pdfborder={0 0 0},pdfborderstyle={},backref=false,colorlinks=false]
 {hyperref}

\makeatletter


\AtBeginDocument{\providecommand\subsecref[1]{\ref{subsec:#1}}}
\AtBeginDocument{\providecommand\secref[1]{\ref{sec:#1}}}
\AtBeginDocument{\providecommand\eqref[1]{\ref{eq:#1}}}
\AtBeginDocument{\providecommand\assref[1]{\ref{ass:#1}}}
\AtBeginDocument{\providecommand\appref[1]{\ref{app:#1}}}
\AtBeginDocument{\providecommand\lemref[1]{\ref{lem:#1}}}
\AtBeginDocument{\providecommand\figref[1]{\ref{fig:#1}}}
\AtBeginDocument{\providecommand\enuref[1]{\ref{enu:#1}}}
\AtBeginDocument{\providecommand\exaref[1]{\ref{exa:#1}}}
\AtBeginDocument{\providecommand\thmref[1]{\ref{thm:#1}}}
\AtBeginDocument{\providecommand\propref[1]{\ref{prop:#1}}}
\providecommand{\tabularnewline}{\\}
\RS@ifundefined{subsecref}
  {\newref{subsec}{name = \RSsectxt}}
  {}
\RS@ifundefined{thmref}
  {\def\RSthmtxt{theorem~}\newref{thm}{name = \RSthmtxt}}
  {}
\RS@ifundefined{lemref}
  {\def\RSlemtxt{lemma~}\newref{lem}{name = \RSlemtxt}}
  {}

\numberwithin{equation}{section}
\theoremstyle{remark}
\newtheorem*{notation*}{\protect\notationname}
\theoremstyle{plain}
\newtheorem{assumption}{\protect\assumptionname}
\theoremstyle{plain}
\newtheorem{prop}{\protect\propositionname}[section]
\theoremstyle{definition}
\newtheorem{example}{\protect\examplename}[section]
\theoremstyle{plain}
\newtheorem{thm}{\protect\theoremname}[section]
\theoremstyle{plain}
\newtheorem{lem}{\protect\lemmaname}[section]

\raggedbottom


\setlist[enumerate,1]{label=\upshape{(\roman*)}, ref=(\roman*)}
\setlist[enumerate,2]{label=\upshape{(\alph*)}, ref=(\alph*)}
\setlist[enumerate,3]{label=\upshape{\roman*.}, ref=\roman*}



\lhead{}
\rhead{}
\rfoot{}
\lfoot{}

\fancyhead[CO]{\scshape duffy and simons}
\fancyhead[CE]{\scshape cointegration without unit roots}

\pagestyle{fancy}


\makeatletter
  \@ifpackageloaded{refstyle}{}{\usepackage{refstyle}}
\makeatother

\newref{thm}{name = Theorem~}
\newref{prop}{name = Proposition~}
\newref{lem}{name = Lemma~}
\newref{cor}{name = Corollary~}
\newref{ass}{name = }
\newref{exa}{name = Example~}
\newref{rem}{name = Remark~}
\newref{def}{name = Definition~}

\newref{eq}{refcmd = (\ref{#1})}

\newref{sec}{name = Section~}
\newref{sub}{name = Section~}
\newref{subsec}{name = Section~}
\newref{app}{name = Appendix~}

\newref{fig}{name = Figure~}
\newref{tbl}{name = Table~}










\makeatletter
  \@ifpackageloaded{mathrsfs}{}{\usepackage{mathrsfs}}
\makeatother

\date{}

\usepackage{nextpage}

\allowdisplaybreaks[1]


\usepackage{tikz}
\usepackage{scalefnt}
\usepackage[group-separator={,}]{siunitx}
\newcommand\smaller[2][0.85]{{\scalefont{#1}#2}}
\newcommand{\ass}[1]{{\upshape{\smaller[0.76]{#1}}}}

\usepackage{alphalph}


\usepackage{changepage}
\usepackage{threeparttable}

\newcommand{\proofpart}[1]{\textbf{({\romannumeral #1}).}}

\usepackage[usestackEOL]{stackengine}
\setstackgap{S}{5pt}
\newcommand{\authaffil}[2]{\Shortunderstack{#1\\\small{#2}}}

\theoremstyle{definition}

\renewenvironment{example}
  {\pushQED{\qed}\examplex}
  {\popQED\endexamplex}

\numberwithin{examplex}{section}

\usepackage[justification=centering]{caption}

\makeatother

\addto\captionsbritish{\renewcommand{\assumptionname}{Assumption}}
\addto\captionsbritish{\renewcommand{\examplename}{Example}}
\addto\captionsbritish{\renewcommand{\lemmaname}{Lemma}}
\addto\captionsbritish{\renewcommand{\notationname}{Notation}}
\addto\captionsbritish{\renewcommand{\propositionname}{Proposition}}
\addto\captionsbritish{\renewcommand{\theoremname}{Theorem}}
\addto\captionsenglish{\renewcommand{\assumptionname}{Assumption}}
\addto\captionsenglish{\renewcommand{\examplename}{Example}}
\addto\captionsenglish{\renewcommand{\lemmaname}{Lemma}}
\addto\captionsenglish{\renewcommand{\notationname}{Notation}}
\addto\captionsenglish{\renewcommand{\propositionname}{Proposition}}
\addto\captionsenglish{\renewcommand{\theoremname}{Theorem}}
\providecommand{\assumptionname}{Assumption}
\providecommand{\examplename}{Example}
\providecommand{\lemmaname}{Lemma}
\providecommand{\notationname}{Notation}
\providecommand{\propositionname}{Proposition}
\providecommand{\theoremname}{Theorem}

\begin{document}
\selectlanguage{english}%


\global\long\def\uwrite#1#2{\underset{#2}{\underbrace{#1}} }%

\global\long\def\blw#1{\ensuremath{\underline{#1}}}%

\global\long\def\abv#1{\ensuremath{\overline{#1}}}%

\global\long\def\vect#1{\mathbf{#1}}%


\global\long\def\smlseq#1{\{#1\} }%

\global\long\def\seq#1{\left\{  #1\right\}  }%

\global\long\def\smlsetof#1#2{\{#1\mid#2\} }%

\global\long\def\setof#1#2{\left\{  #1\mid#2\right\}  }%


\global\long\def\goesto{\ensuremath{\rightarrow}}%

\global\long\def\ngoesto{\ensuremath{\nrightarrow}}%

\global\long\def\uto{\ensuremath{\uparrow}}%

\global\long\def\dto{\ensuremath{\downarrow}}%

\global\long\def\uuto{\ensuremath{\upuparrows}}%

\global\long\def\ddto{\ensuremath{\downdownarrows}}%

\global\long\def\ulrto{\ensuremath{\nearrow}}%

\global\long\def\dlrto{\ensuremath{\searrow}}%


\global\long\def\setmap{\ensuremath{\rightarrow}}%

\global\long\def\elmap{\ensuremath{\mapsto}}%

\global\long\def\compose{\ensuremath{\circ}}%

\global\long\def\cont{C}%

\global\long\def\cadlag{D}%

\global\long\def\Ellp#1{\ensuremath{\mathcal{L}^{#1}}}%


\global\long\def\naturals{\ensuremath{\mathbb{N}}}%

\global\long\def\reals{\mathbb{R}}%

\global\long\def\complex{\mathbb{C}}%

\global\long\def\rationals{\mathbb{Q}}%

\global\long\def\integers{\mathbb{Z}}%


\global\long\def\abs#1{\ensuremath{\left|#1\right|}}%

\global\long\def\smlabs#1{\ensuremath{\lvert#1\rvert}}%
 
\global\long\def\bigabs#1{\ensuremath{\bigl|#1\bigr|}}%
 
\global\long\def\Bigabs#1{\ensuremath{\Bigl|#1\Bigr|}}%
 
\global\long\def\biggabs#1{\ensuremath{\biggl|#1\biggr|}}%

\global\long\def\norm#1{\ensuremath{\left\Vert #1\right\Vert }}%

\global\long\def\smlnorm#1{\ensuremath{\lVert#1\rVert}}%
 
\global\long\def\bignorm#1{\ensuremath{\bigl\|#1\bigr\|}}%
 
\global\long\def\Bignorm#1{\ensuremath{\Bigl\|#1\Bigr\|}}%
 
\global\long\def\biggnorm#1{\ensuremath{\biggl\|#1\biggr\|}}%


\global\long\def\Union{\ensuremath{\bigcup}}%

\global\long\def\Intsect{\ensuremath{\bigcap}}%

\global\long\def\union{\ensuremath{\cup}}%

\global\long\def\intsect{\ensuremath{\cap}}%

\global\long\def\pset{\ensuremath{\mathcal{P}}}%

\global\long\def\clsr#1{\ensuremath{\overline{#1}}}%

\global\long\def\symd{\ensuremath{\Delta}}%

\global\long\def\intr{\operatorname{int}}%

\global\long\def\cprod{\otimes}%

\global\long\def\Cprod{\bigotimes}%


\global\long\def\smlinprd#1#2{\ensuremath{\langle#1,#2\rangle}}%

\global\long\def\inprd#1#2{\ensuremath{\left\langle #1,#2\right\rangle }}%

\global\long\def\orthog{\ensuremath{\perp}}%

\global\long\def\dirsum{\ensuremath{\oplus}}%


\global\long\def\spn{\operatorname{sp}}%

\global\long\def\rank{\operatorname{rk}}%

\global\long\def\proj{\operatorname{proj}}%

\global\long\def\tr{\operatorname{tr}}%


\global\long\def\smpl{\ensuremath{\Omega}}%

\global\long\def\elsmp{\ensuremath{\omega}}%

\global\long\def\sigf#1{\mathcal{#1}}%

\global\long\def\sigfield{\ensuremath{\mathcal{F}}}%
\global\long\def\sigfieldg{\ensuremath{\mathcal{G}}}%

\global\long\def\flt#1{\mathcal{#1}}%

\global\long\def\filt{\mathcal{F}}%
\global\long\def\filtg{\mathcal{G}}%

\global\long\def\Borel{\ensuremath{\mathcal{B}}}%

\global\long\def\cyl{\ensuremath{\mathcal{C}}}%

\global\long\def\nulls{\ensuremath{\mathcal{N}}}%

\global\long\def\gauss{\mathfrak{g}}%

\global\long\def\leb{\mathfrak{m}}%


\global\long\def\prob{P}%

\global\long\def\Prob{\ensuremath{\mathbb{P}}}%

\global\long\def\Probs{\mathcal{P}}%

\global\long\def\PROBS{\mathcal{M}}%

\global\long\def\expect{\ensuremath{\mathbb{E}}}%

\global\long\def\probspc{\ensuremath{(\smpl,\filt,\Prob)}}%


\global\long\def\iid{\ensuremath{\textnormal{i.i.d.}}}%

\global\long\def\as{\ensuremath{\textnormal{a.s.}}}%

\global\long\def\asp{\ensuremath{\textnormal{a.s.p.}}}%

\global\long\def\io{\ensuremath{\ensuremath{\textnormal{i.o.}}}}%

\newcommand\independent{\protect\mathpalette{\protect\independenT}{\perp}}
\def\independenT#1#2{\mathrel{\rlap{$#1#2$}\mkern2mu{#1#2}}}

\global\long\def\indep{\independent}%

\global\long\def\distrib{\ensuremath{\sim}}%

\global\long\def\distiid{\ensuremath{\sim_{\iid}}}%

\global\long\def\asydist{\ensuremath{\overset{a}{\distrib}}}%

\global\long\def\inprob{\ensuremath{\overset{p}{\goesto}}}%

\global\long\def\inprobu#1{\ensuremath{\overset{#1}{\goesto}}}%

\global\long\def\inas{\ensuremath{\overset{\as}{\goesto}}}%

\global\long\def\eqas{=_{\as}}%

\global\long\def\inLp#1{\ensuremath{\overset{\Ellp{#1}}{\goesto}}}%

\global\long\def\indist{\ensuremath{\overset{d}{\goesto}}}%

\global\long\def\eqdist{=_{d}}%

\global\long\def\wkc{\ensuremath{\rightsquigarrow}}%

\global\long\def\wkcu#1{\overset{#1}{\ensuremath{\rightsquigarrow}}}%

\global\long\def\plim{\operatorname*{plim}}%


\global\long\def\var{\operatorname{var}}%

\global\long\def\lrvar{\operatorname{lrvar}}%

\global\long\def\cov{\operatorname{cov}}%

\global\long\def\corr{\operatorname{corr}}%

\global\long\def\bias{\operatorname{bias}}%

\global\long\def\MSE{\operatorname{MSE}}%

\global\long\def\med{\operatorname{med}}%


\global\long\def\simple{\mathcal{R}}%

\global\long\def\sring{\mathcal{A}}%

\global\long\def\sproc{\mathcal{H}}%

\global\long\def\Wiener{\ensuremath{\mathbb{W}}}%

\global\long\def\sint{\bullet}%

\global\long\def\cv#1{\left\langle #1\right\rangle }%

\global\long\def\smlcv#1{\langle#1\rangle}%

\global\long\def\qv#1{\left[#1\right]}%

\global\long\def\smlqv#1{[#1]}%


\global\long\def\trans{\ensuremath{\prime}}%

\global\long\def\indic{\ensuremath{\mathbf{1}}}%

\global\long\def\Lagr{\mathcal{L}}%

\global\long\def\grad{\nabla}%

\global\long\def\pmin{\ensuremath{\wedge}}%
\global\long\def\Pmin{\ensuremath{\bigwedge}}%

\global\long\def\pmax{\ensuremath{\vee}}%
\global\long\def\Pmax{\ensuremath{\bigvee}}%

\global\long\def\sgn{\operatorname{sgn}}%

\global\long\def\argmin{\operatorname*{argmin}}%

\global\long\def\argmax{\operatorname*{argmax}}%

\global\long\def\Rp{\operatorname{Re}}%

\global\long\def\Ip{\operatorname{Im}}%

\global\long\def\deriv{\ensuremath{\mathrm{d}}}%

\global\long\def\diffnspc{\ensuremath{\deriv}}%

\global\long\def\diff{\ensuremath{\,\deriv}}%

\global\long\def\i{\ensuremath{\mathrm{i}}}%

\global\long\def\e{\mathrm{e}}%

\global\long\def\sep{,\ }%

\global\long\def\defeq{\coloneqq}%

\global\long\def\eqdef{\eqqcolon}%

\selectlanguage{british}%

\selectlanguage{english}%
\global\long\def\eigs{\mathcal{L}}%

\global\long\def\spect{\mathcal{\lambda}}%

\global\long\def\trans{\mathsf{T}}%

\global\long\def\qcs{\textnormal{QCS}}%

\global\long\def\cs{\textnormal{CS}}%

\global\long\def\lu{{\scriptscriptstyle \textnormal{LU}}}%

\global\long\def\ur{{\scriptscriptstyle \textnormal{UR}}}%

\global\long\def\st{{\scriptscriptstyle \textnormal{ST}}}%

\global\long\def\sl{{\scriptscriptstyle \textnormal{SL}}}%

\global\long\def\ls{{\scriptscriptstyle \textnormal{LS}}}%

\global\long\def\lds{{\scriptscriptstyle \textnormal{L\ensuremath{\bullet}S}}}%

\global\long\def\sdl{{\scriptscriptstyle \textnormal{S\ensuremath{\bullet}L}}}%

\global\long\def\vek{\operatorname{vec}}%

\global\long\def\diag{\operatorname{diag}}%

\global\long\def\col{\operatorname{col}}%

\global\long\def\chol{\text{ch}}%

\global\long\def\proj{\mathcal{Q}}%

\global\long\def\smlfloor#1{\lfloor#1\rfloor}%

\global\long\def\sm{\mathfrak{sm}}%

\global\long\def\mg{\mathfrak{mg}}%

\global\long\def\M{\mathcal{M}}%

\global\long\def\N{\mathcal{N}}%

\global\long\def\wald{\mathcal{W}}%

\global\long\def\PHI{\boldsymbol{\Phi}}%

\global\long\def\like{\mathcal{\ell}}%

\global\long\def\likens{\like^{\ast}}%

\global\long\def\mn{\mathrm{MN}}%

\global\long\def\normdist{\mathrm{N}}%

\global\long\def\lr{\mathcal{LR}}%

\global\long\def\spc#1{\mathcal{#1}}%

\global\long\def\set#1{\mathscr{#1}}%

\global\long\def\err{\varepsilon}%

\global\long\def\serr{w}%

\global\long\def\radius{\rho}%

\global\long\def\xdet{\bar{x}}%

\global\long\def\ydet{\bar{y}}%

\global\long\def\zdet{\bar{z}}%

\global\long\def\errdet{\bar{\err}}%

\global\long\def\Zdet{\bar{Z}}%

\global\long\def\Ld{\set L_{\mathrm{d}}}%

\global\long\def\Ln{\set L_{\mathrm{n}}}%

\global\long\def\Ls{\set L_{\mathrm{s}}}%

\global\long\def\ci{\mathcal{C}}%

\global\long\def\cisimple{\ci_{\textnormal{S}}}%

\global\long\def\cibonf{\ci_{\textnormal{B}}}%

\global\long\def\cijoh{\ci_{\textnormal{J}}}%

\global\long\def\cinp{\ci_{\textnormal{NP}}}%

\global\long\def\largedec#1{\mathbf{#1}}%

\global\long\def\R{\largedec R}%

\global\long\def\G{\largedec G}%

\global\long\def\L{\largedec L}%

\global\long\def\y{\largedec y}%

\global\long\def\x{\largedec x}%

\global\long\def\BM{\mathrm{BM}}%

\global\long\def\ginv{\#}%

\global\long\def\irf{\mathrm{IRF}}%

\global\long\def\f{\varphi}%

\global\long\def\spcf{\mathcal{P}}%

\global\long\def\cone{\mathcal{K}}%

\global\long\def\snum{{\scriptscriptstyle \#}}%

\global\long\def\locest{f}%

\global\long\def\dist{\operatorname{dist}}%

\global\long\def\THETA{\boldsymbol{\Theta}}%

\global\long\def\GAMMA{\boldsymbol{\Gamma}}%

\global\long\def\PI{\boldsymbol{\Pi}}%

\global\long\def\Z{\mathcal{Z}}%

\global\long\def\dElta{\boldsymbol{\delta}}%

\global\long\def\vLambda{\boldsymbol{\lambda}}%

\global\long\def\vA{\boldsymbol{a}}%

\global\long\def\vall{\boldsymbol{\pi}}%

\global\long\def\vAll{\boldsymbol{\Pi}}%

\global\long\def\pr{{\scriptscriptstyle \textnormal{PR}}}%

\global\long\def\crit{\mathrm{cv}}%

\global\long\def\test{\text{\ensuremath{\phi}}}%

\global\long\def\np{\mathcal{NP}}%

\global\long\def\ret{\varrho}%

\global\long\def\lfst{\mathcal{ST}}%

\global\long\def\Err{\boldsymbol{\varepsilon}}%
\selectlanguage{british}%

\title{Cointegration without Unit Roots}
\author{\authaffil{James A.\ Duffy\footnotemark[1]{}}{University of Oxford}\hspace{2cm}
\authaffil{Jerome R.\ Simons\footnotemark[2]{}\vphantom{y}}{University
of Cambridge}}
\date{\vspace*{0.3cm}April 2023}

\maketitle
\renewcommand*{\thefootnote}{\fnsymbol{footnote}}

\footnotetext[1]{Department of Economics and Corpus Christi College;
\texttt{james.duffy@economics.ox.ac.uk}}

\footnotetext[2]{Faculty of Economics and St Edmund's College; \texttt{jrs89@cam.ac.uk}}

\renewcommand*{\thefootnote}{\arabic{footnote}}

\setcounter{footnote}{0}
\begin{abstract}
\noindent It has been known since Elliott (1998) that standard methods
of inference on cointegrating relationships break down entirely when
autoregressive roots are near but not exactly equal to unity. We consider
this problem within the framework of a structural VAR, arguing this
it is as much a problem of identification failure as it is of inference.
We develop a characterisation of cointegration based on the impulse
response function, which allows long-run equilibrium relationships
to remain identified even in the absence of exact unit roots. Our
approach also provides a framework in which the structural shocks
driving the common persistent components continue to be identified
via long-run restrictions, just as in an SVAR with exact unit roots.
We show that inference on the cointegrating relationships is affected
by nuisance parameters, in a manner familiar from predictive regression;
indeed the two problems are asymptotically equivalent. By adapting
the approach of Elliott, Müller and Watson (2015) to our setting,
we develop tests that robustly control size while sacrificing little
power (relative to tests that are efficient in the presence of exact
unit roots).
\end{abstract}
\vfill

\noindent The authors thank G.~Bårdsen, V.\ Berenguer-Rico, S.\ Bond,
P.\ Boswijk, V.\ Carvalho, G.\ Chevillon, A.\ Harvey, O.\ Linton,
S.\ Mavroeidis, U.\ Müller, B.\ Nielsen, A.\ Onatskiy and participants
at seminars at Amsterdam, NTNU (Trondheim), ESSEC (Cergy), Essex,
Southampton, Sydney, UNSW and Oxford for helpful comments on earlier
drafts of this work.

\thispagestyle{plain}

\pagenumbering{roman}

\newpage{}

\thispagestyle{plain}

\setcounter{tocdepth}{2}

\tableofcontents{}

\newpage{}

\pagenumbering{arabic}

\section{Introduction}

Since its development in the late 1980s, the cointegrated vector autoregressive
model has been widely applied to the modelling of macroeconomic time
series -- a testament to its ability to account for both the short-
and long-run dynamics of these series in a unified way (\citealp{hendry1986econometric};
\citealp{EG87Ecta}; \citealp{Joh95}). By allowing for one or more
autoregressive roots at unity, the model is able to match two key
features of these series: their high degree of persistence, which
gives rise to their characteristically `random wandering' behaviour,
and the tendency for economically related series to move together,
such that certain linear combinations of these series -- given by
the cointegrating relationships -- are markedly less persistent than
the series themselves. Dual to this, the model provides a framework
for identifying the structural shocks whose permanent effects generate
these patterns of co-movement (\citealp{BlanchardQuah89}; \citealp{KPSW91AER}),
which has been widely used in empirical studies.

Cointegrating relationships are often of intrinsic interest because
macroeconomic theories make definite predictions about the existence
and magnitude of the long-run equilibrium relationships that these
embody. (See e.g.\ \citealp{FM95QJE}; \citealp{MMS04AER}; \citealp{SK07AER}.)
For the purposes of estimating these relationships, and thus of testing
the predictions of such theories, a variety of efficient methods exist,
such as FM-OLS (\citealp{PH90RES}), DOLS (\citealp{SW93Ecta}), and
(rank-imposed) maximum likelihood estimation of the VAR itself (\citealp{Joh95}).
However, all these methods rely on a common assumption that the data
is generated by a VAR with a certain number of exact unit roots. \citet{Ell98Ecta}
showed that should this assumption fail only slightly -- such that
some roots are merely `close' but not exactly equal to unity --
then inferences based on these methods can suffer from severe size
distortions. His findings are particularly disturbing because this
problem arises even in a VAR with roots that are `nearly' unity,
in the sense of lying within an $O(n^{-1})$ neighbourhood thereof,
which is practically indistinguishable from the same model with exact
unit roots.

The present work addresses the problem posed by \citet{Ell98Ecta}:
how can one perform valid inference on the cointegrating relationships
in an (S)VAR, when the dominant roots in that model may not be exactly
unity? In view of the significance of Elliott's findings, it is perhaps
surprising that only a few previous contributions have also attempted
to respond to them: most notably \citet{Wri00JBES}, \citet{MP09},
\citet{MW13JoE}, \citet{FJ17Ect}, and \citet{HV23JBES}. The approach
taken in this paper is quite different from that taken in those previous
works, which have largely followed \citet{Ell98Ecta} in framing the
problem as an inferential one, which might be solved merely by using
appropriately modified estimators and tests. Instead, our view is
that the problem is at least as much one of identification failure
as it is of inference. Indeed, the usual definition of cointegration
-- in terms of linear combinations of series that eliminate their
common integrated components -- becomes meaningless as soon as the
largest characteristic root in a VAR departs even slightly from unity.
(See \subsecref{literature} below for a further discussion of how
our contribution relates to those previous works.)

Our first task is thus to develop a characterisation of cointegration,
based on the impulse response function implied by the VAR, that remains
meaningful in a model with some roots near but not necessarily equal
to unity. In a $p$-dimensional VAR with $q$ roots near unity, one
can always identify a $p-q=r$-dimensional subspace $S_{r}$, such
that the decay of the impulse response function in the directions
contained in $S_{r}$ is more rapid than it is in all other directions
(Sections\ \ref{subsec:model}--\ref{subsec:qcs}). We term $S_{r}$
the \emph{quasi-cointegrating space} (QCS) of the VAR. When those
$q$ roots are exactly unity, the QCS coincides exactly with the cointegrating
space -- and when they are modelled as being local to unity, i.e.\ lying
within a $O(n^{-1})$ neighbourhood of unity, the quasi-cointegrating
relations are exactly those that eliminate the common near stochastic
trends from the system.

While quasi-cointegration is not the only conceivable way of extending
cointegration to a wider domain, our approach has the further advantage
of maintaining the duality, that exists in an SVAR with exact unit
roots, between the identification of the long-run equilibrium relationships
between the series, and of the subvector of structural shocks whose
common permanent effects underpin those relationships. In this way,
we simultaneously extend both cointegration, and the use of long-run
identifying restrictions, to an SVAR without exact unit roots, by
allowing that a subset of the structural shocks may have effects that
are highly persistent, rather than permanent, where persistence is
understood in terms of the (relative) decay rate of the impact of
those shocks (\subsecref{lr-restrictions}). We thereby show how these
long-run restrictions, which are often thought to be available only
in the case of exact unit roots (as noted e.g.\ in \citealp{KL17book},
Sec.~10.5.1), remain a viable approach to identification even without
this auxiliary assumption. 

Inference on the QCS is complicated by the presence of nuisance parameters,
which measure the proximity of the dominant roots of the VAR to unity
(\secref{estimation}). This problem is similar to that which arises
in predictive regressions, when the regressors have an unknown but
possibly high degree of persistence, such as has been studied e.g.\ by
\citet{CES95ET}, \citet{CY06JFE}, \citet{JM06Ecta}, \citet{PL13JoE},
\citet{Phi14Ecta}, and \citet{KMS15RFS}. In fact, we show that the
problem of inference on the QCS is asymptotically equivalent to that
of inference in a predictive regression: both converge to a common
limiting experiment under an appropriate local parametrisation. This
equivalence permits methods that have been developed for predictive
regression -- for which there are a great many -- to be transposed
the present setting. Our problem also fits within the general framework
of \citet{EMW15Ecta}, and by adapting their approach to the present
setting, we obtain tests and confidence intervals that are effectively
free of any size distortions, while sacrificing little power relative
to the efficient estimators, even when the data is generated with
exact unit roots. We also extend the mixed normality of the maximum-likelihood
estimator, when the correct number of unit roots are imposed, to the
case where the correct values of the dominant roots are imposed. This
provides efficient likelihood-based tests and confidence intervals
for cases where one is willing to take a stand on the values of these
parameters.

The finite-sample performance of our procedure is evaluated through
a series of simulation exercises, and illustrated with an empirical
application to the expectations theory of the term structure (\secref{finite-sample}).
Proofs of all technical results appear in the appendices.
\begin{notation*}
All limits are taken as $n\goesto\infty$ unless otherwise stated.
$\inprob$ and $\wkc$ respectively denote convergence in probability
and in distribution (weak convergence). We write `$X_{n}(\lambda)\wkc X(\lambda)$
on $D[0,1]$' to denote that $\{X_{n}\}$ converges weakly to $X$,
where these are considered as random elements of $D[0,1]$, the space
of cadlag functions $[0,1]\setmap\reals^{m}$, equipped with the uniform
topology. $\smlnorm{\cdot}$ denotes the Euclidean norm on $\reals^{m}$;
all matrix norms are induced by the corresponding vector norms. For
$X$ a random variable and $p\geq1$, $\smlnorm X_{p}\defeq(\expect\smlabs X^{p})^{1/p}$.
$M^{1/2}$ denotes the principal square root of a positive semidefinite
matrix $M$.
\end{notation*}

\section{`Cointegration' in a VAR without unit roots}

\label{sec:cointegration}

\subsection{Model and assumptions}

\label{subsec:model}

The data generating process (DGP) for the observed series $\{y_{t}\}_{t=1}^{n}$
is a $k$th order vector autoregressive (VAR) model, written in unobserved
components form as
\begin{align}
y_{t} & =\mu+\delta t+x_{t} & x_{t} & =\sum_{i=1}^{k}\Phi_{i}x_{t-i}+\err_{t}\label{eq:dgp}
\end{align}
where $\err_{t}$, $x_{t}$ and $y_{t}$ are $p$-dimensional random
vectors. The reduced-form shocks $\{\err_{t}\}$ depend on an underlying
($p$-dimensional) vector of i.i.d.\ and mutually uncorrelated structural
shocks $\{\serr_{t}\}$, via
\begin{equation}
\err_{t}=\Upsilon\serr_{t},\label{eq:dgp-err}
\end{equation}
so that \eqref{dgp}--\eqref{dgp-err} comprise a structural VAR
(or SVAR). Let $\Phi(\lambda)\defeq I\lambda^{k}-\sum_{i=1}^{k}\Phi_{i}\lambda^{k-i}$
denote the characteristic polynomial associated to \eqref{dgp}; we
shall refer to any $\lambda$ for which $\det\Phi(\lambda)=0$ as
a `root of $\Phi$'. Let $\PHI\defeq(\Phi_{1},\Phi_{2},\ldots,\Phi_{k})\in\reals^{p\times kp}$.
We generally maintain the following.

\setcounter{assumption}{2901}
\begin{assumption}
\label{ass:DGP}$\{y_{t}\}_{t=1}^{n}$ and $\{x_{t}\}_{t=1}^{n}$
are generated under \eqref{dgp}--\eqref{dgp-err}, where:
\begin{enumerate}[label=\textnormal{\ass{DGP\arabic*}},leftmargin=1.5cm]
\item $\det\Phi(\lambda)\neq0$ for all $\smlabs{\lambda}>1$; 
\item $\{\serr_{t}\}$ is i.i.d.\ with $\expect\serr_{t}=0$ and $\expect\serr_{t}\serr_{t}^{\trans}=I_{p}$,
and $\Sigma\defeq\Upsilon\Upsilon^{\trans}$ is positive definite;
\item \label{enu:DGP:init}$x_{0}=x_{-1}=\cdots=x_{-k+1}=0$.
\end{enumerate}
\end{assumption}
We say that a $d_{z}$-dimensional process $\{z_{t}\}$ is integrated
of order zero, denoted $z_{t}\sim I(0)$, if there exists a deterministic
process $\{\mu_{t}\}$ such that $n^{-1/2}\sum_{s=1}^{\smlfloor{nr}}(z_{s}-\mu_{s})\wkc B(r)$,
for $B$ a $d_{z}$-dimensional Brownian motion. Letting $\Delta^{d}$
denote the $d$th order temporal differencing operator, we say that
$z_{t}$ is \emph{integrated of order $d$}, denoted $z_{t}\sim I(d)$,
if $\Delta^{d}z_{t}\sim I(0)$. We say $\{z_{t}\}$ is \emph{nearly
integrated} if $n^{-1/2}(z_{\smlfloor{nr}}-\mu_{\smlfloor{nr}})\wkc\int_{0}^{r}\e^{C(r-s)}\deriv B(s)$
for some $C\in\reals^{d_{z}\times d_{z}}$.

\subsection{Cointegration: the model with unit roots}

\label{subsec:coint-ur}

Cointegration analysis is concerned with how linear combinations of
$I(d)$ processes can yield processes that are themselves only $I(d-b)$
for some $0<b\leq d$. The reduced persistence and more rapid mean
reversion of the latter is interpreted as evidence of a long-run equilibrium
relationship between the original processes. Here, we focus exclusively
on the special but practically important case of $I(1)$ processes
having linear combinations that are $I(0)$, reserving the term \emph{cointegration}
exclusively for this case. As is well-known, the VAR model \eqref{dgp}
is able to generate cointegrated $I(1)$ processes under the following
assumptions, which define the $I(0)/I(1)$ cointegrated VAR (CVAR)
model.

\setcounter{assumption}{99}
\begin{assumption}
\label{ass:J}~
\begin{enumerate}[label=\textnormal{\ass{CV\arabic*}},leftmargin=1.5cm]
\item \label{enu:J:roots}$\Phi$ has $q$ roots at (real) unity, and all
others strictly inside the unit circle.
\item \label{enu:J:rank}$\rank\Phi(1)=p-q\eqdef r$
\end{enumerate}
\end{assumption}
By the Granger--Johansen representation theorem (GJRT; see e.g.\ \citealt[Thm~4.2 and Cor.~4.3]{Joh95}),
the preceding is necessary and sufficient for $y_{t}\sim I(1)$, and
for there to exist a rank $r$ matrix $\beta\in\reals^{p\times r}$
of cointegrating relationships, such that $\beta^{\trans}y_{t}\sim I(0)$.
The matrix $\beta$ is identified only up to its column space, $\cs\defeq\spn\beta$,
termed the \emph{cointegrating space} (CS). Two equivalent characterisations
of the cointegrating space, the first of which is definitional and
the second of which follows immediately from the GJRT, are
\begin{enumerate}[label=(C.\roman*),leftmargin=1.5cm]
\item \label{enu:coint:i0}$b^{\trans}y_{t}\sim I(0)$ if and only if $b\in\cs$;
and
\item \label{enu:coint:Phi}$\cs=\spn\Phi(1)^{\trans}=\{\ker\Phi(1)\}^{\perp}$.
\end{enumerate}

Our objective in this paper is to estimate the CS, or a space sharing
its key properties, in a setting more general than that of \assref{J}.
For this purpose, we next recall two further characterisations of
the CS that extend beyond the setting of \assref{J}, in a way that
the preceding do not.\footnote{While the arguments that lead to these characterisations may be familiar
to the reader, for completeness formal statements and proofs of the
results underlying the discussion that follows appear in \appref{qcs-results}.} The third characterisation is in terms of the impulse response function
of $\{y_{t}\}$ with respect to the reduced-form or structural disturbances
(i.e.\ $\{\err_{s}\}$ or $\{\serr_{s}\}$), denoted
\begin{align*}
\irf_{s}^{\err} & \defeq\frac{\partial y_{t+s}}{\partial\err_{t}}=\frac{\partial x_{t+s}}{\partial\err_{t}} & \irf_{s}^{\serr} & \defeq\frac{\partial y_{t+s}}{\partial\serr_{t}}=\irf_{s}^{\err}\frac{\partial\serr_{t}}{\partial\err_{t}}=\irf_{s}^{\err}\Upsilon,
\end{align*}
For a given $b\in\reals^{p}$, the product $b^{\trans}\irf_{s}^{\serr}$
gives the response of the linear combination $b^{\trans}y_{t+s}$
to $\serr_{s}$. The rate at which $b^{\trans}\irf_{s}^{\serr}$ (or
$b^{\trans}\irf_{s}^{\err}$) decays as the horizon $s$ diverges
provides a measure of the persistence of the series $\{b^{\trans}y_{t}\}$.
Now let $m<p$, and define $S_{m}\subset\reals^{p}$ to be an $m$-dimensional
linear subspace such that for every $b\in S_{m}$ and $c\notin S_{m}$,
\begin{equation}
\lim_{s\goesto\infty}\frac{\smlnorm{b^{\trans}\irf_{s}^{\serr}}}{\smlnorm{c^{\trans}\irf_{s}^{\serr}}}=0.\label{eq:relative-irf}
\end{equation}
When it exists, $S_{m}$ collects those $m$ linear combinations of
$y_{t}$ that are, in the sense of \eqref{relative-irf}, the least
persistent. While $\irf_{s}^{\serr}$ evidently depends on $\Upsilon$,
the subspace $S_{m}$ itself is \emph{invariant} to $\Upsilon$, and
hence to the scheme used to identify the structural shocks; indeed,
we may equivalently characterise $S_{m}$ in terms of the reduced-form
impulse responses, with $\irf_{s}^{\err}$ taking the place of $\irf_{s}^{\serr}$
in \eqref{relative-irf}. Under \assref{J}, $S_{m}$ with $m=r$
exists and is unique, and moreover
\begin{enumerate}[resume, resume*]
\item \label{enu:coint:irf}$\cs=S_{r}$
\end{enumerate}
(see \lemref{qcs}). In other words, the cointegrating space is spanned
by the vectors giving the $r$ least persistent linear combinations
of $y_{t}$.

\begin{figure}
\begin{adjustwidth}{-2cm}{-2cm}
\noindent \begin{centering}
\begin{tabular}{cc}
\includegraphics[viewport=100bp 250bp 458bp 539bp,clip,scale=0.7]{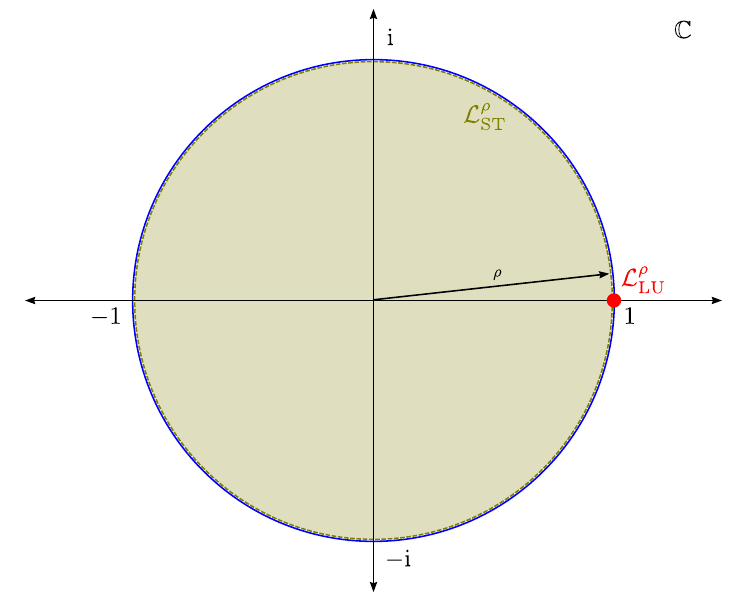} & \includegraphics[viewport=100bp 250bp 458bp 539bp,clip,scale=0.7]{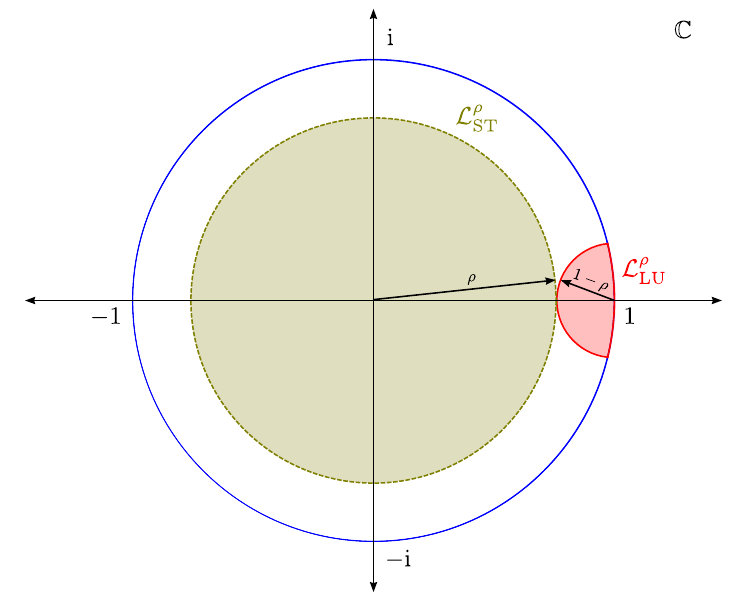}\tabularnewline
(a) Cointegration ($\radius=1$) & (b) Quasi-cointegration ($\radius<1$)\tabularnewline
\end{tabular}
\par\end{centering}
\end{adjustwidth}

\caption{$\protect\eigs_{\protect\lu}^{\rho}$ and $\protect\eigs_{\protect\st}^{\protect\radius}$,
from \eqref{spect_sets}, in the complex plane}

\label{fig:roots}
\end{figure}

Our final characterisation of the cointegrating space provides the
basis for its estimation in settings more general than \assref{J};
it derives from the application of a spectral decomposition to the
companion form representation of \eqref{dgp} (see \lemref{GLR}).
Define the disjoint sets
\begin{align}
\eigs_{\lu}^{\radius} & \defeq\{z\in\complex\mid\smlabs z\leq1\text{ and }\smlabs{z-1}\leq1-\radius\} & \eigs_{\st}^{\radius} & \defeq\{z\in\complex\mid\smlabs z<\radius\},\label{eq:spect_sets}
\end{align}
so that for a given $\radius\leq1$ (but close to unity), $\eigs_{\lu}^{\radius}$
defines a closed set of points on or inside the unit complex circle,
within a distance $1-\rho$ of real unity, and $\eigs_{\st}^{\radius}$
defines an open ball of radius $\radius$ centred at zero, as depicted
in \figref{roots} separately for the cases where $\rho<1$ and $\rho=1$.
Now suppose that $\Phi$ has $q$ roots in $\eigs_{\lu}^{\radius}$
and all others in $\eigs_{\st}^{\radius}$ for some $\radius\leq1$.
Under \enuref{J:roots} this setup holds with $\radius=1$, so that
these sets are as in \figref{roots}(a). Since $\eigs_{\lu}^{\radius}$
and $\eigs_{\st}^{\radius}$ are disjoint, there exist real matrices
\begin{align}
\underset{(p\times kp)}{R\vphantom{R_{\lu}}} & \defeq[\begin{array}[t]{cc}
\underset{(p\times q)}{R_{\lu}\vphantom{\sum_{1}}} & \underset{\phantom{p\times q}}{R_{\st}}\end{array}] & \underset{(p\times kp)}{L\vphantom{R_{\lu}}} & \defeq[\begin{array}[t]{cc}
\underset{(p\times q)}{L_{\lu}\vphantom{\sum_{1}}} & \underset{\phantom{p\times q}}{L_{\st}}\end{array}] & \underset{(kp\times kp)}{\Lambda\vphantom{\Lambda_{\lu}}} & \defeq\diag\{\underset{(q\times q)}{\Lambda_{\lu}}\sep\underset{\phantom{q\times q}}{\Lambda_{\st}}\}\label{eq:RL}
\end{align}
such that: (a) the eigenvalues of $\Lambda_{\lu}$ and $\Lambda_{\st}$
correspond to the roots of $\Phi$, and lie in $\eigs_{\lu}^{\radius}$
and $\eigs_{\st}^{\radius}$ respectively; (b) the triple $(R_{\lu},\Lambda_{\lu},L_{\lu})$
satisfies 
\begin{align}
R_{\lu}\Lambda_{\lu}^{k}-\sum_{i=1}^{k}\Phi_{i}R_{\lu}\Lambda_{\lu}^{k-i} & =0 & \Lambda_{\lu}^{k}L_{\lu}^{\trans}-\sum_{i=1}^{k}\Lambda_{\lu}^{k-i}L_{\lu}^{\trans}\Phi_{i} & =0;\label{eq:eig-eig}
\end{align}
and (c) the (reduced-form) impulse response function of $y_{t}$ is
can be written as
\begin{equation}
\frac{\partial y_{t+s}}{\partial\err_{t}}=\irf_{s}^{\err}=R\Lambda^{k-1+s}L^{\trans}=R_{\lu}\Lambda_{\lu}^{k-1+s}L_{\lu}^{\trans}+R_{\st}\Lambda_{\st}^{k-1+s}L_{\st}^{\trans},\label{eq:irfdecomp}
\end{equation}
(see \lemref{GLR} and the subsequent remarks). Under \assref{J},
we have $\Lambda_{\lu}=I_{q}$ and $\rank R_{\lu}=\rank L_{\lu}=q$
(see \lemref{qcs}); it follows that the limits of the structural
and reduced-form IRFs as the horizon $s\goesto\infty$ are
\begin{align}
\lim_{s\goesto\infty}\irf_{s}^{\serr} & =R_{\lu}L_{\lu}^{\trans}\Upsilon, & \lim_{s\goesto\infty}\irf_{s}^{\err} & =R_{\lu}L_{\lu}^{\trans},\label{eq:limiting-irf}
\end{align}
yielding our final characterisation of the cointegrating space as
\begin{enumerate}[resume, resume*]
\item \label{enu:coint:Rlu}$\cs=(\spn R_{\lu})^{\perp}$.
\end{enumerate}
Consistent with the discussion of $S_{r}$ above, the matrix $\Upsilon$
-- and thus the identification of the structural shocks -- plays
no role here; the cointegrating space depends only on the column space
of the matrices appearing on the r.h.s.\ of \eqref{limiting-irf},
and so is invariant to post-multiplication by any full-rank matrix.\footnote{As \exaref{state-space} below illustrates, if $\{\err_{t}\}$ follows
a finite-order MA process -- such as typically arises when $\{y_{t}\}$
is generated by a linear state-space model -- then the autoregressive
coefficients, and the associated characteristic polynomial, alone
carry all the information required to recover $\cs$ (and similarly
for the quasi-cointegrating space introduced in the next section).}

\subsection{Cointegration without unit roots}

\label{subsec:qcs}

Having thoroughly characterised cointegration in a VAR with $q$ exact
unit roots, we may now return to our motivating problem: that of inference
on the cointegrating relationships when those $q$ roots are allowed
to be merely `near' to (real) unity. As shown by \citet{Ell98Ecta},
even if we consider the apparently favourable case of a sequence of
models whose roots drift towards unity as per $\Lambda_{\lu}=I+n^{-1}C$,
standard efficient estimators of the cointegrating relationships (such
as FM-OLS, DOLS and ML) are in general asymptotically biased, and
the associated inferences severely size distorted. This lack of robustness
is particularly disturbing because it arises in VARs that cannot be
consistently distinguished from those with exact unit roots, preventing
the extent of this problem from being empirically evaluated.

Our view is that this problem is fundamentally one of identification,
whose resolution demands a characterisation of cointegration that
retains its meaning over a wider domain than merely a VAR with exact
unit roots. Neither \enuref{coint:i0} nor \enuref{coint:Phi}, which
are implicitly utilised by the standard estimators, are fit for this
purpose. For if the largest $q$ roots of $\Phi$ were strictly inside
the unit circle -- as would now be permitted -- then \emph{all}
linear combinations of $y_{t}$ would be $I(0)$, $\Phi(1)$ would
have full rank, and hence both characterisations would identify the
cointegrating space with the whole of $\reals^{p}$. In contrast,
both \enuref{coint:irf} and \enuref{coint:Rlu} would continue to
identify that ($r$-dimensional) subspace of linear combinations of
$\{y_{t}\}$ having the least persistence, and thereby continue to
capture the long-run equilibrium relationships between these series.
Accordingly, they provide a sound basis on which to extend `cointegration'
to VARs without exact unit roots.

To this end, we now consider relaxing \assref{J} above as follows

\setcounter{assumption}{444}
\begin{assumption}
\label{ass:QC}Let $\radius\leq1$ be given.
\begin{enumerate}[label=\textnormal{\ass{QC\arabic*}},leftmargin=1.5cm]
\item \label{enu:Q:roots}$\Phi$ has $q$ roots in $\eigs_{\lu}^{\radius}$,
and all others in $\eigs_{\st}^{\radius}$.
\end{enumerate}
Let $\Lambda_{\lu}$ denote a real $(q\times q)$ matrix whose eigenvalues
correspond to the roots of $\Phi$ that are in $\eigs_{\lu}^{\radius}$,
and let $R_{\lu}$ and $L_{\lu}$ be $p\times q$ matrices that satisfy
\eqref{RL}--\eqref{irfdecomp}.
\begin{enumerate}[resume*]
\item \label{enu:Q:rank}$\rank R_{\lu}=\rank L_{\lu}=q$ and $\Lambda_{\lu}$
is diagonalisable.
\end{enumerate}
\end{assumption}
\enuref{Q:roots} is plainly the analogue of \enuref{J:roots}: whereas
we previously assumed $q$ roots at unity, we now allow for $q$ roots
in the vicinity of unity; indeed \assref{J} is a special case of
\assref{QC} with $\radius=1$ (\lemref{qcs}).\footnote{The construction of $\eigs_{\lu}^{\radius}$ requires the $q$ roots
to be on or inside the unit circle. Though the theory developed here
could also accommodate explosive roots -- simply by redefining $\eigs_{\lu}^{\radius}$
as $\{z\in\complex\mid\smlabs{z-1}\leq1-\radius\}$ -- we have deliberately
excluded such roots to be consistent with the greater empirical relevance
of stationary departures from unit roots for most applications.} By allowing for the possibility that $\radius<1$, we move from panel~(a)
of \figref{roots} to panel~(b). The requirement that $\Lambda_{\lu}$
be diagonalisable purposely rules out series that are integrated
of order two or higher (\citealp{dAut92}), since these are also excluded
under \assref{J} (which as noted above implies $\Lambda_{\lu}=I_{q}$).
For $\radius<1$ but `close' to unity, a model satisfying \assref{QC}
will thus inherit the main qualitative features of the cointegrated
VAR model: the high persistence of $\{y_{t}\}$, and the lesser persistence
of $r$ linear combinations of $\{y_{t}\}$, understood in terms of
\eqref{relative-irf} above.

Accordingly, the subspace $S_{r}$ spanned by the $r$ `least persistent'
linear combinations of $y_{t}$ retains an interpretation akin to
that of the cointegrating space. These two objects coincide exactly
in a VAR with unit roots (recall \enuref{coint:irf} above), but only
the former remains meaningfully interpretable when these roots are
allowed to be merely near to unity, as entertained by \assref{QC}.
That $S_{r}$ is always well defined under our assumptions is guaranteed
by the following, the proof of which appears in \appref{qcs-results}.
\begin{prop}
\label{prop:reptheorybody}Suppose \assref{DGP} and \assref{QC}
hold. Then $S_{r}=(\spn R_{\lu})^{\perp}$.
\end{prop}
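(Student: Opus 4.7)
The plan is to exploit the spectral separation built into \assref{QC}. Every $z\in\eigs_{\lu}^{\radius}$ satisfies $|z|\ge1-|1-z|\ge\radius$, while $|z|<\radius$ on $\eigs_{\st}^{\radius}$; consequently the eigenvalues of $\Lambda_{\lu}$ are bounded below in modulus by $\radius$ (so that $\Lambda_{\lu}$ is invertible), whereas the spectral radius $\rho_{\st}$ of $\Lambda_{\st}$ is strictly less than $\radius$. Passing this spectral gap through \eqref{eq:irfdecomp} will separate the decay rates of $b^{\trans}\irf_{s}$ for $b\in(\spn R_{\lu})^{\perp}$ from those for $c\notin(\spn R_{\lu})^{\perp}$.

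First I would verify that $(\spn R_{\lu})^{\perp}$ satisfies the defining property \eqref{eq:relative-irf} of $S_{r}$. For $b\in(\spn R_{\lu})^{\perp}$, $b^{\trans}R_{\lu}=0$, so only the stable block contributes; fixing $\varepsilon>0$ with $\rho_{\st}+\varepsilon<\radius$, Gelfand's formula yields $\smlnorm{b^{\trans}\irf_{s}}\le C_{1}(\rho_{\st}+\varepsilon)^{s}$. For $c\notin(\spn R_{\lu})^{\perp}$, the row $v^{\trans}\defeq c^{\trans}R_{\lu}$ is nonzero, and two ingredients deliver the matching lower bound. (i)~Diagonalisability of $\Lambda_{\lu}$ combined with the modulus-$\radius$ lower bound on its eigenvalues yields $\smlnorm{\Lambda_{\lu}^{-s}}\le\kappa(V)\radius^{-s}$ for the (possibly complex) eigenvector matrix $V$; applied to the identity $v^{\trans}=(v^{\trans}\Lambda_{\lu}^{s})\Lambda_{\lu}^{-s}$ this gives $\smlnorm{v^{\trans}\Lambda_{\lu}^{s}}\ge\kappa(V)^{-1}\radius^{s}\smlnorm{v}$. (ii)~$\rank L_{\lu}=q$ gives $\smlnorm{u^{\trans}L_{\lu}^{\trans}}\ge\sigma_{\min}(L_{\lu})\smlnorm{u}$ for every $u\in\reals^{q}$. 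Combining, $\smlnorm{c^{\trans}R_{\lu}\Lambda_{\lu}^{k-1+s}L_{\lu}^{\trans}}\ge C_{2}\radius^{s}$; subtracting the stable-block contribution (bounded by $C_{3}(\rho_{\st}+\varepsilon)^{s}$) leaves $\smlnorm{c^{\trans}\irf_{s}}\ge\tfrac{1}{2}C_{2}\radius^{s}$ for large $s$, whence $\smlnorm{b^{\trans}\irf_{s}}/\smlnorm{c^{\trans}\irf_{s}}=O(((\rho_{\st}+\varepsilon)/\radius)^{s})\to0$.

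To conclude $S_{r}=(\spn R_{\lu})^{\perp}$ I would add a uniqueness argument via dimension count: if some other $r$-dimensional $S_{r}'$ also satisfied \eqref{eq:relative-irf} but $S_{r}'\neq(\spn R_{\lu})^{\perp}$, their intersection would have dimension strictly less than $r$, so I could pick $b\in S_{r}'\setminus(\spn R_{\lu})^{\perp}$ and $c\in(\spn R_{\lu})^{\perp}\setminus S_{r}'$. The bounds just established then force $\smlnorm{b^{\trans}\irf_{s}}/\smlnorm{c^{\trans}\irf_{s}}\to\infty$, contradicting the defining property of $S_{r}'$ applied to this pair.

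The main obstacle is the lower bound $\smlnorm{v^{\trans}\Lambda_{\lu}^{s}}\ge\kappa(V)^{-1}\radius^{s}\smlnorm{v}$: since the eigenvalues of $\Lambda_{\lu}$ may be complex, $v^{\trans}\Lambda_{\lu}^{s}$ both rotates and scales, and one has to rule out pathological cancellation driving its norm below $\radius^{s}$ at certain horizons. The diagonalisability hypothesis in \assref{QC} is exactly what supplies a uniform-in-$s$ bound on $\smlnorm{\Lambda_{\lu}^{-s}}$ through the fixed condition number $\kappa(V)$; everything else in the argument reduces to the triangle inequality and elementary linear algebra.
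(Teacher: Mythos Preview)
Your argument is correct and follows essentially the same route as the paper's proof (via \lemref{qcs}(i)): decompose $b^{\trans}\irf_{s}$ as in \eqref{eq:irfdecomp}, bound the stable block by $o(\radius^{s})$ via Gelfand, and use diagonalisability of $\Lambda_{\lu}$ together with $\rank L_{\lu}=q$ to bound the $\lu$-block below by a constant times $\radius^{s}$ whenever $b^{\trans}R_{\lu}\neq0$. The only cosmetic difference is that the paper exploits the freedom in \lemref{GLR} to take $\Lambda_{\lu}$ in real block-diagonal (normal) form, so that $\lambda_{\min}(\Lambda_{\lu}^{\trans}\Lambda_{\lu})\ge\radius^{2}$ directly, whereas you carry the condition number $\kappa(V)$ of the eigenvector matrix; and the paper wraps existence and uniqueness into the single ``iff'' statement $b^{\trans}\irf_{s}/\radius^{s}\to0\iff b\perp\spn R_{\lu}$, which obviates your separate dimension-count argument.
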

We henceforth term the elements of $S_{r}$ the \emph{quasi-cointegrating
relationships}, and refer to $S_{r}$ itself as the \emph{quasi-cointegrating
space} (QCS), denoted 
\[
\qcs\defeq S_{r}=(\spn R_{\lu})^{\perp}.
\]
We continue to reserve the term \emph{cointegration} for the VAR with
$q$ exact unit roots. We also let $\beta\in\reals^{p\times r}$ denote
a matrix of rank $r$ whose columns span the QCS, and which therefore
has the property that $\beta^{\trans}R_{\lu}=0$.

There remains the question of how $\radius$ might be chosen in practice
as opposed to merely fixing it at unity. To build intuition on the
choice of $\rho$, we consider the reduced-form IRF \eqref{irfdecomp},
and the allied decomposition
\begin{equation}
y_{t}-\mu-\delta t=x_{t}=\Phi_{\lu}z_{\lu,t-1}+\Phi_{\st}z_{\st,t-1}+\err_{t}\label{eq:GJ-type-decomp}
\end{equation}
where $\Phi_{\lu}=R_{\lu}\Lambda_{\lu}^{k}$ and $\Phi_{\st}=R_{\st}\Lambda_{\st}^{k}$,
and the `common trend' $z_{\lu,t}\in\reals^{q}$ and `transitory'
$z_{\st,t}\in\reals^{kp-q}$ components follow\begin{subequations}\label{eq:zproc}
\begin{align}
z_{\lu,t} & =\Lambda_{\lu}z_{\lu,t-1}+\err_{\lu,t} & \err_{\lu,t} & \defeq L_{\lu}^{\trans}\err_{t}\label{eq:zLU}\\
z_{\st,t} & =\Lambda_{\st}z_{\st,t-1}+\err_{\st,t} & \err_{\st,t} & \defeq L_{\st}^{\trans}\err_{t}\label{eq:zST}
\end{align}
\end{subequations}under \assref{QC} (see \lemref{repasAR}). By
imposing a lower bound on the eigenvalues of $\Lambda_{\lu}$, the
value $\radius$ regulates the persistence of $z_{\lu,t}$. Equivalently,
via \eqref{irfdecomp}, $\radius$ is interpretable in terms of the
minimum half-life of the most persistent reduced-form shocks, $\err_{\lu,t}=L_{\lu}^{\trans}\err_{t}$,
that drive $y_{t}$, as being $\blw h\defeq-\log2/\log\radius$ periods.
As discussed in the next section, $L_{\lu}^{\trans}\err_{t}$ may
be given a structural interpretation, in terms of the subset of the
structural shocks $w_{t}$ that are identified (by the relevant macroeconomic
theory) as having highly persistent, and possibly permanent, effects.

In the extreme case where $\radius=1$, $z_{\lu,t}$ is an integrated
process and these shocks will have permanent effects, i.e.\ $\blw h=\infty$;
but in general reasonable finite choices for $\blw h$ will be available,
with smaller values of $\blw h$ (and hence of $\radius$) affording
greater robustness to departures from exact unit roots. This choice
will itself depend on the application at hand. For example, in a macroeconomic
context it would be appropriate to allow that the most persistent
shocks to $\{y_{t}\}$ may not have permanent effects, but still have
a half-life longer than the average duration of the business cycle:
with postwar US data of annual frequency, this corresponds to setting
$\blw h=8$ and thus $\radius=2^{-1/\blw h}=0.917$; or for quarterly
data, $\radius=0.979$.

\subsection{Implications for long-run identifying restrictions}

\label{subsec:lr-restrictions}

Up to this point, we have motivated quasi-cointegration in essentially
descriptive terms, as a means of extending the key time-series properties
of cointegrated systems to a wider domain. Of course, other characterisations
of `cointegration' or `long-run equilibria' in time series models
might be developed, and used to extend the concept in alternative
directions. From the point of view of empirical macroeconomics, however,
a particularly advantageous feature of quasi-cointegration is that
it is grounded in the SVAR, what is arguably the workhorse model in
this field (\citealp{KL17book}). Accordingly, as we shall now discuss,
quasi-cointegration both provides a framework for identifying structural
shocks via long-run restrictions, and a means of extracting testable
long-run predictions from macroeconomic theories, that does not require
one to take a stand on the presence or absence of exact unit roots
in the underlying SVAR -- a matter on which economic theory is largely
silent. Indeed, it does so more generally in the VARMA, or approximate
VAR, representations implied by linear state-space models, and is
thus relevant to a broad class of (linearised) structural macroeconomic
models.

\subsubsection{Structural impulse response functions}

A long-standing approach to the identification of structural IRFs
involves `long-run restrictions', which demarcate shocks according
to whether they are permitted to have permanent effects (e.g.\ \citealp{BlanchardQuah89};
\citealp{KPSW91AER}; \citealp{Gali99AER}; \citealp{CEV06NBER}).
These typically derive from an underlying theoretical model in which
one or more state variables -- such as total factor productivity
or the natural rate of interest -- are assumed to have a stochastic
trend. In equilibrium, these trends are imparted to some of the endogenous
variables, upon which the driving structural shocks must therefore
have permanent effects. When formulated in the setting of an SVAR,
$q$ stochastic trends manifest as $q$ unit roots and limiting impulse
response matrices \eqref{limiting-irf} of reduced rank, which identify
the relevant subset of the structural shocks as $L_{\lu}^{\trans}\err_{t}$.

Being expressed in terms of limiting impulse response matrices, these
restrictions are generally thought to require the presence of exact
unit roots to provide a viable approach to identification (see e.g.\ the
discussion in \citealp{KL17book}, Sec.~10.5.1). However, since these
restrictions are dual to the cointegrating relations -- the former
relates to the column span, the latter to the row span, of the long-run
IRF \eqref{limiting-irf} -- the former are just as amenable to being
extended beyond the setting of exact unit roots as is the latter.
As the examples below illustrate, if certain state variables are permitted
to, more generally, follow a highly persistent but not exactly integrated
autoregressive process, we obtain a SVAR(MA) process with $q$ roots
near unity (see also \citealp{campbell1994inspecting}, for a discussion
with $q=1$). The decomposition \eqref{irfdecomp}, which is generally
available under \assref{QC}, then provides a means of isolating the
driving structural shocks from those having comparatively transient
effects, with $L_{\lu}^{\trans}\err_{t}$ yielding $q$ linear combinations
of the former.
\begin{example}
\citet{Gali99AER} develops a stylised DSGE model of labour market
dynamics in the presence of a nominal rigidity, with two (i.i.d.\ and
mutually independent) structural shocks: one, $\eta_{t}$, to the
underlying technology process $\{Z_{t}\}$, which evolves as
\[
\log Z_{t}=\rho_{z}\log Z_{t-1}+\eta_{t}
\]
with $\rho_{z}=1$, i.e.\ as a random walk, and the other, $\xi_{t}$,
to the growth rate of the money supply. The model implies the following
VAR(1) representation for log productivity $a_{t}$ and hours $n_{t}$,
\begin{align}
x_{t}\defeq\begin{bmatrix}a_{t}\\
n_{t}
\end{bmatrix} & =c+\begin{bmatrix}\rho_{z} & \rho_{z}(1-\varphi)\\
0 & 0
\end{bmatrix}\begin{bmatrix}a_{t-1}\\
n_{t-1}
\end{bmatrix}+\varphi^{-1}\begin{bmatrix}\varphi-1 & \gamma(\varphi-1)+1\\
1 & -(1-\gamma)
\end{bmatrix}\begin{bmatrix}\xi_{t}\\
\eta_{t}
\end{bmatrix}\label{eq:GaliVAR}\\
 & \eqdef c+\Phi x_{t-1}+\err_{t}.\nonumber 
\end{align}
When $\rho_{z}=1$, only $\eta_{t}$ has a permanent effect on the
(log) level of productivity, which justifies an empirical strategy
of identifying the technology shock, in a bivariate SVAR of productivity
and hours, from the restriction that only it may have a permanent
effect on productivity. However, such a strategy is also viable when
$\rho_{z}$ is merely near unity, as can be seen by applying the decomposition
\eqref{irfdecomp} to the VAR \eqref{GaliVAR}, which for $h\geq1$
yields
\begin{equation}
\frac{\partial x_{t+h}}{\partial\err_{t}}=\Phi^{h}=\begin{bmatrix}1\\
0
\end{bmatrix}\rho_{z}^{h}\begin{bmatrix}1 & 1-\varphi\end{bmatrix}=\lambda_{\lu}^{h}r_{\lu}l_{\lu}^{\trans}\label{eq:gali-irf}
\end{equation}
with $\lambda_{\lu}=\rho_{z}$. In particular, irrespective of whether
$\rho_{z}=1$, $l_{\lu}^{\trans}$ recovers the technology shock,
since
\[
l_{\lu}^{\trans}\err_{t}=\varphi^{-1}\begin{bmatrix}1 & 1-\varphi\end{bmatrix}\begin{bmatrix}\varphi-1 & \gamma(\varphi-1)-1\\
1 & -(1-\gamma)
\end{bmatrix}\begin{bmatrix}\xi_{t}\\
\eta_{t}
\end{bmatrix}=\eta_{t}.\qedhere
\]
From \eqref{gali-irf}, we see that the implied quasi-cointegrating
relationship is $\beta=(0,1)^{\trans}\in(\spn r_{\lu})^{\perp}$,
i.e.\ that $\beta^{\trans}x_{t}=n_{t}$, consistent with the implication
of the model that technology shocks only have a long-lived effects
on productivity, and not on hours.
\end{example}
The preceding example yields a VAR(1) with a reduced rank autoregressive
matrix (with eigenvalues at zero and $\rho_{z}$). However, the same
points may be made more generally for linear structural models that
can be written in state-space form, which nests a wide range of structural
macroeconomic models, as follows. For simplicity, we consider a model
with a first-order state equation, but the conclusions carry over
straightforwardly to higher-order processes.
\begin{example}
\label{exa:state-space}Consider the state-space model\begin{subequations}\label{eq:state-space}
\begin{align}
x_{t} & =Ax_{t-1}+Bw_{t}\label{eq:state}\\
y_{t} & =Cx_{t-1}+Dw_{t}\label{eq:measure}
\end{align}
\end{subequations}in which each of $y_{t}$, $x_{t}$ and $w_{t}$
are $p$-dimensional (cf.\ \citealp{FRSW07AER}). The dynamics are
governed by the state equation \eqref{state}: if one or more of the
state variables $x_{t}$ are integrated, then $A$ will have (say)
$q$ unit eigenvalues and the long-run IRF for $x_{t}$, with respect
to $w_{t}$, will have rank $q$. Let us partition $w_{t}=(w_{1t}^{\trans},w_{2t}^{\trans})^{\trans}$,
where $w_{1t}$ is the $q$-dimensional subvector of shocks that have
permanent effects on $x_{t}$. In light of the preceding discussion,
it is not necessary to maintain that the persistence in the state
variables is generated by $q$ exact unit roots, but only that the
weaker requirement of \assref{QC} should hold, with $A$ having $q$
eigenvalues in $\eigs_{\lu}^{\radius}$. Regardless of the specific
values of these roots, by the analogue of decomposition \eqref{RL}--\eqref{irfdecomp}
for the state equation, we have
\[
\frac{\partial x_{t+s}}{\partial w_{t}}=R_{A,\lu}\Lambda_{A,\lu}^{k-1+s}L_{A,\lu}^{\trans}+R_{A,\st}\Lambda_{A,\st}^{k-1+s}L_{A,\st}^{\trans},
\]
and hence $L_{A,\lu}^{\trans}Bw_{t}=Mw_{1t}$, for some $M\in\reals^{q\times q}$
having full rank, since only the impact of these shocks decay at the
slower rate regulated by the eigenvalues of $\Lambda_{A,\lu}$.

Under this weaker assumption, the implied VAR(MA) representation of
the model yields the same long-run identifying restrictions as when
exact unit roots are present. Provided $C$ and $D$ are invertible,
\eqref{state-space} implies that
\begin{equation}
y_{t}=\Phi y_{t-1}+\err_{t}-\Psi\err_{t-1}\label{eq:VARMA}
\end{equation}
where $\Phi=CAC^{-1}$, $\Psi=\Phi-CBD^{-1}$, and $\err_{t}\defeq Dw_{t}$
are the reduced-form shocks. Because $\Phi$ and $A$ are similar,
the characteristic roots in the state equation \eqref{state} coincide
exactly with those in \eqref{VARMA}; in particular, both systems
are characterised by $q$ roots in $\eigs_{\lu}^{\radius}$, and $L_{\lu}=(C^{-1})^{\trans}L_{A,\lu}$.
Because of the MA component, the reduced-form IRF takes the modified
form $\frac{\partial y_{t+h}}{\partial\err_{t}}=\Phi^{h-1}(\Phi-\Psi)$,
and it is no longer the case that $L_{\lu}^{\trans}\err_{t}$ recovers
the shocks $w_{1t}$ driving the common persistent components; instead,
we have
\begin{equation}
L_{\lu}^{\trans}(\Phi-\Psi)\err_{t}=L_{A,\lu}^{\trans}C^{-1}[CBD^{-1}]Dw_{t}=L_{A,\lu}^{\trans}Bw_{t}=Mw_{1t},\label{eq:MAident}
\end{equation}
where $L_{\lu}^{\trans}(\Phi-\Psi)$ depends only on the reduced-form
parameters.

In practice, \eqref{VARMA} is rarely estimated; the usual approach
is to approximate by a finite-order VAR, truncating the l.h.s.\ of
the representation
\begin{equation}
\sum_{i=0}^{\infty}\Psi^{i}(y_{t-i}-\Phi y_{t-1-i})=\err_{t}\label{eq:VARinfty}
\end{equation}
at some finite $k$.\footnote{If $\Psi$, or equivalently $A-BD^{-1}C$, is a nilpotent matrix,
then there exists an exact finite order VAR representation for the
system, of some order $k^{\ast}$, and the following claims -- in
particular \eqref{approxL} -- hold for any $k\geq k^{\ast}$, rather
than merely in the limit (cf.\ \citealp{Rav07JME}, Cor.~2.2).} Suppose that the eigenvalues of $\Psi$ lie in $\eigs_{\st}^{\radius}$,
so that these may be distinguished from the $q$ dominant eigenvalues
of $\Phi$. Letting $\Psi_{k-1}(\lambda)\defeq I_{p}\lambda^{k-1}+\Psi\lambda^{k-2}+\cdots+\Psi^{k-1}$,
the truncated VAR($k$) has characteristic polynomial $\Gamma(\lambda)\defeq\Psi_{k}(\lambda)(I_{p}\lambda-\Phi)$,
whose roots are the eigenvalues of $\Phi$ (and therefore of $A$),
and otherwise complex rotations of the eigenvalues of $\Psi$. Then
the truncated VAR satisfies \assref{QC}, and if we apply the decomposition
\eqref{irfdecomp} to the truncated VAR, then
\begin{equation}
L_{k,\lu}^{\trans}\err_{t}\goesto L_{\lu}^{\trans}(\Phi-\Psi)\err_{t}=Mw_{1t}\label{eq:approxL}
\end{equation}
as $k\goesto\infty$, where the equality holds by \eqref{MAident},
and thus the correct shocks are recovered in the limit, as the order
of the VAR approximation grows. Remarkably, $R_{k,\lu}=R_{\lu}$ for
all $k$, so the quasi-cointegrating relations are carried \emph{exactly}
by the truncated VAR.
\end{example}

\subsubsection{Long-run predictions of macroeconomic theories}

\label{subsec:long-run-pred}

A second respect in which quasi-contegration is empirically useful
is in testing what might be termed the long-run predictions of economic
theories, which in an (S)VAR with exact unit roots would be embodied
in the cointegrating relations between the series. In a range of structural
models, the dependence of the endogenous variables on a common set
of state variables, combined with an elevated degree of persistence
in the mechanisms generating some of those variables, manifests as
a collection of long-run equilibrium relationships between those variables.
If the theory underlying the structural model makes definite predictions
about the coefficients parametrising these relationships, this provides
a means of testing the theory, and cointegration analysis has been
widely applied to this end (such as to testing business cycle models,
theories of purchasing power parity, and the expectations theory of
the term structure). Quasi-cointegration provides a means of continuing
to conduct tests of this kind without having to maintain the auxiliary
assumption of exact unit roots, by providing a means of expressing
these long-run equilibrium relationships in a way that is robust to
departures from this assumption. The following example illustrates
this in detail, and we shall return to it in in our empirical application
in \secref{empirical}.
\begin{example}
In its simplest form, the expectations theory of the term structure
holds that the yield on a (zero-coupon) bond should be equal to the
sum of the expected future yields on a shorter-dated bond (see e.g.\ \citealp{LS04book},
Ch.~13, for a textbook derivation from a dynamic asset pricing model
under risk neutrality). If the reduced-form process followed by the
yields on these two bonds follows a bivariate VAR with a single unit
root, then it has long been known that a major implication of this
theory is that the the (annualised) rate of return on these bonds
should be cointegrated, with $\beta=(1,-1)^{\trans}$ (see e.g.\ \citealp{10.2307/1833129};
\citealp{CARRIERO2006339}). However, as we shall now show, that implied
one-for-one long-run equilibrium relationship is contingent on the
assumption of an exact unit root -- a contingency also noted previously,
albeit in a simplified setting in which one of the yields follows
a reduced-form AR(1) process, by \citet{MW13JoE}.

Let $\ret_{i,t}$ denote the (annualised) yield on a zero-coupon bond
with $i$ years to maturity (in year $t$), and suppose that we observe
data on both a 1-year and an $m$-year bond, generated by a VAR($k$)
that satisfies \assref{QC} for some $\radius\leq1$, with $\delta=0$
and reduced form errors $\{\err_{t}\}$. The loglinearised form the
expectations theory implies that
\begin{equation}
\ret_{m,t}=\frac{1}{m}\sum_{i=0}^{m-1}\expect_{t}\ret_{1,t+i}+\xi_{t},\label{eq:exptheory}
\end{equation}
where $\{\xi_{t}\}$ captures the term premium. To keep \eqref{exptheory}
consistent with a VAR for $\ret_{t}\defeq(\ret_{m,t},\ret_{1,t})$,
$\{\xi_{t}\}$ must be a linear process of the form $\xi_{t}=\sum_{i=0}^{\infty}\psi_{i}\err_{t-i}$.
In a setting with exact unit roots ($\radius=1$), one would assume
that $\xi_{t}\sim I(0)$. To allow the theory to retain predictive
content in our more general setting ($\radius\leq1$), we make the
analogous assumption that $\xi_{t}$ is strictly less persistent than
$\ret_{t}$ itself, in the sense that $\radius^{-h}\psi_{h}\goesto0$
as $h\goesto\infty$. 

Recognising that $\ret_{1,t+i}=\sum_{k=0}^{i-1}\lambda^{k}\Delta_{\lambda}\ret_{1,t+i-k}+\lambda^{i}\ret_{1,t}$,
we may rewrite the preceding for any $\lambda\in[0,1]$ as
\begin{equation}
\ret_{m,t}-a_{m}(\lambda)\ret_{1,t}\defeq\ret_{m,t}-\frac{1}{m}\frac{1-\lambda^{m}}{1-\lambda}\ret_{1,t}=\frac{1}{m}\sum_{i=0}^{m-1}\sum_{j=0}^{i-1}\lambda^{j}\expect_{t}\Delta_{\lambda}\ret_{1,t+i-j}+\xi_{t}.\label{eq:am}
\end{equation}
In particular, if we take $\lambda=\lambda_{\lu}$ (where $\lambda_{\lu}$
corresponds to the root nearest to unity in the bivariate VAR representation
of the yields), then by \lemref{quasi-VECM} in \appref{qcs-results},
\[
\frac{1}{m}\sum_{i=0}^{m-1}\sum_{j=0}^{i-1}\lambda_{\lu}^{j}\expect_{t}\Delta_{\lambda_{\lu}}\ret_{1,t+i-j}=\gamma_{0}\beta^{\trans}\ret_{t}+\sum_{i=0}^{p-1}\gamma_{i}\Delta_{\lambda_{\lu}}\ret_{t-i}\eqdef\varsigma_{t}
\]
for some $\{\gamma_{i}\}_{i=0}^{p-1}$ that depends on $m$, $\lambda_{\lu}$
and the VAR coefficients. By that same result, $(\beta^{\trans}\ret_{t},\Delta_{\lambda_{\lu}}\ret_{t})$
also follows a VAR, whose roots lie in $\eigs_{\st}^{\radius}$. Thus
$\radius^{-h}\frac{\partial\varsigma_{t+h}}{\partial\err_{t}}\goesto0$
as $h\goesto\infty$, and hence the same is true of $\ret_{m,t}-a_{m}(\lambda_{\lu})\ret_{1,t}$.
In this manner, one of the principal implications of the expectations
theory may be generalised. Rather than merely implying that yields
$\ret_{t}\defeq(\ret_{m,t},\ret_{1,t})$ should be cointegrated (with
unit coefficient), in a VAR with an exact unit root, the theory more
generally entails that these are quasi-cointegrated, with
\[
\beta(\lambda_{\lu})^{\trans}=[\begin{matrix}1 & -a_{m}(\lambda_{\lu})\end{matrix}].\qedhere
\]
\end{example}

\subsection{Connections to the literature}

\label{subsec:literature}

As noted in the introduction, there have been relatively few attempts
to address the problem identified by \citet{Ell98Ecta}, most notably
\citet{Wri00JBES}, \citet{MP09}, \citet{MW13JoE}, \citet{FJ17Ect},
and \citet{HV23JBES}. Having now outlined our own approach to this
problem, we may briefly explain how our work is situated relative
to those contributions.

Insofar as they also consider a VAR model with some characteristic
roots near unity, \citet{FJ17Ect} relates closely to the present
study. Their setting is a VAR model with one lag, written in error
correction form as
\begin{equation}
\Delta x_{t}=(\alpha\beta^{\trans}+\alpha_{1}\Gamma\beta_{1}^{\trans})x_{t-1}+\err_{t}\eqdef\Pi x_{t-1}+\err_{t},\label{eq:FJvar}
\end{equation}
where $\alpha,\beta\in\reals^{p\times r}$ and $\alpha_{1},\beta_{1}\in\reals^{p\times q}$
have full column rank, and $\Gamma\in\reals^{q\times q}$. When $\Gamma=0$,
the model specialises exactly to the CVAR model of \subsecref{coint-ur}
with $q$ unit roots and $\cs=\spn\beta$. If some elements of $\Gamma$
are non-zero, the cointegrated model becomes one with some roots near
but not equal to unity, and $\Pi$ need no longer be of reduced rank.
As the authors acknowledge, if each of $\alpha_{1}$, $\beta_{1}$
and $\Gamma$ are freely varying, then $\beta$ is not identified.
They accordingly treat $\alpha_{1}$ and $\beta_{1}$ as known, which
restores identification and facilitates likelihood-based inference
on each of $\alpha$, $\beta$ and $\Gamma$. But while a priori knowledge
of $\alpha_{1}$ and $\beta_{1}$ may indeed be available in certain
situations, this seems unlikely to be the case in general; whereas
in our approach, the criterion \eqref{relative-irf} ensures that
$\beta$ remains identified even as the dominant roots depart from
unity. Moreover, with $\Gamma$ fixed (and nonzero) in this model,
it is unclear how $\beta$ in \eqref{FJvar} could be interpreted
in terms of long-run relationships between the elements of $x_{t}$.

\citet{MP09} consider \citeauthor{Ell98Ecta}'s \citeyearpar{Ell98Ecta}
problem in the context of the triangular model\begin{subequations}\label{eq:triangular}
\begin{align}
x_{1t} & =Ax_{2t}+u_{1t}\label{eq:triangular-coint}\\
x_{2t} & =R_{n}x_{2,t-1}+u_{2t}
\end{align}
\end{subequations}where $u_{t}=(u_{1t}^{\trans},u_{2t}^{\trans})^{\trans}$
is a weakly dependent linear process. When $R_{n}=I_{q}$, this encompasses
the $I(0)/I(1)$ CVAR model with $q$ unit roots, but allows for a
more general semiparametric treatment of the model's short-run dynamics.
If $R_{n}$ instead merely drifts towards $I_{q}$ (possibly at a
slower rate than $n^{-1}$) as $n\goesto\infty$, then the authors
show that it is still possible to obtain an asymptotically mixed normal
estimate of $A$, by using instruments that are constructed by filtering
$x_{2t}$ (what they term the `IVX' estimator of $A$). However,
the greater generality afforded by the triangular model comes at the
price that $R_{n}\goesto I_{q}$ is now necessary for identification
of $A$; if on the other hand $R_{n}$ were fixed with eigenvalues
strictly less than unity, then all linear combinations of $x_{t}$
would be weakly dependent, leaving $A$ unidentified. This is true
generally of approaches that rely on the triangular form, because
of its agnosticism about the dynamics of $u_{t}$; thus the same point
may be made in the context of \citet{HV23JBES}, who (when $R_{n}=I_{q}+n^{-1}C$)
consider an augmented regression estimator of \eqref{triangular-coint}
using low frequency transforms of the original data, and a Bonferroni-based
approach to correct for the effect of $C$ on its limiting distribution.

Finally, \citet{MW13JoE} consider a very general setting in which
the `common trends' in $x_{t}$ are permitted to belong to a broad
family of processes. A consequence of this generality is that the
authors conceptualise `cointegration' in terms different from quasi-cointegration,
and the two definitions do not always agree. Essentially, \citeauthor{MW13JoE}
define $x_{t}$ to be `cointegrated' with cointegrating relations
$\beta\in\reals^{p\times r}$, if $n^{-1/2}\sum_{t=1}^{\smlfloor{nr}}\beta^{\trans}x_{t}$
converges weakly to a Brownian motion, while the common trends $n^{-1/2}\beta_{\perp}^{\trans}x_{\smlfloor{nr}}$
converge weakly to a some cadlag process (where $\beta_{\perp}\in\reals^{p\times q}$
has $\rank\beta_{\perp}=q$ and $\beta_{\perp}^{\trans}\beta=0$).
In the context of our VAR model, where \assref{QC} holds for some
$\radius<1$, $n^{-1/2}\sum_{t=1}^{\smlfloor{nr}}x_{t}$ converges
weakly to a Brownian motion if all the roots are strictly inside the
unit circle; so in such a case there is no `cointegration' in the
foregoing sense, even though quasi-cointegrating relationships would
be defined. (On the other hand, if the largest $q$ roots of $\Phi$
are localised to unity at rate $n^{-1}$, though not more slowly,
then their `cointegrating' vectors would coincide with our quasi-cointegrating
vectors.) Regarding inference, the authors construct a confidence
set for $\beta$ by inverting a stationarity test for $\beta^{\trans}x_{t}$,
extending a idea originally due to \citet{Wri00JBES}. For a comparison
of their tests with ours, in terms of size and power, see the simulations
in \secref{simulations} below: these indicate that the price paid,
in terms of power, for robustness to a broader class of trend generating
mechanisms (than are permitted by the VAR), may be considerable.

Thus in relation to these papers, one of the major distinguishing
contributions of our work is to provide a means of identifying long-run
equilibrium relationships that is well-defined for a \emph{fixed}
parametrisation of the underlying (S)VAR; i.e.\ we do not rely on
that VAR drifting towards a model with exact unit roots (as $n\goesto\infty$).
Not only is our identifying criterion \eqref{relative-irf} readily
interpretable in terms of the relative persistence of either structural
or reduced-form impulse responses, but it also maintains the duality
that exists, both with and without exact unit roots, between the identification
of long-run equilibrium relationships, and of those structural shocks
whose common persistent effects give rise to those relationships.
As we will discuss subsequently, an added benefit is that it reduces
\citeauthor{Ell98Ecta}'s \citeyearpar{Ell98Ecta} problem to one
that is asymptotically equivalent to inference in a multivariate predictive
regression, a canonical problem that has given rise to a rich literature.

\section{Estimation and inference}

\label{sec:estimation}

\subsection{Formulation of the problem}

\subsubsection{Model likelihood}

\label{subsec:likelihood}

As with the CS in a cointegrated VAR model, inference on the QCS in
our more general setting will be based on the normal model likelihood
(or quasi-likelihood, if $\err_{t}$ is not in fact normally distributed).
Recall that the model \eqref{dgp} may be rendered as
\begin{equation}
y_{t}=m+dt+\sum_{i=1}^{k}\Phi_{i}y_{t-i}+\err_{t}.\label{eq:redform}
\end{equation}
To facilitate the exposition, we focus on the case where the intercept
and trend parameters ($m,d$) are unrestricted in \eqref{redform},
while maintaining that the data is generated under \eqref{dgp}, so
as to exclude the possibility of a quadratic trend in $y_{t}$. (For
a discussion of alternative potential treatments of the deterministic
terms, see \subsecref{deterministics} below.)

The loglikelihood with $(m,d)$ concentrated out -- or equivalently,
expressed in terms of a maximal invariant for transformations of the
form $\{y_{t}\}\elmap\{m+dt+y_{t}\}$ -- may be written as
\[
\like_{n}(\PHI,\Sigma)\defeq-\frac{n}{2}\log(2\pi\det\Sigma)-\min_{m,d}\frac{1}{2}\sum_{t=1}^{n}\norm{y_{t}-m-dt-\sum_{i=1}^{k}\Phi_{i}y_{t-i}}_{\Sigma^{-1}}^{2},
\]
where $\smlnorm x_{W}^{2}\defeq x^{\trans}Wx$ for $x\in\reals^{p}$
and $W\in\reals^{p\times p}$ positive semidefinite. The QCS depends
only on $\PHI$, and the main (asymptotic) results of this paper are
not sensitive to the choice of estimator for $\Sigma$, provided that
it is consistent. To simplify our arguments, we shall therefore generally
assume that the unrestricted ML estimator $\hat{\Sigma}_{n}$, i.e.\ the
OLS variance estimator, is used. Henceforth, let $\likens_{n}(\PHI)\defeq\like_{n}(\PHI,\hat{\Sigma}_{n})$;
for convenience we shall refer to maximisers of $\likens_{n}$ as
`maximum likelihood estimators'.

\subsubsection{QCS as a functional of the VAR coefficients}

\label{subsec:functionals}

Under \assref{QC}, the $\qcs$ is well-defined and has dimension
$q$. Since any basis $\beta\in\reals^{p\times q}$ for the QCS is
only identified up to its column space, and has rank $q$, it is convenient
to maintain the normalisation
\begin{equation}
\beta^{\trans}=[\begin{matrix}I_{r} & -A\end{matrix}],\label{eq:qcsnorm}
\end{equation}
so that inference on the QCS reduces to inference on the elements
of the matrix $A\in\reals^{r\times q}$. \eqref{qcsnorm} is not restrictive
-- i.e.\ it is indeed merely a normalisation of $\beta$ -- if
the QCS does not contain any nonzero vectors whose first $r$ elements
are all zero, as will be the case if the elements of $y_{t}$ are
ordered appropriately. Since $R_{\lu}$ has rank $q$ and $\beta^{\trans}R_{\lu}=0$,
\eqref{qcsnorm} is equivalent to
\begin{equation}
R_{\lu}=\begin{bmatrix}A\\
I_{q}
\end{bmatrix}.\label{eq:Rlurnom}
\end{equation}

Since the $q$ roots in $\eigs_{\lu}^{\radius}$ are separated from
the $kp-q$ roots in $\eigs_{\st}^{\radius}$, the column space of
$R_{\lu}$ depends smoothly on the VAR coefficients. To express this
formally, let $\lambda_{i}(\PHI)$ denote the $i$th root of the characteristic
polynomial associated to the VAR with coefficients $\PHI$, when these
are placed in descending order of modulus; and $G^{\trans}\defeq[0_{q\times r},I_{q}]$.
Define $\set P\subset\reals^{p\times kp}$\label{subsec:Psetdef}
to be the set of VAR coefficients such that: (i) $\smlabs{\lambda_{q+1}(\PHI)}<\smlabs{\lambda_{q}(\PHI)}$;
(ii) there exist $R_{\lu}\in\reals^{p\times q}$ and $\Lambda_{\lu}\in\reals^{q\times q}$
such that the eigenvalues of $\Lambda_{\lu}$ are $\{\lambda_{1}(\PHI),\ldots,\lambda_{q}(\PHI)\}$,
\begin{equation}
R_{\lu}\Lambda_{\lu}^{k}-\sum_{i=1}^{k}\Phi_{i}R_{\lu}\Lambda_{\lu}^{k-i}=0;\label{eq:PsetRlu}
\end{equation}
and (iii) $\rank\{G^{\trans}R_{\lu}\}=q$. Then $\set P$ is open,
and since $G^{\trans}R_{\lu}$ has full rank, we may choose $(R_{\lu},\Lambda_{\lu})$
to be consistent with the normalisation \eqref{Rlurnom}. The conditions
defining $\set P$, together with \eqref{Rlurnom}, implicitly define
smooth (i.e.\ infinitely differentiable) maps $R_{\lu}(\PHI)$, $A(\PHI)$,
and $\Lambda_{\lu}(\PHI)$ on $\set P$ (\lemref{implicit-maps}).
In light of this, inference on the QCS may be rephrased in terms of
inference on parameters $A=A(\PHI)$ defined by a smooth transformation
of the VAR coefficients.

\subsubsection{Parameter space for the near-unit roots}

\label{subsec:LambdaLUparspc}

In a model with exact unit roots, efficient estimation of $A$ requires
the model to be estimated under some of the restrictions implied by
\assref{J}: for example, ML estimation of $A$ proceeds under the
assumption that $\rank\Phi(1)=r$, as per \enuref{J:rank}. In a setting
with only $I(1)$ series, this is equivalent to maintaining $q=p-r$
roots at real unity. Transposed to the present setting, with \assref{QC}
taking the place of \assref{J}, we now require the model be estimated
with $q$ roots lying in $\eigs_{\lu}^{\radius}$, as per \enuref{Q:roots}.
This entails both a choice of $\radius$, and the specification of
an appropriate parameter space $\set L\subset\reals^{q\times q}$
for $\Lambda_{\lu}$, such that it is diagonalisable (as per \enuref{Q:rank}),
with eigenvalues lying in $\eigs_{\lu}^{\radius}$.

When $q=1$, $\Lambda_{\lu}$ is scalar, so $\set L=[\radius,1]$
and only a lower bound $\radius$ on the largest root of $\Phi$ needs
to be specified. A discussion of the considerations that might inform
$\radius$ were given at the end of \subsecref{qcs} (see also the
application in \secref{empirical} below). When $q\geq2$, $\set L$
is instead a set of matrices with eigenvalues lying in the interval
$[\radius,1]$. While \enuref{Q:rank} might suggest taking $\set L$
to be the subset $\Ld$ of real, diagonalisable $q\times q$ matrices,
a potential difficulty with $\Ld$ is that is that some non-diagonalisable
matrices are in its closure, as can be seen e.g.\ by taking the limit
of $[\begin{smallmatrix}\lambda+\epsilon & 1\\
0 & \lambda-\epsilon
\end{smallmatrix}]$ as $\epsilon\goesto0$. This would effectively permit departures
from the $I(0)/I(1)$ cointegrated VAR model in the direction of a
model with some $I(2)$ components, something that we wish to avoid
here. The set of either normal ($\Ln$) or symmetric ($\Ls$) matrices
with eigenvalues in $[\radius,1]$ would thus be more appropriate
choices for $\set L$, since each give a closed subset of the set
of diagonalisable matrices, with the principal difference between
the two being that the former allows for complex eigenvalues.

\subsubsection{Local-to-unity asymptotics}

\label{subsec:loc-to-unity}

The $\qcs$, and the associated coefficient matrix $A$, remain identified
so long as the roots of $\Phi$ separate as prescribed by \assref{QC}.
In particular, there is no requirement that the roots in $\eigs_{\lu}^{\radius}$
should drift towards unity at any rate, as $n\goesto\infty$. However,
as will become evident below, the proximity of those $q$ largest
roots to unity affects the distributions of estimators and test statistics,
even in very large samples. We therefore need to work with a sequence
of models that preserves this dependence in the limit, and avoids
the discontinuities in the asymptotics that would otherwise arise
at exact unit roots. We shall accordingly study the large-sample behaviour
of the likelihood, and of derived estimators and test statistics,
under\setcounter{assumption}{8504}
\begin{assumption}
\label{ass:LOC} $\{(y_{t},x_{t})\}_{t=1}^{n}$ is generated per \assref{DGP}
with $\PHI=\PHI_{n}$, for $\{\PHI_{n}\}\subset\set P$ such that:
\begin{enumerate}[label=\textnormal{\ass{LOC\arabic*}},leftmargin=1.5cm]
\item for some $C\in\reals^{q\times q}$ with non-positive eigenvalues,
\begin{equation}
\Lambda_{\lu}(\PHI_{n})=\Lambda_{n,\lu}\defeq I_{q}+n^{-1}C;\label{eq:loctounity}
\end{equation}
\item $R_{\lu}(\PHI_{n})=[\begin{smallmatrix}A\\
I_{q}
\end{smallmatrix}]$ for some $A\in\reals^{r\times q}$ ; and
\end{enumerate}
letting $R_{n,\st}$, $\Lambda_{n,\st}$ and $L_{n}=[L_{n,\lu},L_{n,\st}]$
be such that \eqref{RL}--\eqref{irfdecomp} hold for each $n$:
\begin{enumerate}[resume, resume*]
\item \label{enu:LOC:stat}$R_{n,\st}=R_{\st}$ and $\Lambda_{n,\st}=\Lambda_{\st}$
are fixed, and the eigenvalues of the latter lie strictly inside the
complex unit circle; and
\item $L_{\lu}\defeq\lim_{n\goesto\infty}L_{n,\lu}$ has full column rank.
\end{enumerate}
\end{assumption}
Under \assref{LOC}, we may choose a $\radius<1$ such that \assref{QC}
holds for all $n$ sufficiently large; \assref{LOC} may thus be regarded
as capturing (sequences of) VAR models that are both in the immediate
vicinity of a cointegrated VAR with unit roots, while also satisfying
those regularity conditions that ensure the QCS is well-defined. The
localisation \eqref{loctounity} moreover entails a sharper delineation
between the common trend and transitory components appearing in the
implied decomposition of $y_{t}$ in \eqref{GJ-type-decomp}, since
we now have the joint weak convergences
\begin{align}
n^{-1/2}\sum_{t=1}^{n}\err_{t} & \wkc E(r) & n^{-1/2}z_{\lu,\smlfloor{nr}}\wkc\int_{0}^{r}\e^{C(r-s)}L_{\lu}^{\trans}\deriv E(s) & \eqdef Z_{C}(r),\label{eq:Zproc}
\end{align}
on $D[0,1]$, for $E$ a $p$-dimensional Brownian motion with variance
$\Sigma$; and thus
\begin{equation}
n^{-1/2}x_{\smlfloor{nr}}=\Phi_{n,\lu}n^{-1/2}z_{\lu,\smlfloor{nr}}+o_{p}(1)=_{d}\Phi_{n,\lu}Z_{C}(r)+o_{p}(1)\label{eq:x-nearint}
\end{equation}
so that $z_{\lu,t}$ and $x_{t}$ are nearly integrated. Although
$\Phi_{n,\lu}=R_{\lu}\Lambda_{n,\lu}^{k}$ depends on $n$, its column
space does not, and
\begin{equation}
\beta^{\trans}x_{t}=\beta^{\trans}\Phi_{\st}z_{\st,t-1}+\beta^{\trans}\err_{t}\sim I(0).\label{eq:betax_t}
\end{equation}
Thus, analogously to the GJRT, \eqref{GJ-type-decomp} decomposes
$x_{t}$, and therefore also $y_{t}$ (upon detrending), into the
sum of a nearly integrated component and an $I(0)$ component; the
quasi-cointegrating relations are precisely those that eliminate the
nearly integrated common trends from $y_{t}$.

\subsection{Asymptotics of the loglikelihood ratio process}

\label{subsec:likelihoodasymp}

We first consider the asymptotic behaviour of the loglikelihood ratio
process, under a local reparametrisation of the VAR given by
\begin{align}
\vall & \defeq\vall_{n}(\PHI)\defeq n\vek\begin{bmatrix}A(\PHI)-A(\PHI_{n})\\
\Lambda_{\lu}(\PHI)-\Lambda_{\lu}(\PHI_{n})
\end{bmatrix}, & f\defeq f_{n}(\PHI) & \defeq n^{1/2}\vek\{(\PHI-\PHI_{n})\R_{n,\st}\};\label{eq:reparm-thm}
\end{align}
where $\{\PHI_{n}\}$ is as in \assref{LOC}, and $\R_{n,\st}$ is
a $kp\times(kp-q)$ matrix defined in \appref{RnSTdef}. \eqref{reparm-thm}
effectively isolates the signal from the nearly integrated and $I(0)$
components of $x_{t}$, as given in \eqref{x-nearint}--\eqref{betax_t},
with only the former carrying (asymptotically) information relevant
to the estimation of $A$ and $\Lambda_{\lu}$. These parameters will
thus enjoy elevated rate of convergence, relative to the other components
of the VAR, just as is familiar from the VAR with exact unit roots.
In connection with \eqref{reparm-thm}, let
\begin{equation}
K\defeq\begin{bmatrix}\beta^{\trans}R_{\st}(I-\Lambda_{\st})^{-1}L_{\st}^{\trans}\\
L_{\lu}^{\trans}
\end{bmatrix}\eqdef\begin{bmatrix}{\cal J}\\
L_{\lu}^{\trans}
\end{bmatrix}.\label{eq:Kdef}
\end{equation}
denote a component of limiting Jacobian matrix for $\vall_{n}(\PHI)$
at $\PHI=\PHI_{n}$.

The asymptotics reveal the close correspondence that exists between
the problem of inference on $(A,\Lambda_{\lu})$ in a quasi-cointegrated
VAR, and of inference in a predictive regression model with highly
persistent regressors. This is of interest because the latter is a
canonical problem that has received considerable attention in the
literature, where a variety of inferential procedures with good size
and power properties have been developed. A major implication of our
next result is that such procedures should enjoy similar properties,
once transposed to our setting. While we do not develop this possibility
further here, the good properties found for the \citet{EMW15Ecta}
procedure in \secref{simulations} are of little surprise, in view
of the similar performance enjoyed by that method in a predictive
regression.

By a \emph{predictive regression model}, we mean a model of the form\begin{subequations}\label{eq:predreg-body}
\begin{align}
y_{\pr,t} & =m_{y}+d_{y}t+A_{\pr}z_{\pr,t-1}+\xi_{y,t}\label{eq:y-pr}\\
z_{\pr,t} & =m_{z}+d_{z}t+\Lambda_{\pr}z_{\pr,t-1}+\xi_{z,t}\label{eq:z-pr}
\end{align}
\end{subequations}where $\{y_{\pr,t}\}$ and $\{z_{\pr,t}\}$ respectively
take values in $\reals^{r}$ and $\reals^{q}$. To complete the specification
of the model, suppose the following:

\setcounter{assumption}{433}
\begin{assumption}
\label{ass:PR}$\{y_{\pr,t}\}_{t=1}^{n}$ and $\{z_{\pr,t}\}_{t=1}^{n}$
are generated under \eqref{predreg-body}, where
\begin{enumerate}[label=\textnormal{\ass{PR\arabic*}},leftmargin=1.5cm]
\item $A_{\pr}\in\reals^{r\times q}$ and $\Lambda_{\pr}=I_{q}+n^{-1}C$,
for $C$ as in \assref{LOC};
\item $z_{\pr,0}=0$, $m_{y}=d_{y}=0$ and $m_{z}=d_{z}=0$;
\item \label{enu:PR:normal}$\xi_{t}\defeq(\xi_{y,t}^{\trans},\xi_{z,t}^{\trans})^{\trans}\distiid N[0,\Omega]$.
\end{enumerate}
\end{assumption}
To simplify the discussion, the covariance matrix $\Omega$ is assumed
to be known. Under \enuref{PR:normal}, the model loglikelihood is
\[
\like_{n}^{\pr}(A,\Lambda)\defeq-\frac{n}{2}\log(2\pi\det\Omega)-\min_{m,d}\frac{1}{2}\sum_{t=1}^{n}\norm{\begin{bmatrix}y_{\pr,t}\\
z_{\pr,t}
\end{bmatrix}-\begin{bmatrix}m_{y}\\
m_{z}
\end{bmatrix}-\begin{bmatrix}d_{y}\\
d_{t}
\end{bmatrix}t-\begin{bmatrix}A\\
\Lambda
\end{bmatrix}z_{\pr,t-1}}_{\Omega^{-1}}^{2}.
\]
Let $\Zdet_{C}(r)$ denote the residual of an $L^{2}[0,1]$ projection
of each sample path of $Z_{C}$ in \eqref{Zproc} onto a constant
and linear trend. The proof of the next result, and of all other theorems,
appears in \appref{theoremproofs}.
\begin{thm}
\label{thm:emw}Suppose that:
\begin{enumerate}
\item $\{y_{t}\}$ is generated under \assref{LOC}. Let $\like_{n}(\vall,f)\defeq\like_{n}(\PHI,\Sigma)$,
where $\vall=\vall_{n}(\PHI)$ and $f=f_{n}(\PHI)$ as in \eqref{reparm-thm},
and let $\likens_{n}(\vall)\defeq\max_{\vall(\PHI)=\vall}\likens_{n}(\PHI)$.
\item $\{(y_{\pr,t},z_{\pr,t})\}$ is generated under \assref{PR}, with
$\Omega=K\Sigma K^{\trans}$. Let $\like_{n}^{\pr}(\vall)\defeq\like_{n}^{\pr}(A,\Lambda)$,
where
\[
\vall=n\vek\begin{bmatrix}A-A_{\pr}\\
\Lambda-(I_{q}+n^{-1}C)
\end{bmatrix}.
\]
\end{enumerate}
Then the finite-dimensional distributions of each of $\{\like_{n}(\vall,f)-\like_{n}(0,f)\}$,
$\{\likens_{n}(\vall)-\likens_{n}(0)\}$ and $\{\like_{n}^{\pr}(\vall)-\like_{n}^{\pr}(0)\}$
converge to those of
\begin{equation}
S_{\vall}^{\trans}\vall-\tfrac{1}{2}\vall^{\trans}H_{\vall}\vall\label{eq:lim-exp}
\end{equation}
where, for $W$ a $p$-dimensional standard Brownian motion on $[0,1]$,
\begin{align}
S_{\vall} & \defeq\int_{0}^{1}[\Zdet_{C}(r)\otimes(K\Sigma K^{\trans})^{-1/2}\deriv W(r)] & H_{\vall} & \defeq\int\Zdet_{C}\Zdet_{C}^{\trans}\otimes(K\Sigma K^{\trans})^{-1}.\label{eq:SHpi}
\end{align}
\end{thm}
Up to a term that depends only on $f$, the likelihood ratio processes
$\{\like_{n}(\vall,f)-\like_{n}(0,0)\}$ and $\{\like_{n}^{\pr}(\vall)-\like_{n}^{\pr}(0)\}$
thus share the same distributional limit; since this limit obtains
under all local-to-unity sequences permitted by \assref{LOC} and
\assref{PR}, it is evident that these models converge to the same
limiting experiment, in the sense of \citet[Def.~9.1]{VDV98}. This
substantiates the asymptotic equivalence between the problem of inference
on $(A,\Lambda_{\lu})$ in the VAR \eqref{dgp}--\eqref{dgp-err},
and of inference on $(A_{\pr},\Lambda_{\pr})$ in a predictive regression,
in the vicinity of unit roots. That the weak limit in \eqref{lim-exp}
is also shared by $\{\likens_{n}(\vall)-\like_{n}(0)\}$ is of practical
importance for the likelihood-based tests discussed below.

\subsection{Likelihood-based inference}

\subsubsection{\label{subsec:ML-estimators}ML estimators}

We next turn to the ML estimators of $A$ and $\Lambda_{\lu}$. Let
the unrestricted and restricted estimators of $\PHI$, the latter
with $\Lambda_{\lu}(\PHI)=\Lambda_{0}\in\reals^{q\times q}$ imposed,
be denoted as
\begin{align*}
\hat{\PHI}_{n}\defeq & \argmax_{\PHI\in\reals^{p\times kp}}\likens_{n}(\PHI) & \hat{\PHI}_{n\mid\Lambda_{0}} & \defeq\argmax_{\{\PHI\in\set P\mid\Lambda_{\lu}(\PHI)=\Lambda_{0}\}}\likens_{n}(\PHI).
\end{align*}
(For details of how to compute $\hat{\PHI}_{n\mid\Lambda_{0}}$ in
practice, see \appref{numerical}.) Then $\hat{A}_{n}\defeq A(\hat{\PHI}_{n})$
and $\hat{\Lambda}_{n,\lu}\defeq\Lambda_{\lu}(\hat{\PHI}_{n})$ are
the associated unrestricted MLEs of $A$ and $\Lambda_{\lu}$, and
$\hat{A}_{n\mid\Lambda_{0}}\defeq A(\hat{\PHI}_{n\mid\Lambda_{0}})$
the MLE of $A$ under $\Lambda_{\lu}(\PHI)=\Lambda_{0}$.

To state our main result on these estimators, let $L_{\lu,\perp}$
be any $p\times r$ matrix spanning $(\spn L_{\lu})^{\perp}$; one
possible choice is $\alpha\defeq\lim_{n\goesto\infty}\Phi_{n}(1)\beta(\beta^{\trans}\beta)^{-1}$.
Recall that a random vector $\eta$ is \emph{mixed normal} with mean
zero and conditional variance $V$, denoted $\eta\sim\mn[0,V]$, if
$\expect\e^{\i\tau^{\trans}\eta}=\expect\e^{-\frac{1}{2}\tau^{\trans}V\tau}$.
\begin{thm}
\label{thm:estimators}Suppose \assref{LOC} holds. Then
\begin{enumerate}
\item \label{enu:estimators:unres}$\hat{A}_{n}\defeq A(\hat{\PHI}_{n})$
and $\hat{\Lambda}_{n,\lu}\defeq\Lambda_{\lu}(\hat{\PHI}_{n})$ satisfy\footnote{Recall $\Zdet_{C}(r)=Z_{C}(r)-\mu_{0}-\mu_{1}r$, for $\mu_{0}\defeq\int_{0}^{1}(4-6s)B(s)\diff s$
and $\mu_{1}=\int_{0}^{1}(-6+12s)B(s)\diff s$ (see e.g.\ \citealp{Ell98Ecta},
p.\ 151). Since $\Zdet_{C}$ is not adapted, an expression such as
$\int\Zdet_{C}(\deriv E)^{\trans}$ should be understood as a convenient
shorthand for $\int Z_{C}(\deriv E)^{\trans}-\mu_{0}\int(\deriv E)^{\trans}-\mu_{1}\int r(\deriv E)^{\trans}$.}
\[
n\begin{bmatrix}\hat{A}_{n}-A\\
\hat{\Lambda}_{n,\lu}-\Lambda_{n,\lu}
\end{bmatrix}\wkc K\int(\deriv E)\Zdet_{C}^{\trans}\left(\int\Zdet_{C}\Zdet_{C}^{\trans}\right)^{-1}
\]
\item $n\vek(\hat{A}_{n\mid\Lambda_{n,\lu}}-A)\wkc\mn[0,V_{zz}\otimes V_{\err\err}]$,
where
\begin{align}
V_{zz}\otimes V_{\err\err} & \defeq\left(\int\Zdet_{C}\Zdet_{C}^{\trans}\right)^{-1}\otimes\mathcal{J}L_{\lu,\perp}(L_{\lu,\perp}^{\trans}\Sigma^{-1}L_{\lu,\perp})^{-1}L_{\lu,\perp}^{\trans}\mathcal{J}^{\trans}\label{eq:Arstr}\\
 & =\left(\int\Zdet_{C}\Zdet_{C}^{\trans}\right)^{-1}\otimes(\alpha^{\trans}\Sigma^{-1}\alpha)^{-1}.\label{eq:johvar}
\end{align}
\end{enumerate}
\end{thm}
The limiting distribution of the unrestricted ML estimator of $A$
thus depends on $C$, which cannot be consistently estimated. However,
if the \emph{correct} value of $\Lambda_{\lu}$ is imposed, then the
restricted ML estimator $\hat{A}_{n\mid\Lambda_{n,\lu}}$ is asymptotically
mixed normal. In the special case where $\Lambda_{\lu}=I_{q}$, this
exactly replicates the mixed normality of the ML estimates of the
cointegrating relations, when the correct cointegrating rank is imposed
(see e.g.\ \citealp{Joh95}, Thm.~13.3). In this manner, the preceding
theorem generalises that mixed normality beyond the setting of a VAR
with exact unit roots. Though we shall not give the proof here, it
may be shown that with the correct $\Lambda_{\lu}$ imposed, the model
loglikelihood is locally asymptotically mixed normal (LAMN), so that
$\hat{A}_{n\mid\Lambda_{n,\lu}}$ also inherits the large-sample efficiency
properties familiar from the case of exact unit roots (\citealp{Phi91Ecta}).

\subsubsection{Likelihood ratio tests}

Though part~(i) of the preceding provides a basis for inference on
$A$ using Wald-type statistics, there are some difficulties with
this approach in practice, because there is no guarantee that the
characteristic roots of the unrestrictedly estimated VAR will separate
in the manner prescribed by \assref{QC}. Since these roots come in
conjugate pairs, it may well be the case that when ordered by their
complex modulus (or proximity to real unity), the $q$th and $(q+1)$th
roots will be complex conjugates, preventing us from isolating the
first $q$ roots from the rest -- a problem exacerbated by typically
imprecise estimation of these roots \citep{OU12ET}. A superior approach
therefore utilises (quasi-) likelihood ratio (LR) tests to perform
inference on both $\Lambda_{\lu}$ and $A$; specifically the statistics\begin{subequations}\label{eq:lrstats}
\begin{align}
\lr_{n}(\Lambda_{0}) & \defeq2\left[\max_{\{\PHI\in\set P\mid\Lambda_{\lu}(\PHI)\in\set L\}}\likens_{n}(\PHI)-\max_{\{\Phi\in\set P\mid\Lambda_{\lu}(\PHI)=\Lambda_{0}\}}\likens_{n}(\PHI)\right]\label{eq:LRroot}\\
\lr_{n}(a_{0};\Lambda_{0}) & \defeq2\left[\max_{\{\PHI\in\set P\mid\Lambda_{\lu}(\PHI)=\Lambda_{0}\}}\likens_{n}(\PHI)-\max_{\{\Phi\in\set P\mid\Lambda_{\lu}(\PHI)=\Lambda_{0},a_{ij}(\PHI)=a_{0}\}}\likens_{n}(\PHI)\right],
\end{align}
\end{subequations}where $\set L$ is the parameter space for $\Lambda_{\lu}$
(see \subsecref{LambdaLUparspc} above). $\lr_{n}(\Lambda_{0})$ is
the usual likelihood ratio test for $H_{0}:\Lambda_{\lu}(\PHI)=\Lambda_{0}$,
while $\lr_{n}(a_{0};\Lambda_{0})$ corresponds to the likelihood
ratio test of $H_{0}:a_{ij}(\PHI)=a_{0}$, when $\Lambda_{\lu}(\PHI)=\Lambda_{0}$
is maintained under both the null and the alternative. Our next result
provides the asymptotic distributions of these test statistics; for
given $C\in\reals^{q\times q}$, let
\[
C_{\ast}\defeq(L_{\lu}^{\trans}\Sigma L_{\lu})^{-1/2}C(L_{\lu}^{\trans}\Sigma L_{\lu})^{1/2}.
\]

\begin{thm}
\label{thm:lrstats}Suppose \assref{LOC} holds. Then
\begin{gather}
\lr_{n}(\Lambda_{n,\lu})\wkc\tr\left\{ \int(\deriv W_{\ast})\Zdet_{C_{\ast}}^{\trans}\left(\int\Zdet_{C_{\ast}}\Zdet_{C_{\ast}}^{\trans}\right)^{-1}\int\Zdet_{C_{\ast}}(\deriv W_{\ast})^{\trans}\right\} \label{eq:multDF}
\end{gather}
where $W_{\ast}\sim\BM(I_{q})$, $\Zdet_{C_{\ast}}$ is the residual
from an $L^{2}[0,1]$ projection of the sample paths of $Z_{C_{\ast}}(r)\defeq\int_{0}^{r}\e^{C_{\ast}(r-s)}\diff W_{\ast}(s)$
onto a constant and linear trend; and
\begin{equation}
\lr_{n}[a_{ij}(\PHI_{n});\Lambda_{n,\lu}]\wkc\chi_{1}^{2}.\label{eq:chisqlim}
\end{equation}
\end{thm}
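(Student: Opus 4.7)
The approach is to exploit the reparametrization into $(\Phi_\lu, \Phi_\st)$ introduced around \eqref{GJ-type-decomp} together with the asymptotic block-diagonality of the information matrix in these parameters (noted in \subsecref{loc-to-unity} and established in the appendix), and then to expand $\likens_n$ quadratically around the true parameter $\PHI_n$. Since both $A(\PHI)$ and $\Lambda_\lu(\PHI)$ depend locally only on perturbations of $\Phi_\lu$, both LR statistics will reduce, modulo $o_p(1)$, to quadratic forms in the profile score for the $\Phi_\lu$ block after the nuisance directions $(\Phi_\st, m, d)$ have been concentrated out.

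For part~(i), I would parametrize the unrestricted model near $\PHI_n$ by $(A, \Lambda_\lu, \Phi_\st, m, d)$ via $\Phi_\lu = R_\lu(A)\Lambda_\lu^k$. Profiling out $(A, \Phi_\st, m, d)$ on both sides of \eqref{LRroot}, the standard LAMN argument (together with the rate-of-convergence results from \thmref{estimators}) gives
\[
\lr_n(\Lambda_{n,\lu}) = S_n^\trans H_n^{-1} S_n + o_p(1),
\]
where $S_n$ and $H_n$ are the profile score and random information for $\vek \Lambda_\lu$. The joint weak convergence \eqref{Zproc}, together with the standard near-integrated sample-moment limits
\[
n^{-2}\sum_{t=1}^{n} z_{\lu,t-1} z_{\lu,t-1}^\trans \wkc \int Z_C Z_C^\trans, \qquad n^{-1}\sum_{t=1}^{n} z_{\lu,t-1} \err_t^\trans \wkc \int Z_C (\deriv E)^\trans,
\]
then delivers the joint limits of $S_n$ and $H_n$, with the detrending induced by profiling out $(m,d)$ replacing $Z_C$ by $\Zdet_C$, and with the stationary and cross contributions vanishing at rate $n^{-1/2}$. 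Applying the whitening $W_\ast \defeq (L_\lu^\trans \Sigma L_\lu)^{-1/2} L_\lu^\trans E$, which makes $W_\ast$ a standard $q$-dimensional Brownian motion and sends $Z_C$ to $Z_{C_\ast}$, reorganises the resulting quadratic form into the trace expression~\eqref{multDF}.

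For part~(ii), both maxima are taken subject to $\Lambda_\lu(\PHI)=\Lambda_{n,\lu}$, so the problem reduces to testing the scalar restriction $a_{ij}(\PHI)=a_0$ inside the smooth submanifold $\{\PHI\in\set P \mid \Lambda_\lu(\PHI)=\Lambda_{n,\lu}\}$. On this submanifold, $A(\PHI)$ is a smooth parametrization, and \thmref{estimators}(ii) gives the asymptotic mixed-normal distribution $n\vek(\hat A_{n\mid\Lambda_{n,\lu}} - A)\wkc \mn[0, V_{zz}\otimes V_{\err\err}]$ with nonsingular conditional covariance. The usual quadratic expansion of $\likens_n$ about $\hat A_{n\mid\Lambda_{n,\lu}}$ then yields $\lr_n(a_{ij}(\PHI_n);\Lambda_{n,\lu}) = n^{2}(\hat a_{ij}-a_{ij}(\PHI_n))^{2}/\hat v_n + o_p(1)$, where $\hat v_n$ is a sample-based estimate of the relevant diagonal entry of $V_{zz}\otimes V_{\err\err}$; the mixed-normal limit together with the defining conditional-Gaussian property of $\mn$ then gives $\chi^{2}_{1}$ via Slutsky.

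The main obstacle is verifying, uniformly over shrinking neighbourhoods of $\PHI_n$, the asymptotic block-diagonality between the $\Phi_\lu$ block and the $(\Phi_\st, m, d)$ block, which is what makes the above profiling arguments valid. This requires careful control of cross sample-moments of the form $n^{-3/2}\sum z_{\lu,t-1} z_{\st,t-1}^\trans$ (and the analogues involving the deterministic regressors), and a uniform bound on the Taylor remainder that accommodates the different rates $n^{-1}$ and $n^{-1/2}$ at which the $\Phi_\lu$ and $\Phi_\st$ components of $\hat\PHI_n$ converge; this is in the same spirit as the corresponding step in Johansen's exact-unit-root analysis, but complicated here by the fact that the nearly integrated direction is no longer pinned down at $\Lambda_\lu = I_q$.
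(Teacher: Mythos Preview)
Your proposal is correct in outline and follows the same high-level strategy as the paper: a quadratic expansion of the concentrated likelihood, asymptotic block-diagonality between the near-integrated and stationary directions, and the whitening $W_\ast = (L_\lu^\trans\Sigma L_\lu)^{-1/2}L_\lu^\trans E$ to reduce \eqref{multDF} to its stated form. The main technical difference lies in the choice of working parametrisation. You parametrise by $(A,\Lambda_\lu,\Phi_\st)$, which makes the constraints linear but renders the likelihood only approximately quadratic, so the burden falls on controlling Taylor remainders under the mixed rates $n^{-1}$ and $n^{-1/2}$ --- exactly the obstacle you flag. The paper instead reparametrises by $\f = \vek\{(\PHI-\PHI_n)\R_n\}$, in which the Gaussian log-likelihood is \emph{exactly} quadratic (\lemref{lhoodexp}); all the approximation is then pushed into linearising the nonlinear constraint maps $\theta_n(\f)=\vek\{\Lambda_\lu(\cdot)-\Lambda_{n,\lu}\}$ and $\gamma_n(\f)=a_{ij}(\cdot)-a_{ij}(\PHI_n)$ (\lemref{deriv-limits}), which is handled via the perturbation theory of \appref{perturbation}. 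This buys a cleaner argument: no uniform remainder bounds on the likelihood itself, and the block-diagonality you worry about is immediate from the exact form of $H_n$.

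For part~(ii), your route via \thmref{estimators}(ii) and LR--Wald equivalence in the LAMN submodel is a valid and arguably more transparent alternative. The paper instead works directly with the constrained optimisers in the $\f$-parametrisation (\propref{andrews}\enuref{andrews:coef}), expressing the LR limit as a rank-one projection of $H_{\Theta,\perp}^{-1/2}\Theta_\perp^\trans S_\lu$, and then verifies that this vector is $\normdist[0,I_{qr}]$ independent of the projector $\mathcal Q$ --- the independence coming from $L_{\lu,\perp}^\trans\Sigma^{-1}E \perp L_\lu^\trans E$, which is precisely the mechanism underlying the mixed normality you invoke. Both routes land on the same $\chi_1^2$; your phrasing ``$\hat v_n$ is a sample-based estimate of the relevant diagonal entry'' is slightly loose --- what actually appears in the LR expansion is the relevant entry of the inverse profile Hessian at the restricted MLE, whose limit coincides with that entry of $V_{zz}\otimes V_{\err\err}$.
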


\subsubsection{Nearly optimal tests}

\citet{EMW15Ecta} consider hypothesis tests that are affected by
nuisance parameters, in settings where the limiting experiment is
not a Gaussian shift experiment (with an unrestricted parameter space),
such the usual asymptotic optimality enjoyed by ML-based inference
does not hold. In view of \thmref{emw}, our problem falls within
their framework: the correspondence can be most easily seen when,
analogously to their equation (1), we seek to test hypotheses of the
form
\[
H_{0}:A=A_{0}\sep\Lambda_{\lu}\in\set L\qquad\text{against}\qquad H_{1}:A\neq A_{0}\sep\Lambda_{\lu}\in\set L,
\]
so that $A$ is the parameter of interest, and $\Lambda_{\lu}$ the
nuisance parameter.\footnote{More precisely, this correspondence emerges asymptotically, under
the local parametrisation \eqref{reparm-thm}, with $\vall$ playing
the role of $(A,\Lambda_{\lu})$. Though the other components of the
VAR, represented by $f$, are technically also nuisance parameters,
the separability of the limiting experiment in $\vall$ and $f$ entails
that these asymptotically play no role in the testing problem. Concentrating
these parameters out of the likelihood thus leads to test statistics
with the same limiting distribution as if these other components were
known \emph{a priori}.} Their tests have the following Neyman--Pearson form,
\begin{equation}
\np_{n}(A_{0})\defeq\indic\left\{ \int_{\reals^{r\times q}\times\set L}\e^{\likens_{n}(A,\Lambda)}F_{1}(\deriv A,\deriv\Lambda)>\crit_{\alpha}\int_{\set L}\e^{\likens_{n}(A_{0},\Lambda)}F_{0}(A_{0},\deriv\Lambda)\right\} \label{eq:NP}
\end{equation}
where $\likens_{n}(A,\Lambda)\defeq\max_{\{\Phi\in\set P\mid\Lambda_{\lu}(\PHI)=\Lambda,A(\PHI)=A\}}\likens_{n}(\PHI)$,
and $F_{0}$ and $F_{1}$ are distributions that respectively concentrate
on those subsets of the parameter space for $(A,\Lambda)$ consistent
with the null and the alternative.

As discussed by \citet{EMW15Ecta}, the level $\alpha$ test that
maximises weighted average power (WAP; against the $F_{1}$-weighted
alternative) results when $F_{0}$ and $\crit_{\alpha}$ are such
that
\begin{align}
\int_{\set L}\Prob_{(A_{0},\Lambda)}\{\np_{n}(A_{0})=1\}F_{0}(\deriv\Lambda) & =\alpha, & \Prob_{(A_{0},\Lambda)}\{\np_{n}(A_{0})=1\} & \leq\alpha\sep\forall\Lambda\in\set L;\label{eq:LFD}
\end{align}
in which case $F_{0}$ is the least favourable distribution (LFD)
for the testing problem (for the given $F_{1}$-weighed alternative).
While exact calculation of the LFD is infeasible, the authors provide
an algorithm that delivers an $F_{0}$ (and associated $\crit_{\alpha}$)
that approximates the properties of the LFD, in the sense of their
Definition~1; the associated test is thus `nearly optimal' in the
WAP sense against $F_{1}$.

Their methodology may be applied to the present setting, with \thmref{emw}
indicating the manner in which the limiting experiment depends on
the model parameters $(\PHI,\Sigma)$. In particular, \thmref{emw}
justifies simulating from an appropriately parametrised VAR(1) model
to approximate the distribution of the likelihood ratio process, and
particularly the probabilities in \eqref{LFD}, which are the key
input for the determination of $F_{0}$ and $\crit_{\alpha}$ (see
\appref{numerical} for details).\footnote{Unlike \citet{EMW15Ecta}, we have phrased the testing problem in
terms of the `original' model parameters, rather than the local
parameters that appear in the limiting experiment. Since the approximate
LFD and the weighted alternative are defined in terms of the local
parameters, $F_{0}$ and $F_{1}$ in \eqref{NP} should be indexed
by $n$ so as to correspond to (sample-size independent) distributions
on the local parameter space. In our implementation, where $F_{0}$
is obtained by simulating from the finite-sample distribution of a
suitably-chosen model, such details are handled implicitly. That this
construction yields a test that is asymptotically of size $\alpha$
follows from \thmref{emw} and the continuous mapping theorem (with
the finiteness of the supports of $F_{0}$ and $F_{1}$ implying that
the finite-dimensional convergence obtained in that result is sufficient
here): a rigorous development is not given here, to keep the paper
to a manageable length.} While the application to our problem is in many respects similar
to their application to a predictive regression model (\citealp{EMW15Ecta},
Sec.\ 5.3), with both problems sharing a common limiting experiment,
an important difference arises in that we impose a lower bound $\radius$
for the $q$ largest roots, for reasons discussed in \subsecref{qcs}
above. Accordingly, we do not develop a `switching test' of the
form that would be needed to accommodate a parametrisation of the
model lying deeper in the stationary region.

More generally, the same approach may be taken to testing hypotheses
that restrict only a subset of the elements of $A$ under the null,
with the remaining elements being concentrated out. While the optimality
properties of the resultant test are less clear in that case, the
procedure still yields a test of asymptotic level $\alpha$. For example,
a test of $H_{0}:a_{ij}(\PHI)=a_{0},\Lambda_{\lu}\in\set L$, for
a given $(i,j)$, could be constructed as
\begin{equation}
\np_{n}(a_{0})\defeq\indic\left\{ \int_{\reals\times\set L}\e^{\likens_{n}(a,\Lambda)}F_{1}(\deriv a,\deriv\Lambda)>\crit_{\alpha}\int_{\set L}\e^{\likens_{n}(a_{0},\Lambda)}F_{0}(a_{0},\deriv\Lambda)\right\} \label{eq:NPa}
\end{equation}
where now $\likens_{n}(a,\Lambda)\defeq\max_{\{\Phi\in\set P\mid\Lambda_{\lu}(\PHI)=\Lambda,a_{ij}(\PHI)=a\}}\likens_{n}(\PHI)$.
$F_{0}$ and $F_{1}$ concentrate respectively on subsets of the parameter
space for $a_{ij}$ and $\Lambda_{\lu}$ consistent with the null
and the alternative, with the remaining elements of $A$ replaced
(in a finite sample) by consistent estimators.

\subsection{Confidence intervals}

\label{subsec:confidence-Intervals}

Suppose now that interest centres on a given element $a_{ij}$ of
$A$. The preceding suggests a number of possible ways for constructing
asymptotically valid $1-\alpha$ confidence intervals for $a_{ij}$.
Firstly, if $\Lambda_{\lu}$ were known to be some $\Lambda_{0}\in\set L$,
then by \thmref{lrstats} a confidence interval based on the efficient
(likelihood ratio) test may be constructed conditionally on that $\Lambda_{0}$,
as
\begin{equation}
\ci_{a_{ij}\mid\Lambda_{0}}(\alpha)\defeq\{a_{0}\in\reals\mid\lr_{n}(a_{0};\Lambda_{0})\leq\chi_{1,1-\alpha}^{2}\}.\label{eq:condci}
\end{equation}
$\ci_{a_{ij}\mid\Lambda_{0}}$ may also be of interest in cases where
$\Lambda_{\lu}$ is not plausibly known a priori, insofar as a plot
of these intervals illustrates the potential sensitivity (or robustness)
of inferences on $a_{ij}$ to departures from the assumption of exact
unit roots. (This is particularly feasible when $q=1$: see \secref{empirical}
for an illustration).

In the more realistic case that $\Lambda_{\lu}$ is unknown, a well-established
approach to inference is based on Bonferroni's inequality, which
involves constructing a first-stage confidence interval for $\Lambda_{\lu}$
on the basis of the LR test in \thmref{lrstats}, as per
\[
\ci_{\Lambda}(\alpha_{1})\defeq\{\Lambda_{0}\in\set L\mid\lr_{n}(\Lambda_{0})\leq c_{1-\alpha_{1}}[n(\Lambda_{0}-I_{q})]\}
\]
where $c_{\tau}(C)$ denotes the $\tau$th quantile of the distribution
of \eqref{multDF}, under the local parameter $C$. By construction,
for any $\alpha_{1}+\alpha_{2}\leq\alpha$, the set
\[
\cibonf(\alpha_{1},\alpha_{2})\defeq\Union_{\Lambda_{0}\in\ci_{\Lambda}(\alpha_{1})}\ci_{a_{ij}\mid\Lambda_{0}}(\alpha_{2}),
\]
has asymptotic level $\alpha$. Since this yields inferences on $a_{ij}$
that are necessarily conservative, refinements along lines proposed
by \citet{CES95ET} and \citet{CY06JFE} in the context of predictive
regression (an approach that has since been further extended by \citealp{McC17JoE})
may also be considered here.

However, we have found in practice that the likelihood ratio test
of $\Lambda_{\lu}$ generally lacks power, with the associated confidence
set $\ci_{\Lambda}(\alpha_{1})$ being unsuitably wide. In general,
much tighter intervals can be found on the basis of the nearly optimal
test of \citet{EMW15Ecta},
\[
\cinp(\alpha)\defeq\{a_{0}\in\reals\mid\np_{n}(a_{0})=0\}.
\]
where $\np_{n}(a_{0})$ is defined in \eqref{NPa} above.

\subsection{Deterministic terms}

\label{subsec:deterministics}

For the cointegrated VAR with exact unit roots, \citet[Sec.~5.7]{Joh95}
develops a hierarchy of models (in his notation, $H_{2}\subset H_{1}^{\ast}\subset H_{1}\subset H^{\ast}\subset H$)
ordered according to their treatment of the deterministic terms in
form \eqref{redform} of the model. In our more general setting where
$\Lambda_{\lu}=I_{q}$ is not required, these models take on an altered
expression, and not all are realisable through restrictions on the
model parameters. To discuss how we treat might handle deterministic
terms in our setting, and their implications for inference, we first
recall that the mapping from the DGP \eqref{dgp} to form \eqref{redform}
of the VAR implies
\begin{align*}
m & =\Phi(1)\mu+\Psi\delta, & d & =\Phi(1)\delta,
\end{align*}
where $\Psi\defeq\sum_{i=1}^{k}i\Phi_{i}$. Three important cases
are the following:\footnote{There is a fourth case, which sits in between the first two, in which
a linear trend is present in $y_{t}$ but is assumed to be eliminated
by the quasi-cointegrating relationships, whence $\beta^{\trans}\delta=0$.
Since $\beta\in(\spn R_{\lu})^{\perp}$, this is equivalent to requiring
$d\in\spn\Phi(1)R_{\lu}$. If we assume exact unit roots, then $\Phi(1)R_{\lu}=0$
(from \eqref{eig-eig} above) and this restriction can be imposed
simply by estimating the VAR \eqref{redform} without a trend (as
in Johansen's model $H_{1}$). However, in our setting with non-unit
roots this restriction cannot be so simply expressed, because $\Phi(1)$
may have full rank; all that can be said is that $d\in\spn\Phi(1)R_{\lu}$.
Estimation under this restriction is accordingly more involved, and
we leave the development of the asymptotics of our procedure in this
case for future work.}
\begin{enumerate}
\item Both $\mu$ and $\delta$ are unrestricted. The reduced form VAR \eqref{redform}
should be estimated with $(m,d)$ unrestricted (as per Johansen's
model $H$). Our asymptotics assume that the DGP is the structural
VAR \eqref{dgp}, so that $d=\Phi(1)\delta$ holds even though this
is not imposed in estimation. Indeed, it would not be possible to
impose the restriction $d\in\spn\Phi(1)$ (as per Johansen's $H^{\ast}$)
in the present setting, because whenever the largest roots of $\Phi$
are not exactly unity, $\Phi(1)$ has full rank. Thus $d=\Phi(1)\delta$
would be effectively unrestricted, and a model with exact unit roots
and $d\notin\Phi(1)$ would lie in the closure of the parameter space.
\item $\delta=0$, but $\mu$ is unrestricted. The VAR \eqref{redform}
should be estimated with only a constant (as in Johansen's model
$H_{1}$). Under the assumption that the DGP is the structural VAR
with $\delta=0$, $y_{t}$ has no drift. The limit theorems given
in Theorems~\ref{thm:emw}--\ref{thm:lrstats} must be amended in
this case, by replacing each instance of $\Zdet_{C}$ with the demeaned
diffusion process $Z_{C}(r)-\int_{0}^{1}Z_{C}(s)\diff s$. (Imposing
the restriction that $m\in\spn\Phi(1)$, as per Johansen's model
$H_{1}^{\ast}$, is impossible in our setting.)
\item $\mu=\delta=0$. The VAR \eqref{redform} should be estimated with
$m=d=0$ (as per Johansen's model $H_{2}$); in Theorems~\ref{thm:emw}--\ref{thm:lrstats},
$\Zdet_{C}$ is replaced by $Z_{C}$.
\end{enumerate}

In light of this, our recommendation is to estimate the model with
an unrestricted intercept and trend if there is a discernable linear
trend in the data, and to otherwise estimate the model with only an
intercept.

\section{Finite-sample performance}

\label{sec:finite-sample}

\subsection{Simulations}

\label{sec:simulations}

We conducted simulations to evaluate the finite-sample performance
of $\np_{n}(a)$, in terms of size and power. For this exercise, natural
comparisons are with the likelihood ratio test $\lr_{n}(a;I_{q})$
corresponding to the efficient rank-imposed MLE (\citealp{Joh95}),
and with the low-frequency stationarity test $\lfst_{n}(a)$ of \citet{MW13JoE},
constructed with $b=10/r^{1/2}$ in their equation (25), as per their
recommendations. 

The DGP in the simulation design is a bivariate ($p=2$) VAR(2) with
one root near unity ($q=1$), is parametrised in terms of the underlying
$(R,\Lambda)$ matrices (see \appref{qcs-results}) as
\begin{align*}
R_{\lu} & =\begin{bmatrix}1\\
1
\end{bmatrix} & R_{\st} & =\begin{bmatrix}0 & 1 & 1\\
1 & 0 & 2
\end{bmatrix} & \lambda_{\lu} & =\lambda_{\lu,0} & \Lambda_{\st} & =\diag\{0.5,0.4,0.3\}
\end{align*}
from which the implied VAR coefficients may be recovered (via \lemref{GLR}\enuref{GLR:jordanF}).
The implied quasi-cointegrating vector is $\beta_{0}=[1,-a_{0}]^{\trans}$
with $a_{0}=1$. Across the simulation designs, we vary:
\begin{enumerate}
\item the largest root $\lambda_{\lu,0}$ over $\{0.96,0.98,1.00\}$; and
\item the matrix $\Sigma$ such that $\Omega=K\Sigma K^{\trans}=[\begin{smallmatrix}1 & \omega_{\sl}\\
\omega_{\sl} & 1
\end{smallmatrix}]$, for $\omega_{\sl}\in\{-0.9,-0.8,\ldots,0.9\}$;
\end{enumerate}
for each of which we generate $20,000$ samples of length $n=200$.
In implementing $\np_{n}(a)$, we set a lower bound of $\radius=0.9$
on $\set L=[\radius,1]$, the parameter space for $\lambda_{\lu}$.
When estimating the VAR, we include a constant, and select the lag
length by the Akaike information criterion (AIC; with a maximum lag
order of 10). The nominal level of all tests is $\alpha=0.05$.

\begin{figure}
\begin{adjustwidth}{-2cm}{-2cm}\centering

\begin{tabular}{ccc}
\begin{tikzpicture}[x=1pt,y=1pt]
\definecolor{fillColor}{RGB}{255,255,255}
\path[use as bounding box,fill=fillColor,fill opacity=0.00] (0,0) rectangle (202.36,202.36);
\begin{scope}
\path[clip] (  0.00,  0.00) rectangle (202.36,202.36);
\definecolor{drawColor}{RGB}{255,255,255}
\definecolor{fillColor}{RGB}{255,255,255}

\path[draw=drawColor,line width= 0.6pt,line join=round,line cap=round,fill=fillColor] (  0.00,  0.00) rectangle (202.36,202.36);
\end{scope}
\begin{scope}
\path[clip] ( 42.06, 33.48) rectangle (196.36,196.36);
\definecolor{fillColor}{RGB}{255,255,255}

\path[fill=fillColor] ( 42.06, 33.48) rectangle (196.36,196.36);
\definecolor{drawColor}{RGB}{248,118,109}

\path[draw=drawColor,line width= 0.6pt,line join=round] ( 49.07, 40.93) --
	( 64.66, 40.89) --
	( 80.24, 40.89) --
	( 95.83, 40.92) --
	(111.42, 40.88) --
	(127.00, 40.89) --
	(142.59, 40.89) --
	(158.17, 40.96) --
	(173.76, 40.93) --
	(189.34, 40.93);
\definecolor{drawColor}{RGB}{0,186,56}

\path[draw=drawColor,line width= 0.6pt,dash pattern=on 1pt off 3pt ,line join=round] ( 49.07,188.95) --
	( 64.66,134.54) --
	( 80.24, 84.11) --
	( 95.83, 58.10) --
	(111.42, 46.03) --
	(127.00, 42.64) --
	(142.59, 44.26) --
	(158.17, 52.36) --
	(173.76, 75.04) --
	(189.34,147.39);
\definecolor{drawColor}{RGB}{97,156,255}

\path[draw=drawColor,line width= 0.6pt,dash pattern=on 7pt off 3pt ,line join=round] ( 49.07, 41.50) --
	( 64.66, 42.02) --
	( 80.24, 41.79) --
	( 95.83, 41.61) --
	(111.42, 41.79) --
	(127.00, 42.09) --
	(142.59, 41.77) --
	(158.17, 41.97) --
	(173.76, 41.69) --
	(189.34, 42.11);
\definecolor{drawColor}{gray}{0.30}

\path[draw=drawColor,line width= 0.6pt,line join=round,line cap=round] ( 42.06, 33.48) rectangle (196.36,196.36);
\end{scope}
\begin{scope}
\path[clip] (  0.00,  0.00) rectangle (202.36,202.36);
\definecolor{drawColor}{gray}{0.30}

\node[text=drawColor,anchor=base east,inner sep=0pt, outer sep=0pt, scale=  0.96] at ( 36.66, 30.50) {0.00};

\node[text=drawColor,anchor=base east,inner sep=0pt, outer sep=0pt, scale=  0.96] at ( 36.66, 69.74) {0.25};

\node[text=drawColor,anchor=base east,inner sep=0pt, outer sep=0pt, scale=  0.96] at ( 36.66,108.97) {0.50};

\node[text=drawColor,anchor=base east,inner sep=0pt, outer sep=0pt, scale=  0.96] at ( 36.66,148.20) {0.75};

\node[text=drawColor,anchor=base east,inner sep=0pt, outer sep=0pt, scale=  0.96] at ( 36.66,187.44) {1.00};
\end{scope}
\begin{scope}
\path[clip] (  0.00,  0.00) rectangle (202.36,202.36);
\definecolor{drawColor}{gray}{0.30}

\path[draw=drawColor,line width= 0.6pt,line join=round] ( 39.06, 33.81) --
	( 42.06, 33.81);

\path[draw=drawColor,line width= 0.6pt,line join=round] ( 39.06, 73.04) --
	( 42.06, 73.04);

\path[draw=drawColor,line width= 0.6pt,line join=round] ( 39.06,112.28) --
	( 42.06,112.28);

\path[draw=drawColor,line width= 0.6pt,line join=round] ( 39.06,151.51) --
	( 42.06,151.51);

\path[draw=drawColor,line width= 0.6pt,line join=round] ( 39.06,190.74) --
	( 42.06,190.74);
\end{scope}
\begin{scope}
\path[clip] (  0.00,  0.00) rectangle (202.36,202.36);
\definecolor{drawColor}{gray}{0.30}

\path[draw=drawColor,line width= 0.6pt,line join=round] ( 80.24, 30.48) --
	( 80.24, 33.48);

\path[draw=drawColor,line width= 0.6pt,line join=round] (119.21, 30.48) --
	(119.21, 33.48);

\path[draw=drawColor,line width= 0.6pt,line join=round] (158.17, 30.48) --
	(158.17, 33.48);
\end{scope}
\begin{scope}
\path[clip] (  0.00,  0.00) rectangle (202.36,202.36);
\definecolor{drawColor}{gray}{0.30}

\node[text=drawColor,anchor=base,inner sep=0pt, outer sep=0pt, scale=  0.96] at ( 80.24, 21.46) {-0.5};

\node[text=drawColor,anchor=base,inner sep=0pt, outer sep=0pt, scale=  0.96] at (119.21, 21.46) {0.0};

\node[text=drawColor,anchor=base,inner sep=0pt, outer sep=0pt, scale=  0.96] at (158.17, 21.46) {0.5};
\end{scope}
\begin{scope}
\path[clip] (  0.00,  0.00) rectangle (202.36,202.36);
\definecolor{drawColor}{RGB}{0,0,0}

\node[text=drawColor,anchor=base,inner sep=0pt, outer sep=0pt, scale=  1.20] at (119.21,  8.33) {$\omega_{\sl}$};
\end{scope}
\end{tikzpicture} &  & 
\begin{tikzpicture}[x=1pt,y=1pt]
\definecolor{fillColor}{RGB}{255,255,255}
\path[use as bounding box,fill=fillColor,fill opacity=0.00] (0,0) rectangle (202.36,202.36);
\begin{scope}
\path[clip] (  0.00,  0.00) rectangle (202.36,202.36);
\definecolor{drawColor}{RGB}{255,255,255}
\definecolor{fillColor}{RGB}{255,255,255}

\path[draw=drawColor,line width= 0.6pt,line join=round,line cap=round,fill=fillColor] (  0.00,  0.00) rectangle (202.36,202.36);
\end{scope}
\begin{scope}
\path[clip] ( 42.06, 33.48) rectangle (196.36,196.36);
\definecolor{fillColor}{RGB}{255,255,255}

\path[fill=fillColor] ( 42.06, 33.48) rectangle (196.36,196.36);
\definecolor{drawColor}{RGB}{248,118,109}

\path[draw=drawColor,line width= 0.6pt,line join=round] ( 70.65,181.39) --
	( 75.94,175.08) --
	( 81.35,165.04) --
	( 86.86,150.25) --
	( 92.48,128.65) --
	( 98.22,103.06) --
	(104.08, 76.90) --
	(110.06, 56.39) --
	(116.16, 44.33) --
	(122.38, 40.88) --
	(128.73, 49.90) --
	(135.21, 81.22) --
	(141.82,127.90) --
	(148.57,164.08) --
	(155.45,181.53) --
	(162.48,187.12) --
	(169.65,188.59) --
	(176.96,188.82) --
	(184.43,188.92);
\definecolor{drawColor}{RGB}{0,186,56}

\path[draw=drawColor,line width= 0.6pt,dash pattern=on 1pt off 3pt ,line join=round] ( 70.65,188.73) --
	( 75.94,188.32) --
	( 81.35,187.31) --
	( 86.86,184.35) --
	( 92.48,176.64) --
	( 98.22,160.64) --
	(104.08,131.23) --
	(110.06, 93.32) --
	(116.16, 60.92) --
	(122.38, 46.28) --
	(128.73, 52.39) --
	(135.21, 79.96) --
	(141.82,122.98) --
	(148.57,159.76) --
	(155.45,179.26) --
	(162.48,186.38) --
	(169.65,188.28) --
	(176.96,188.82) --
	(184.43,188.95);
\definecolor{drawColor}{RGB}{97,156,255}

\path[draw=drawColor,line width= 0.6pt,dash pattern=on 7pt off 3pt ,line join=round] ( 70.65, 59.08) --
	( 75.94, 57.85) --
	( 81.35, 56.41) --
	( 86.86, 54.71) --
	( 92.48, 52.60) --
	( 98.22, 50.43) --
	(104.08, 48.20) --
	(110.06, 46.45) --
	(116.16, 45.00) --
	(122.38, 44.38) --
	(128.73, 44.52) --
	(135.21, 45.81) --
	(141.82, 47.47) --
	(148.57, 49.95) --
	(155.45, 52.41) --
	(162.48, 54.98) --
	(169.65, 56.87) --
	(176.96, 58.54) --
	(184.43, 60.03);
\definecolor{drawColor}{gray}{0.30}

\path[draw=drawColor,line width= 0.6pt,line join=round,line cap=round] ( 42.06, 33.48) rectangle (196.36,196.36);
\end{scope}
\begin{scope}
\path[clip] (  0.00,  0.00) rectangle (202.36,202.36);
\definecolor{drawColor}{gray}{0.30}

\node[text=drawColor,anchor=base east,inner sep=0pt, outer sep=0pt, scale=  0.96] at ( 36.66, 34.16) {0.00};

\node[text=drawColor,anchor=base east,inner sep=0pt, outer sep=0pt, scale=  0.96] at ( 36.66, 72.04) {0.25};

\node[text=drawColor,anchor=base east,inner sep=0pt, outer sep=0pt, scale=  0.96] at ( 36.66,109.91) {0.50};

\node[text=drawColor,anchor=base east,inner sep=0pt, outer sep=0pt, scale=  0.96] at ( 36.66,147.78) {0.75};

\node[text=drawColor,anchor=base east,inner sep=0pt, outer sep=0pt, scale=  0.96] at ( 36.66,185.65) {1.00};
\end{scope}
\begin{scope}
\path[clip] (  0.00,  0.00) rectangle (202.36,202.36);
\definecolor{drawColor}{gray}{0.30}

\path[draw=drawColor,line width= 0.6pt,line join=round] ( 39.06, 37.47) --
	( 42.06, 37.47);

\path[draw=drawColor,line width= 0.6pt,line join=round] ( 39.06, 75.34) --
	( 42.06, 75.34);

\path[draw=drawColor,line width= 0.6pt,line join=round] ( 39.06,113.22) --
	( 42.06,113.22);

\path[draw=drawColor,line width= 0.6pt,line join=round] ( 39.06,151.09) --
	( 42.06,151.09);

\path[draw=drawColor,line width= 0.6pt,line join=round] ( 39.06,188.96) --
	( 42.06,188.96);
\end{scope}
\begin{scope}
\path[clip] (  0.00,  0.00) rectangle (202.36,202.36);
\definecolor{drawColor}{gray}{0.30}

\path[draw=drawColor,line width= 0.6pt,line join=round] ( 55.17, 30.48) --
	( 55.17, 33.48);

\path[draw=drawColor,line width= 0.6pt,line join=round] ( 85.67, 30.48) --
	( 85.67, 33.48);

\path[draw=drawColor,line width= 0.6pt,line join=round] (116.16, 30.48) --
	(116.16, 33.48);

\path[draw=drawColor,line width= 0.6pt,line join=round] (146.65, 30.48) --
	(146.65, 33.48);

\path[draw=drawColor,line width= 0.6pt,line join=round] (177.15, 30.48) --
	(177.15, 33.48);
\end{scope}
\begin{scope}
\path[clip] (  0.00,  0.00) rectangle (202.36,202.36);
\definecolor{drawColor}{gray}{0.30}

\node[text=drawColor,anchor=base,inner sep=0pt, outer sep=0pt, scale=  0.96] at ( 55.17, 21.46) {-0.10};

\node[text=drawColor,anchor=base,inner sep=0pt, outer sep=0pt, scale=  0.96] at ( 85.67, 21.46) {-0.05};

\node[text=drawColor,anchor=base,inner sep=0pt, outer sep=0pt, scale=  0.96] at (116.16, 21.46) {0.00};

\node[text=drawColor,anchor=base,inner sep=0pt, outer sep=0pt, scale=  0.96] at (146.65, 21.46) {0.05};

\node[text=drawColor,anchor=base,inner sep=0pt, outer sep=0pt, scale=  0.96] at (177.15, 21.46) {0.10};
\end{scope}
\begin{scope}
\path[clip] (  0.00,  0.00) rectangle (202.36,202.36);
\definecolor{drawColor}{RGB}{0,0,0}

\node[text=drawColor,anchor=base,inner sep=0pt, outer sep=0pt, scale=  1.20] at (119.21,  8.33) {$a-a_{0}$};
\end{scope}
\end{tikzpicture}\tabularnewline
(a) Size: $\lambda_{\lu}=0.96$. &  & (d) Power: $\lambda_{\lu}=0.96$ and $\omega_{\sl}=0.3$.\tabularnewline\addlinespace
\begin{tikzpicture}[x=1pt,y=1pt]
\definecolor{fillColor}{RGB}{255,255,255}
\path[use as bounding box,fill=fillColor,fill opacity=0.00] (0,0) rectangle (202.36,202.36);
\begin{scope}
\path[clip] (  0.00,  0.00) rectangle (202.36,202.36);
\definecolor{drawColor}{RGB}{255,255,255}
\definecolor{fillColor}{RGB}{255,255,255}

\path[draw=drawColor,line width= 0.6pt,line join=round,line cap=round,fill=fillColor] (  0.00,  0.00) rectangle (202.36,202.36);
\end{scope}
\begin{scope}
\path[clip] ( 42.06, 33.48) rectangle (196.36,196.36);
\definecolor{fillColor}{RGB}{255,255,255}

\path[fill=fillColor] ( 42.06, 33.48) rectangle (196.36,196.36);
\definecolor{drawColor}{RGB}{248,118,109}

\path[draw=drawColor,line width= 0.6pt,line join=round] ( 49.07, 40.93) --
	( 64.66, 40.88) --
	( 80.24, 40.95) --
	( 95.83, 40.88) --
	(111.42, 40.89) --
	(127.00, 40.89) --
	(142.59, 40.89) --
	(158.17, 40.94) --
	(173.76, 40.93) --
	(189.34, 40.88);
\definecolor{drawColor}{RGB}{0,186,56}

\path[draw=drawColor,line width= 0.6pt,dash pattern=on 1pt off 3pt ,line join=round] ( 49.07,188.95) --
	( 64.66,101.30) --
	( 80.24, 65.41) --
	( 95.83, 51.17) --
	(111.42, 44.61) --
	(127.00, 42.64) --
	(142.59, 43.74) --
	(158.17, 48.12) --
	(173.76, 59.86) --
	(189.34,112.22);
\definecolor{drawColor}{RGB}{97,156,255}

\path[draw=drawColor,line width= 0.6pt,dash pattern=on 7pt off 3pt ,line join=round] ( 49.07, 41.32) --
	( 64.66, 41.74) --
	( 80.24, 41.83) --
	( 95.83, 41.37) --
	(111.42, 41.64) --
	(127.00, 41.86) --
	(142.59, 41.91) --
	(158.17, 41.99) --
	(173.76, 41.41) --
	(189.34, 41.71);
\definecolor{drawColor}{gray}{0.30}

\path[draw=drawColor,line width= 0.6pt,line join=round,line cap=round] ( 42.06, 33.48) rectangle (196.36,196.36);
\end{scope}
\begin{scope}
\path[clip] (  0.00,  0.00) rectangle (202.36,202.36);
\definecolor{drawColor}{gray}{0.30}

\node[text=drawColor,anchor=base east,inner sep=0pt, outer sep=0pt, scale=  0.96] at ( 36.66, 74.81) {0.25};

\node[text=drawColor,anchor=base east,inner sep=0pt, outer sep=0pt, scale=  0.96] at ( 36.66,120.23) {0.50};

\node[text=drawColor,anchor=base east,inner sep=0pt, outer sep=0pt, scale=  0.96] at ( 36.66,165.65) {0.75};
\end{scope}
\begin{scope}
\path[clip] (  0.00,  0.00) rectangle (202.36,202.36);
\definecolor{drawColor}{gray}{0.30}

\path[draw=drawColor,line width= 0.6pt,line join=round] ( 39.06, 78.12) --
	( 42.06, 78.12);

\path[draw=drawColor,line width= 0.6pt,line join=round] ( 39.06,123.54) --
	( 42.06,123.54);

\path[draw=drawColor,line width= 0.6pt,line join=round] ( 39.06,168.96) --
	( 42.06,168.96);
\end{scope}
\begin{scope}
\path[clip] (  0.00,  0.00) rectangle (202.36,202.36);
\definecolor{drawColor}{gray}{0.30}

\path[draw=drawColor,line width= 0.6pt,line join=round] ( 80.24, 30.48) --
	( 80.24, 33.48);

\path[draw=drawColor,line width= 0.6pt,line join=round] (119.21, 30.48) --
	(119.21, 33.48);

\path[draw=drawColor,line width= 0.6pt,line join=round] (158.17, 30.48) --
	(158.17, 33.48);
\end{scope}
\begin{scope}
\path[clip] (  0.00,  0.00) rectangle (202.36,202.36);
\definecolor{drawColor}{gray}{0.30}

\node[text=drawColor,anchor=base,inner sep=0pt, outer sep=0pt, scale=  0.96] at ( 80.24, 21.46) {-0.5};

\node[text=drawColor,anchor=base,inner sep=0pt, outer sep=0pt, scale=  0.96] at (119.21, 21.46) {0.0};

\node[text=drawColor,anchor=base,inner sep=0pt, outer sep=0pt, scale=  0.96] at (158.17, 21.46) {0.5};
\end{scope}
\begin{scope}
\path[clip] (  0.00,  0.00) rectangle (202.36,202.36);
\definecolor{drawColor}{RGB}{0,0,0}

\node[text=drawColor,anchor=base,inner sep=0pt, outer sep=0pt, scale=  1.20] at (119.21,  8.33) {$\omega_{\sl}$};
\end{scope}
\end{tikzpicture} &  & 
\begin{tikzpicture}[x=1pt,y=1pt]
\definecolor{fillColor}{RGB}{255,255,255}
\path[use as bounding box,fill=fillColor,fill opacity=0.00] (0,0) rectangle (202.36,202.36);
\begin{scope}
\path[clip] (  0.00,  0.00) rectangle (202.36,202.36);
\definecolor{drawColor}{RGB}{255,255,255}
\definecolor{fillColor}{RGB}{255,255,255}

\path[draw=drawColor,line width= 0.6pt,line join=round,line cap=round,fill=fillColor] (  0.00,  0.00) rectangle (202.36,202.36);
\end{scope}
\begin{scope}
\path[clip] ( 42.06, 33.48) rectangle (196.36,196.36);
\definecolor{fillColor}{RGB}{255,255,255}

\path[fill=fillColor] ( 42.06, 33.48) rectangle (196.36,196.36);
\definecolor{drawColor}{RGB}{248,118,109}

\path[draw=drawColor,line width= 0.6pt,line join=round] ( 70.65,184.25) --
	( 75.94,179.83) --
	( 81.35,172.57) --
	( 86.86,160.88) --
	( 92.48,142.90) --
	( 98.22,117.24) --
	(104.08, 86.11) --
	(110.06, 56.72) --
	(116.16, 40.88) --
	(122.38, 46.99) --
	(128.73, 83.81) --
	(135.21,130.87) --
	(141.82,162.92) --
	(148.57,178.98) --
	(155.45,185.61) --
	(162.48,187.95) --
	(169.65,188.60) --
	(176.96,188.86) --
	(184.43,188.94);
\definecolor{drawColor}{RGB}{0,186,56}

\path[draw=drawColor,line width= 0.6pt,dash pattern=on 1pt off 3pt ,line join=round] ( 70.65,188.67) --
	( 75.94,188.34) --
	( 81.35,187.08) --
	( 86.86,183.35) --
	( 92.48,173.87) --
	( 98.22,152.32) --
	(104.08,113.28) --
	(110.06, 68.95) --
	(116.16, 44.06) --
	(122.38, 47.64) --
	(128.73, 82.19) --
	(135.21,133.00) --
	(141.82,167.13) --
	(148.57,182.45) --
	(155.45,187.49) --
	(162.48,188.60) --
	(169.65,188.88) --
	(176.96,188.94) --
	(184.43,188.95);
\definecolor{drawColor}{RGB}{97,156,255}

\path[draw=drawColor,line width= 0.6pt,dash pattern=on 7pt off 3pt ,line join=round] ( 70.65, 80.73) --
	( 75.94, 77.73) --
	( 81.35, 73.88) --
	( 86.86, 69.06) --
	( 92.48, 63.35) --
	( 98.22, 56.92) --
	(104.08, 50.21) --
	(110.06, 44.83) --
	(116.16, 41.52) --
	(122.38, 42.14) --
	(128.73, 46.14) --
	(135.21, 52.61) --
	(141.82, 60.34) --
	(148.57, 67.54) --
	(155.45, 73.77) --
	(162.48, 78.69) --
	(169.65, 82.68) --
	(176.96, 85.59) --
	(184.43, 87.80);
\definecolor{drawColor}{gray}{0.30}

\path[draw=drawColor,line width= 0.6pt,line join=round,line cap=round] ( 42.06, 33.48) rectangle (196.36,196.36);
\end{scope}
\begin{scope}
\path[clip] (  0.00,  0.00) rectangle (202.36,202.36);
\definecolor{drawColor}{gray}{0.30}

\node[text=drawColor,anchor=base east,inner sep=0pt, outer sep=0pt, scale=  0.96] at ( 36.66, 30.58) {0.00};

\node[text=drawColor,anchor=base east,inner sep=0pt, outer sep=0pt, scale=  0.96] at ( 36.66, 69.35) {0.25};

\node[text=drawColor,anchor=base east,inner sep=0pt, outer sep=0pt, scale=  0.96] at ( 36.66,108.12) {0.50};

\node[text=drawColor,anchor=base east,inner sep=0pt, outer sep=0pt, scale=  0.96] at ( 36.66,146.89) {0.75};

\node[text=drawColor,anchor=base east,inner sep=0pt, outer sep=0pt, scale=  0.96] at ( 36.66,185.65) {1.00};
\end{scope}
\begin{scope}
\path[clip] (  0.00,  0.00) rectangle (202.36,202.36);
\definecolor{drawColor}{gray}{0.30}

\path[draw=drawColor,line width= 0.6pt,line join=round] ( 39.06, 33.89) --
	( 42.06, 33.89);

\path[draw=drawColor,line width= 0.6pt,line join=round] ( 39.06, 72.65) --
	( 42.06, 72.65);

\path[draw=drawColor,line width= 0.6pt,line join=round] ( 39.06,111.42) --
	( 42.06,111.42);

\path[draw=drawColor,line width= 0.6pt,line join=round] ( 39.06,150.19) --
	( 42.06,150.19);

\path[draw=drawColor,line width= 0.6pt,line join=round] ( 39.06,188.96) --
	( 42.06,188.96);
\end{scope}
\begin{scope}
\path[clip] (  0.00,  0.00) rectangle (202.36,202.36);
\definecolor{drawColor}{gray}{0.30}

\path[draw=drawColor,line width= 0.6pt,line join=round] ( 55.17, 30.48) --
	( 55.17, 33.48);

\path[draw=drawColor,line width= 0.6pt,line join=round] ( 85.67, 30.48) --
	( 85.67, 33.48);

\path[draw=drawColor,line width= 0.6pt,line join=round] (116.16, 30.48) --
	(116.16, 33.48);

\path[draw=drawColor,line width= 0.6pt,line join=round] (146.65, 30.48) --
	(146.65, 33.48);

\path[draw=drawColor,line width= 0.6pt,line join=round] (177.15, 30.48) --
	(177.15, 33.48);
\end{scope}
\begin{scope}
\path[clip] (  0.00,  0.00) rectangle (202.36,202.36);
\definecolor{drawColor}{gray}{0.30}

\node[text=drawColor,anchor=base,inner sep=0pt, outer sep=0pt, scale=  0.96] at ( 55.17, 21.46) {-0.10};

\node[text=drawColor,anchor=base,inner sep=0pt, outer sep=0pt, scale=  0.96] at ( 85.67, 21.46) {-0.05};

\node[text=drawColor,anchor=base,inner sep=0pt, outer sep=0pt, scale=  0.96] at (116.16, 21.46) {0.00};

\node[text=drawColor,anchor=base,inner sep=0pt, outer sep=0pt, scale=  0.96] at (146.65, 21.46) {0.05};

\node[text=drawColor,anchor=base,inner sep=0pt, outer sep=0pt, scale=  0.96] at (177.15, 21.46) {0.10};
\end{scope}
\begin{scope}
\path[clip] (  0.00,  0.00) rectangle (202.36,202.36);
\definecolor{drawColor}{RGB}{0,0,0}

\node[text=drawColor,anchor=base,inner sep=0pt, outer sep=0pt, scale=  1.20] at (119.21,  8.33) {$a-a_{0}$};
\end{scope}
\end{tikzpicture}\tabularnewline
(b) Size: $\lambda_{\lu}=0.98$. &  & (e) Power: $\lambda_{\lu}=0.98$ and $\omega_{\sl}=-0.1$.\tabularnewline\addlinespace
\begin{tikzpicture}[x=1pt,y=1pt]
\definecolor{fillColor}{RGB}{255,255,255}
\path[use as bounding box,fill=fillColor,fill opacity=0.00] (0,0) rectangle (202.36,202.36);
\begin{scope}
\path[clip] (  0.00,  0.00) rectangle (202.36,202.36);
\definecolor{drawColor}{RGB}{255,255,255}
\definecolor{fillColor}{RGB}{255,255,255}

\path[draw=drawColor,line width= 0.6pt,line join=round,line cap=round,fill=fillColor] (  0.00,  0.00) rectangle (202.36,202.36);
\end{scope}
\begin{scope}
\path[clip] ( 46.86, 33.48) rectangle (196.36,196.36);
\definecolor{fillColor}{RGB}{255,255,255}

\path[fill=fillColor] ( 46.86, 33.48) rectangle (196.36,196.36);
\definecolor{drawColor}{RGB}{248,118,109}

\path[draw=drawColor,line width= 0.6pt,line join=round] ( 53.65,108.03) --
	( 68.75,107.66) --
	( 83.86,107.59) --
	( 98.96,107.73) --
	(114.06,108.47) --
	(129.16,108.03) --
	(144.26,107.66) --
	(159.36,108.10) --
	(174.46,107.88) --
	(189.56,108.10);
\definecolor{drawColor}{RGB}{0,186,56}

\path[draw=drawColor,line width= 0.6pt,dash pattern=on 1pt off 3pt ,line join=round] ( 53.65,118.03) --
	( 68.75,122.99) --
	( 83.86,120.32) --
	( 98.96,121.88) --
	(114.06,120.76) --
	(129.16,121.28) --
	(144.26,122.39) --
	(159.36,126.39) --
	(174.46,117.06) --
	(189.56,123.13);
\definecolor{drawColor}{RGB}{97,156,255}

\path[draw=drawColor,line width= 0.6pt,dash pattern=on 7pt off 3pt ,line join=round] ( 53.65,117.43) --
	( 68.75,117.88) --
	( 83.86,115.43) --
	( 98.96,114.18) --
	(114.06,112.62) --
	(129.16,114.62) --
	(144.26,118.25) --
	(159.36,116.32) --
	(174.46,110.40) --
	(189.56,113.73);
\definecolor{drawColor}{gray}{0.30}

\path[draw=drawColor,line width= 0.6pt,line join=round,line cap=round] ( 46.86, 33.48) rectangle (196.36,196.36);
\end{scope}
\begin{scope}
\path[clip] (  0.00,  0.00) rectangle (202.36,202.36);
\definecolor{drawColor}{gray}{0.30}

\node[text=drawColor,anchor=base east,inner sep=0pt, outer sep=0pt, scale=  0.96] at ( 41.46, 37.57) {0.000};

\node[text=drawColor,anchor=base east,inner sep=0pt, outer sep=0pt, scale=  0.96] at ( 41.46, 74.59) {0.025};

\node[text=drawColor,anchor=base east,inner sep=0pt, outer sep=0pt, scale=  0.96] at ( 41.46,111.61) {0.050};

\node[text=drawColor,anchor=base east,inner sep=0pt, outer sep=0pt, scale=  0.96] at ( 41.46,148.63) {0.075};

\node[text=drawColor,anchor=base east,inner sep=0pt, outer sep=0pt, scale=  0.96] at ( 41.46,185.65) {0.100};
\end{scope}
\begin{scope}
\path[clip] (  0.00,  0.00) rectangle (202.36,202.36);
\definecolor{drawColor}{gray}{0.30}

\path[draw=drawColor,line width= 0.6pt,line join=round] ( 43.86, 40.88) --
	( 46.86, 40.88);

\path[draw=drawColor,line width= 0.6pt,line join=round] ( 43.86, 77.90) --
	( 46.86, 77.90);

\path[draw=drawColor,line width= 0.6pt,line join=round] ( 43.86,114.92) --
	( 46.86,114.92);

\path[draw=drawColor,line width= 0.6pt,line join=round] ( 43.86,151.93) --
	( 46.86,151.93);

\path[draw=drawColor,line width= 0.6pt,line join=round] ( 43.86,188.95) --
	( 46.86,188.95);
\end{scope}
\begin{scope}
\path[clip] (  0.00,  0.00) rectangle (202.36,202.36);
\definecolor{drawColor}{gray}{0.30}

\path[draw=drawColor,line width= 0.6pt,line join=round] ( 83.86, 30.48) --
	( 83.86, 33.48);

\path[draw=drawColor,line width= 0.6pt,line join=round] (121.61, 30.48) --
	(121.61, 33.48);

\path[draw=drawColor,line width= 0.6pt,line join=round] (159.36, 30.48) --
	(159.36, 33.48);
\end{scope}
\begin{scope}
\path[clip] (  0.00,  0.00) rectangle (202.36,202.36);
\definecolor{drawColor}{gray}{0.30}

\node[text=drawColor,anchor=base,inner sep=0pt, outer sep=0pt, scale=  0.96] at ( 83.86, 21.46) {-0.5};

\node[text=drawColor,anchor=base,inner sep=0pt, outer sep=0pt, scale=  0.96] at (121.61, 21.46) {0.0};

\node[text=drawColor,anchor=base,inner sep=0pt, outer sep=0pt, scale=  0.96] at (159.36, 21.46) {0.5};
\end{scope}
\begin{scope}
\path[clip] (  0.00,  0.00) rectangle (202.36,202.36);
\definecolor{drawColor}{RGB}{0,0,0}

\node[text=drawColor,anchor=base,inner sep=0pt, outer sep=0pt, scale=  1.20] at (121.61,  8.33) {$\omega_{\sl}$};
\end{scope}
\end{tikzpicture} &  & 
\begin{tikzpicture}[x=1pt,y=1pt]
\definecolor{fillColor}{RGB}{255,255,255}
\path[use as bounding box,fill=fillColor,fill opacity=0.00] (0,0) rectangle (202.36,202.36);
\begin{scope}
\path[clip] (  0.00,  0.00) rectangle (202.36,202.36);
\definecolor{drawColor}{RGB}{255,255,255}
\definecolor{fillColor}{RGB}{255,255,255}

\path[draw=drawColor,line width= 0.6pt,line join=round,line cap=round,fill=fillColor] (  0.00,  0.00) rectangle (202.36,202.36);
\end{scope}
\begin{scope}
\path[clip] ( 42.06, 33.48) rectangle (196.36,196.36);
\definecolor{fillColor}{RGB}{255,255,255}

\path[fill=fillColor] ( 42.06, 33.48) rectangle (196.36,196.36);
\definecolor{drawColor}{RGB}{248,118,109}

\path[draw=drawColor,line width= 0.6pt,line join=round] ( 70.65,188.51) --
	( 75.94,188.02) --
	( 81.35,187.20) --
	( 86.86,185.14) --
	( 92.48,180.81) --
	( 98.22,169.88) --
	(104.08,138.61) --
	(110.06, 73.15) --
	(116.16, 40.88) --
	(122.38, 54.98) --
	(128.73,109.11) --
	(135.21,150.73) --
	(141.82,171.29) --
	(148.57,181.39) --
	(155.45,185.93) --
	(162.48,187.80) --
	(169.65,188.60) --
	(176.96,188.87) --
	(184.43,188.94);
\definecolor{drawColor}{RGB}{0,186,56}

\path[draw=drawColor,line width= 0.6pt,dash pattern=on 1pt off 3pt ,line join=round] ( 70.65,188.87) --
	( 75.94,188.66) --
	( 81.35,188.08) --
	( 86.86,186.40) --
	( 92.48,181.69) --
	( 98.22,168.53) --
	(104.08,136.39) --
	(110.06, 78.36) --
	(116.16, 42.79) --
	(122.38, 80.89) --
	(128.73,140.25) --
	(135.21,172.10) --
	(141.82,183.83) --
	(148.57,187.38) --
	(155.45,188.48) --
	(162.48,188.80) --
	(169.65,188.93) --
	(176.96,188.94) --
	(184.43,188.95);
\definecolor{drawColor}{RGB}{97,156,255}

\path[draw=drawColor,line width= 0.6pt,dash pattern=on 7pt off 3pt ,line join=round] ( 70.65,154.34) --
	( 75.94,150.91) --
	( 81.35,146.67) --
	( 86.86,139.67) --
	( 92.48,129.12) --
	( 98.22,113.29) --
	(104.08, 88.67) --
	(110.06, 56.57) --
	(116.16, 41.74) --
	(122.38, 58.22) --
	(128.73, 90.69) --
	(135.21,116.07) --
	(141.82,132.67) --
	(148.57,143.32) --
	(155.45,150.20) --
	(162.48,154.42) --
	(169.65,157.50) --
	(176.96,159.70) --
	(184.43,161.44);
\definecolor{drawColor}{gray}{0.30}

\path[draw=drawColor,line width= 0.6pt,line join=round,line cap=round] ( 42.06, 33.48) rectangle (196.36,196.36);
\end{scope}
\begin{scope}
\path[clip] (  0.00,  0.00) rectangle (202.36,202.36);
\definecolor{drawColor}{gray}{0.30}

\node[text=drawColor,anchor=base east,inner sep=0pt, outer sep=0pt, scale=  0.96] at ( 36.66, 30.53) {0.00};

\node[text=drawColor,anchor=base east,inner sep=0pt, outer sep=0pt, scale=  0.96] at ( 36.66, 69.31) {0.25};

\node[text=drawColor,anchor=base east,inner sep=0pt, outer sep=0pt, scale=  0.96] at ( 36.66,108.09) {0.50};

\node[text=drawColor,anchor=base east,inner sep=0pt, outer sep=0pt, scale=  0.96] at ( 36.66,146.87) {0.75};

\node[text=drawColor,anchor=base east,inner sep=0pt, outer sep=0pt, scale=  0.96] at ( 36.66,185.65) {1.00};
\end{scope}
\begin{scope}
\path[clip] (  0.00,  0.00) rectangle (202.36,202.36);
\definecolor{drawColor}{gray}{0.30}

\path[draw=drawColor,line width= 0.6pt,line join=round] ( 39.06, 33.84) --
	( 42.06, 33.84);

\path[draw=drawColor,line width= 0.6pt,line join=round] ( 39.06, 72.62) --
	( 42.06, 72.62);

\path[draw=drawColor,line width= 0.6pt,line join=round] ( 39.06,111.39) --
	( 42.06,111.39);

\path[draw=drawColor,line width= 0.6pt,line join=round] ( 39.06,150.17) --
	( 42.06,150.17);

\path[draw=drawColor,line width= 0.6pt,line join=round] ( 39.06,188.95) --
	( 42.06,188.95);
\end{scope}
\begin{scope}
\path[clip] (  0.00,  0.00) rectangle (202.36,202.36);
\definecolor{drawColor}{gray}{0.30}

\path[draw=drawColor,line width= 0.6pt,line join=round] ( 55.17, 30.48) --
	( 55.17, 33.48);

\path[draw=drawColor,line width= 0.6pt,line join=round] ( 85.67, 30.48) --
	( 85.67, 33.48);

\path[draw=drawColor,line width= 0.6pt,line join=round] (116.16, 30.48) --
	(116.16, 33.48);

\path[draw=drawColor,line width= 0.6pt,line join=round] (146.65, 30.48) --
	(146.65, 33.48);

\path[draw=drawColor,line width= 0.6pt,line join=round] (177.15, 30.48) --
	(177.15, 33.48);
\end{scope}
\begin{scope}
\path[clip] (  0.00,  0.00) rectangle (202.36,202.36);
\definecolor{drawColor}{gray}{0.30}

\node[text=drawColor,anchor=base,inner sep=0pt, outer sep=0pt, scale=  0.96] at ( 55.17, 21.46) {-0.10};

\node[text=drawColor,anchor=base,inner sep=0pt, outer sep=0pt, scale=  0.96] at ( 85.67, 21.46) {-0.05};

\node[text=drawColor,anchor=base,inner sep=0pt, outer sep=0pt, scale=  0.96] at (116.16, 21.46) {0.00};

\node[text=drawColor,anchor=base,inner sep=0pt, outer sep=0pt, scale=  0.96] at (146.65, 21.46) {0.05};

\node[text=drawColor,anchor=base,inner sep=0pt, outer sep=0pt, scale=  0.96] at (177.15, 21.46) {0.10};
\end{scope}
\begin{scope}
\path[clip] (  0.00,  0.00) rectangle (202.36,202.36);
\definecolor{drawColor}{RGB}{0,0,0}

\node[text=drawColor,anchor=base,inner sep=0pt, outer sep=0pt, scale=  1.20] at (119.21,  8.33) {$a-a_{0}$};
\end{scope}
\end{tikzpicture}\tabularnewline
(c) Size: $\lambda_{\lu}=1$. &  & (f) Power: $\lambda_{\lu}=1$ and $\omega_{\sl}=0.5$.\tabularnewline
\end{tabular}

\end{adjustwidth}

\caption{Rejection probabilities under the simulation design of \secref{simulations},
for \\ $\protect\np_{n}(a)$ (solid line), $\protect\lfst_{n}(a)$
(dashed line) and $\protect\lr_{n}(a;1)$ (dotted line)}

\label{fig:siml}
\end{figure}

The role of $\omega_{\sl}$ in regulating the extent of the size distortion
in $\lr_{n}(a;1)$ is clearly evident in panels~(a)--(c) of \figref{siml}.
That the size distortion disappears at $\omega_{\sl}=0$ is entirely
consistent with \thmref{emw}, which implies that in this case the
Hessian of the limiting loglikelihood is diagonal, and that the part
of the LR process that refers to $a$ is LAMN. Relative to the test
$\lr_{n}(a;I_{q})$ that is efficient in the presence of a unit root,
our test entirely avoids the size distortions that this test is prone
to, while giving up very little power, as is evident in panels (d)--(f).
$\lfst_{n}(a)$ also exhibits perfect size control, but the power
sacrificed for the sake of greater robustness -- with respect to
other departures from the VAR with unit roots -- is clearly evidenced
by its flatter power curves.

\subsection{Application to the expectations theory of the term structure}

\label{sec:empirical}

In \subsecref{long-run-pred} above, we discussed the expectations
theory of the term structure: in particular, how the predictions made
about cointegrating relations in a VAR with exact unit roots, could
be generalised to the quasi-cointegrating relations implied by a VAR
with some roots near unity. To evaluate these predictions empirically,
we estimated a bivariate VAR (with an intercept only) using quarterly
data on 1- and 10-year US Treasury bond yields 1953:Q2--2011:Q3,
as plotted in \figref{bond-rates-quarterly}(a). The Akaike, Bayesian,
and Hannan-Quinn information criteria  agreed on a lag length of
8, and the dominant characteristic root is estimated to be $0.983$.
Tests for cointegrating rank (using the trace test of \citealp{Joh95})
comfortably reject the null of a cointegrating rank of zero but do
not reject a cointegrating rank of one, at conventional significance
levels, which we take as evidence in favour of the presence of one
root in the vicinity of unity.

Consistent with the discussion in \subsecref{qcs}, when specifying
a parameter space $\set L=[\radius,1]$ for the dominant root, we
take $\blw h=8\times4=32$ quarters to be consistent with a lower
estimate of the average duration of the US business cycle, which yields
$\radius=2^{-1/\blw h}=0.979$. The upper panel of \figref{bond-rates-quarterly}(b)
plots the value of the maximised likelihood conditional on a range
of values for the dominant root, $\lambda_{\lu}$, over this interval.
The lower panel plots values of the coefficient on the $1$-year bond
in the normalised quasi-cointegrating vector $\beta$ with the coefficient
on the $10$-year bond being normalised to unity. The dashed line
reports the model-implied value $a_{10}(\lambda_{\lu})$ of this parameter
(see \eqref{am} above). For each value of the root in the upper panel,
the green error bars in the lower panel give the corresponding 95
per cent conditional confidence intervals $\ci_{a\mid\lambda_{\lu}}$
(see \eqref{condci} above), which are based on the efficient test
when the imposed value of $\lambda_{\lu}$ is correct; their midpoints
are given by the dotted line. Both dashed and dotted lines lie remarkably
close to each other, suggesting that the data accord well with the
predictions of the expectations theory, irrespective of the actual
value of $\lambda_{\lu}$. Finally, using the nearly optimal test
$\np_{n}$ gives a 95 per cent confidence interval of $[0.87,1.07]$,
as compared with that of $[0.92,1.13]$ when an exact unit root is
imposed.

\begin{figure}
\begin{adjustwidth}{-2cm}{-2cm}\centering\vspace*{-0.8cm}

\begin{tabular}{c}
\input{figures/term_spread_raw.tex}\tabularnewline
(a) US treasury bond yields: 1-year (solid line) and 10-year (dotted
line)\tabularnewline
\input{figures/conditional_estimates_w_marg_intval.tex}\tabularnewline
(b) Upper panel: concentrated loglikelihood at imposed characteristic
root $\lambda_{\lu}$. \\Bottom panel: 95\% conditional intervals
$\ci_{a\mid\lambda_{\lu}}$ (green error bars); $\np_{n}$ interval
(horizontal lines); \\ midpoints of $\ci_{a\mid\lambda_{\lu}}$ (dotted
green line); theory-implied value of $a_{10}(\lambda_{\lu})$ (dashed
blue line)\tabularnewline
\end{tabular}

\end{adjustwidth}

\caption{\label{fig:bond-rates-quarterly}Expectations theory of term spread.}
\end{figure}

\section{Conclusion}

This paper was motivated by \citeauthor{Ell98Ecta}'s \citeyearpar{Ell98Ecta}
finding that inference on cointegrating relationships are highly sensitive
to departures from the assumption of exact unit roots. We have argued
that this problem is as much one of (non-)identification as it is
of inference, because of the manner in which the conventional definitions
of cointegration break down in the absence of exact unit roots. We
have therefore developed an alternative characterisation of cointegration
in an SVAR, in which the long-run equilibrium relationships between
the series are identified by those directions for which the impulse
responses decay (relatively) most rapidly. With $q$ roots at unity,
this exactly recovers the $r$-dimensional cointegrating space --
and when these roots merely near unity, there remains a well-defined
$q$-dimensional quasi-cointegrating space. While this is not the
only possible way of extending `cointegration' to a wider domain,
a conceptual advantage of the approach taken here is that it maintains
the duality that exists, in an SVAR with exact unit roots, between
the identification of the long-run equilibrium relationships between
the series, and of the subvector of structural shocks whose common
persistent effects underpin those relationships.

Likelihood-based inference on the (quasi-)cointegrating relationships
is affected by nuisance parameters corresponding to the proximity
of the dominant $q$ roots to unity. We have shown that this problem
is not merely reminiscent of inference in a predictive regression,
but in fact asymptotically equivalent in the sense of sharing a common
limiting experiment. Our problem also falls within the class of problems
studied by \citet{EMW15Ecta}, and we have found that, in practice,
tests with excellent size and power properties can be developed by
adapting their approach to the present setting.

\section{References}

{\singlespacing

\bibliographystyle{ecta}
\bibliography{time-series,asymptotics}
}

\newpage{}

\appendix

\part*{Appendices}

\addcontentsline{toc}{part}{Appendices}
\begin{notation*}
For $x\in\reals^{p}$ and $A\in\reals^{p\times p}$, $\smlnorm x$
denotes the Euclidean norm and $\smlnorm A\defeq\sup_{\smlnorm x=1}\smlnorm{Ax}$
the induced matrix norm.
\end{notation*}

\section{Representation theory}

\label{app:qcs-results}

This section provides results that support some of the assertions
made in the course of Sections~\ref{sec:cointegration} and \ref{sec:estimation},
and which are auxiliary to results proved in the following appendices.
Some are well known, but are collected here for ease of reference.
Proofs follow at the end of this appendix. \assref{DGP} is maintained
throughout.

For VAR coefficients $\PHI\defeq(\Phi_{1},\ldots,\Phi_{k})\in\reals^{p\times kp}$,
let
\begin{equation}
F\defeq F(\PHI)\defeq\begin{bmatrix}\Phi_{1} & \Phi_{2} & \cdots & \Phi_{k-1} & \Phi_{k}\\
I & 0 & \cdots & 0 & 0\\
0 & I & \cdots & 0 & 0\\
\vdots & \vdots & \ddots & \vdots & \vdots\\
0 & 0 & \cdots & I & 0
\end{bmatrix}\label{eq:companion}
\end{equation}
denote the associated companion form matrix. For a collection of $m\times n$
matrices $Z_{1},\ldots,Z_{k}$, let 
\[
\col\{Z_{i}\}_{i=1}^{k}\defeq\begin{bmatrix}Z_{1}\\
\vdots\\
Z_{k}
\end{bmatrix},
\]
so that taking $\x_{t}\defeq\col\{x_{t-i}\}_{i=0}^{k-1}$, we may
write \eqref{dgp} as
\begin{equation}
\x_{t}=F\x_{t-1}+\begin{bmatrix}\err_{t}\\
0_{(k-1)p\times1}
\end{bmatrix}\eqdef F\x_{t-1}+\Err_{t}\label{eq:compVAR}
\end{equation}
Let $\lambda_{i}(\PHI)$ denote the $i$th root of the characteristic
polynomial associated to $\PHI$, when these are placed in descending
order of modulus.
\begin{lem}
\label{lem:GLR}Suppose that $\smlabs{\lambda_{q}(\PHI)}>\smlabs{\lambda_{q+1}(\PHI)}$
for some $q\in\{1,\ldots,p\}$. Then there exist there matrices $R\in\reals^{p\times kp}$,
$\Lambda\in\reals^{kp\times kp}$ and $L\in\reals^{p\times kp}$ such
that:
\begin{enumerate}
\item \label{enu:GLR:split}$\Lambda=\diag\{\Lambda_{\lu},\Lambda_{\st}\}$,
where the eigenvalues of $\Lambda_{\lu}\in\reals^{q\times q}$ and
$\Lambda_{\st}$ are $\{\lambda_{i}(\PHI)\}_{i=1}^{q}$ and $\{\lambda_{i}(\PHI)\}_{i=q+1}^{kp}$
respectively;
\item \label{enu:GLR:subspc}the following hold:
\begin{align}
R\Lambda^{k}-\sum_{i=1}^{k}\Phi_{i}R\Lambda^{k-i} & =0 & \Lambda^{k}L^{\trans}-\sum_{i=1}^{k}\Lambda^{k-i}L^{\trans}\Phi_{i} & =0.\label{eq:GLR:RL}
\end{align}
\item \label{enu:GLR:L}$\R\defeq\col\{R\Lambda^{k-i}\}_{i=1}^{k}$ is invertible,
and $L$ equals the first $p$ rows of $\L\defeq(\R^{-1})^{\trans}$;
\item \label{enu:GLR:jordanF}$F(\PHI)=\R\Lambda\L^{\trans}$; and
\item \label{enu:GLR:IRF} in the model \eqref{dgp}, $\irf_{s}^{\err}\defeq\partial y_{t+s}/\partial\err_{t}=R\Lambda^{k-1+s}L^{\trans}$
for $s\geq1$.
\end{enumerate}
Further, the matrices $R^{\ast}\in\reals^{p\times kp},$ $\Lambda^{\ast}\in\reals^{kp\times kp}$
and $L^{\ast}\in\reals^{p\times kp}$ satisfy conditions \enuref{GLR:split}--\enuref{GLR:IRF}
if and only if there exists an invertible $kp\times kp$ matrix $Q=\diag\{Q_{\lu},Q_{\st}\}$,
where $Q_{\lu}\in\reals^{q\times q}$, such that $R^{\ast}=RQ$, $\Lambda^{\ast}=Q^{-1}\Lambda Q$
and $L^{\ast}=L(Q^{\trans})^{-1}$.
\end{lem}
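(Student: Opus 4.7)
The plan is to apply a real spectral decomposition to the companion matrix $F(\PHI)$ in \eqref{companion}, and then exploit the block structure of $F$ to read off the shapes of $\R$, $L$, and the implied IRF. The eigenvalues of $F$ coincide with the roots of $\det\Phi(\lambda)$ counted with multiplicity, so the hypothesis $\smlabs{\lambda_q(\PHI)}>\smlabs{\lambda_{q+1}(\PHI)}$ splits the spectrum of $F$ into a `dominant' set of size $q$ and a `subordinate' set of size $kp-q$. Each set is closed under complex conjugation (conjugate eigenvalues share a modulus), so standard real invariant subspace theory yields a real invertible $\R\in\reals^{kp\times kp}$ and a real block diagonal $\Lambda=\diag\{\Lambda_{\lu},\Lambda_{\st}\}$ carrying these two eigenvalue sets, with $F\R=\R\Lambda$. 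Setting $\L\defeq(\R^{-1})^{\trans}$ yields $F=\R\Lambda\L^{\trans}$ with $\L^{\trans}\R=I$, establishing \enuref{GLR:split} and \enuref{GLR:jordanF}.

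Next I would derive the structural forms of $\R$ and $L$ by exploiting the companion structure. Partitioning $\R$ into $k$ row blocks $\R=\col\{\R_i\}_{i=1}^{k}$ with $\R_i\in\reals^{p\times kp}$, the identity rows of $F$ turn $F\R=\R\Lambda$ into the recursions $\R_j=\R_{j+1}\Lambda$ for $j=1,\ldots,k-1$; setting $R\defeq\R_k$ and iterating gives $\R_i=R\Lambda^{k-i}$, which is \enuref{GLR:L}. Substituting this into the top block row of $F\R=\R\Lambda$ produces the first identity in \eqref{GLR:RL}. The dual identity for $L$ is extracted analogously from $\L^{\trans}F=\Lambda\L^{\trans}$ (which follows from $F=\R\Lambda\L^{\trans}$ and $\L^{\trans}\R=I$): partitioning $\L^{\trans}$ into $k$ column blocks $M_j\in\reals^{kp\times p}$, with $M_1=L^{\trans}$ since $L$ is the top $p$ rows of $\L$, and matching block columns to the column-block structure of $F$ in \eqref{companion}, yields the recursion $M_{j+1}=\Lambda M_j-L^{\trans}\Phi_j$ for $j<k$ with terminal condition $\Lambda M_k=L^{\trans}\Phi_k$. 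Solving the recursion and substituting into the terminal condition recovers $\Lambda^{k}L^{\trans}=\sum_{i=1}^{k}\Lambda^{k-i}L^{\trans}\Phi_i$.

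Part \enuref{GLR:IRF} is then immediate by iterating the companion form \eqref{compVAR}: $\partial\x_{t+s}/\partial\err_t$ equals the first $p$ columns of $F^s=\R\Lambda^{s}\L^{\trans}$, and extracting their top $p$ rows gives $\R_1\Lambda^{s}L^{\trans}=R\Lambda^{k-1+s}L^{\trans}$, using $\R_1=R\Lambda^{k-1}$ together with the fact that $L^{\trans}$ coincides with the first $p$ columns of $\L^{\trans}$. For the uniqueness clause, any alternative triple $(R^{\ast},\Lambda^{\ast},L^{\ast})$ satisfying \enuref{GLR:split}--\enuref{GLR:IRF} produces, via the same construction, a companion factorisation $F=\R^{\ast}\Lambda^{\ast}(\L^{\ast})^{\trans}$ whose dominant and subordinate invariant subspaces coincide block-by-block with those of $\R$. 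The only residual freedom is a change of basis within each block, i.e.\ a block diagonal $Q=\diag\{Q_{\lu},Q_{\st}\}$ with $\R^{\ast}=\R Q$, from which $R^{\ast}=RQ$, $\Lambda^{\ast}=Q^{-1}\Lambda Q$ and $L^{\ast}=L(Q^{\trans})^{-1}$ read off.

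The main technical obstacle is the bookkeeping of row- and column-block partitions of $\R$ and $\L$, combined with preserving reality of $\R$, $\Lambda$ and $L$ throughout. Reality is secured by the fact that each spectral set is individually closed under complex conjugation --- a point that would require extra care if the separation hypothesis allowed $\lambda_q$ and $\lambda_{q+1}$ to be a conjugate pair, but which is automatic under the strict modulus inequality assumed here.
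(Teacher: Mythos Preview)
Your proposal is correct and follows the same overall strategy as the paper: real block-diagonalisation of the companion matrix $F$, then extraction of $R$ and $L$ from its block structure. The execution differs only in packaging. The paper appeals to the matrix-polynomial theory of \citet{GLR82}: taking $P^{-1}FP=J$ with $J$ in real Jordan form, it sets $X$ equal to the last $p$ rows of $P$ and invokes the `standard pair' results (their Thms~1.24--1.25 and Prop.~2.1) to obtain $XJ^{k}=\sum_i\Phi_i XJ^{k-i}$, $\col\{XJ^{k-i}\}_{i=1}^{k}=P$, and the dual identity for $Y$ in one stroke. You instead derive both identities from scratch via the block recursions $\R_j=\R_{j+1}\Lambda$ and $M_{j+1}=\Lambda M_j-L^{\trans}\Phi_j$ that the identity rows and column-blocks of $F$ force on any similarity matrix. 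Your route is more elementary and self-contained; the paper's is terser but leans on an external reference. The IRF computation and the uniqueness argument (via uniqueness of the spectral invariant subspaces, hence freedom only up to a block-diagonal change of basis $Q$) are essentially identical in both.
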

For a given $\PHI$, and its associated companion form $F=F(\PHI)$,
we shall routinely partition the matrices appearing in \lemref{GLR}
as
\begin{align}
R & \defeq[R_{\lu},R_{\st}] & \R & \defeq[\R_{\lu},\R_{\st}] & L & \defeq[L_{\lu},L_{\st}] & \L & \defeq[\L_{\lu},\L_{\st}]\label{eq:partition}
\end{align}
where each of $R_{\lu}$, $\R_{\lu}$, $L_{\lu}$ and $\L_{\lu}$
have $q$ columns, i.e.\ the partitioning is conformable with that
of $\Lambda=\diag\{\Lambda_{\lu},\Lambda_{\st}\}$. This partitioning,
in conjunction with parts~\enuref{GLR:subspc} and \enuref{GLR:IRF}
of the preceding lemma, yields \eqref{eig-eig} and \eqref{irfdecomp}
above. Moreover, we may write part~\enuref{GLR:jordanF} as
\begin{equation}
F=\R\Lambda\L^{\trans}=\R_{\lu}\Lambda_{\lu}\L_{\lu}^{\trans}+\R_{\st}\Lambda_{\st}\L_{\st}^{\trans}\label{eq:invsubdecomp}
\end{equation}
which decomposes $F$ with respect to the invariant subspaces associated
to the eigenvalues of $\Lambda_{\lu}$ and $\Lambda_{\st}$.
\begin{lem}
\label{lem:GLR-converse} Suppose that $\smlabs{\lambda_{q}(\PHI)}>\smlabs{\lambda_{q+1}(\PHI)}$
for some $q\in\{1,\ldots,p\}$, the eigenvalues of $\Lambda_{0}\in\reals^{q\times q}$
are all greater than $\smlabs{\lambda_{q+1}(\PHI)}$ in modulus, and
$R_{0}\in\reals^{p\times q}$ is a full column rank matrix such that
\begin{equation}
R_{0}\Lambda_{0}^{k}-\sum_{i=1}^{k}\Phi_{i}R_{0}\Lambda_{0}^{k-i}=0.\label{eq:R0cond}
\end{equation}
Then there exist matrices $R=[R_{\lu},R_{\st}]$, $\Lambda=\diag\{\Lambda_{\lu},\Lambda_{\st}\}$
and $L$ satisfying the conditions of \lemref{GLR}, with $R_{\lu}=R_{0}$
and $\Lambda_{\lu}=\Lambda_{0}$.
\end{lem}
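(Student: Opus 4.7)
The plan is to rephrase the hypothesis on $(R_{0},\Lambda_{0})$ as the statement that a certain block-stacked matrix spans a $q$-dimensional $F$-invariant subspace of $\reals^{kp}$, identify that subspace with the one furnished by \lemref{GLR}, and then invoke the uniqueness clause of \lemref{GLR} to convert a reference decomposition into one with $R_{\lu}=R_{0}$ and $\Lambda_{\lu}=\Lambda_{0}$.

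First I would define $\R_{0}\defeq\col\{R_{0}\Lambda_{0}^{k-i}\}_{i=1}^{k}$ and observe, from the block structure of the companion matrix $F=F(\PHI)$, that \eqref{R0cond} is equivalent to $F\R_{0}=\R_{0}\Lambda_{0}$. Since the bottom block of $\R_{0}$ is $R_{0}$, which has full column rank $q$, so does $\R_{0}$; hence $\spn\R_{0}$ is a $q$-dimensional $F$-invariant subspace on which $F$ acts with eigenvalues equal to those of $\Lambda_{0}$, all of modulus strictly greater than $\smlabs{\lambda_{q+1}(\PHI)}$.

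Next I would apply \lemref{GLR} to $\PHI$ to obtain a reference triple $(R^{\ast},\Lambda^{\ast},L^{\ast})$ with $\Lambda^{\ast}=\diag\{\Lambda^{\ast}_{\lu},\Lambda^{\ast}_{\st}\}$, and let $\R^{\ast}_{\lu}$ denote the first $q$ columns of $\R^{\ast}$, so that $F\R^{\ast}_{\lu}=\R^{\ast}_{\lu}\Lambda^{\ast}_{\lu}$. The main step is then to show $\spn\R_{0}=\spn\R^{\ast}_{\lu}$: the spectral gap $\smlabs{\lambda_{q}(\PHI)}>\smlabs{\lambda_{q+1}(\PHI)}$ splits the eigenvalues of $F$ disjointly by modulus, and $\spn\R^{\ast}_{\lu}$ is precisely the real generalised eigenspace of $F$ associated with the top $q$ eigenvalues. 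Every eigenvalue of $F$ restricted to $\spn\R_{0}$ lies in this top-$q$ spectral component, so a standard spectral-projector argument places $\spn\R_{0}$ inside $\spn\R^{\ast}_{\lu}$, and equality follows from matching dimensions. (Here one uses that the top-$q$ spectral set is closed under complex conjugation---guaranteed by $F$ being real and conjugate pairs having equal modulus---so that the generalised eigenspace really is a real subspace of $\reals^{kp}$.) I expect this identification of the two invariant subspaces to be the main obstacle.

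Equality of spans then delivers a unique invertible $Q_{\lu}\in\reals^{q\times q}$ with $\R^{\ast}_{\lu}Q_{\lu}=\R_{0}$. Reading off the last block row gives $R^{\ast}_{\lu}Q_{\lu}=R_{0}$, while substituting $\R_{0}=\R^{\ast}_{\lu}Q_{\lu}$ into $F\R_{0}=\R_{0}\Lambda_{0}$ and using $F\R^{\ast}_{\lu}=\R^{\ast}_{\lu}\Lambda^{\ast}_{\lu}$ yields $\R^{\ast}_{\lu}(\Lambda^{\ast}_{\lu}Q_{\lu}-Q_{\lu}\Lambda_{0})=0$; the full column rank of $\R^{\ast}_{\lu}$ then forces $Q_{\lu}^{-1}\Lambda^{\ast}_{\lu}Q_{\lu}=\Lambda_{0}$. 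Setting $Q\defeq\diag\{Q_{\lu},I_{kp-q}\}$ and defining $R\defeq R^{\ast}Q$, $\Lambda\defeq Q^{-1}\Lambda^{\ast}Q$, $L\defeq L^{\ast}(Q^{\trans})^{-1}$, the ``if'' direction of the uniqueness clause of \lemref{GLR} guarantees that $(R,\Lambda,L)$ satisfies conditions \enuref{GLR:split}--\enuref{GLR:IRF}; and by construction $R_{\lu}=R_{0}$ and $\Lambda_{\lu}=\Lambda_{0}$, as required.
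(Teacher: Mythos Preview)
Your proof is correct and follows essentially the same approach as the paper's: both recognise that $\R_{0}\defeq\col\{R_{0}\Lambda_{0}^{k-i}\}_{i=1}^{k}$ spans a $q$-dimensional $F$-invariant subspace associated with the top-$q$ eigenvalues, and build the required decomposition around it. The only difference is in packaging: the paper invokes the theory of simple invariant subspaces (Stewart--Sun, Thm.~V.1.5) to extend $\R_{0}$ directly to a full block decomposition $F=\R\Lambda\L^{\trans}$, then appeals to the arguments of \lemref{GLR}; you instead apply \lemref{GLR} first to obtain a reference triple, identify $\spn\R_{0}=\spn\R^{\ast}_{\lu}$ via the spectral gap, and then invoke the uniqueness clause of \lemref{GLR} to rotate that reference triple into one with $R_{\lu}=R_{0}$ and $\Lambda_{\lu}=\Lambda_{0}$. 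Both routes are equally valid; yours is slightly more self-contained in that it avoids an external reference, at the cost of the extra step of matching the two invariant subspaces.
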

For the next result, recall the definition of $S_{r}$ given in the
context of \eqref{relative-irf} above. By showing that $S_{r}=(\spn R_{\lu})^{\perp}$,
we substantiate the claim made in the \subsecref{coint-ur} that $S_{r}$
is invariant to the identification of the structural shocks $w_{t}$.
$S_{r}$ may thus be equivalently defined with $\irf_{s}^{\err}$
taking the place of $\irf_{s}^{w}$ in \eqref{relative-irf}.
\begin{lem}
\label{lem:qcs}~
\begin{enumerate}
\item If \assref{QC} holds for some $\radius\in(0,1]$, then $S_{r}=(\spn R_{\lu})^{\perp}$.
\item \label{enu:qcs:cvar}If \assref{J} holds, then $\cs=S_{r}=(\spn R_{\lu})^{\perp}$,
and \assref{QC} holds with $\radius=1$.
\end{enumerate}
\end{lem}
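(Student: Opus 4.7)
The plan is to extract from Lemma~\ref{lem:GLR} the decomposition
\[
\irf_s = R_{\lu}\Lambda_{\lu}^{k-1+s}L_{\lu}^{\trans} + R_{\st}\Lambda_{\st}^{k-1+s}L_{\st}^{\trans},
\]
and to compare the decay rates of the two summands. For part~(i), the two rate estimates I need are: first, because the eigenvalues of $\Lambda_{\st}$ lie in $\eigs_{\st}^{\radius}$, its spectral radius $\radius_{\st}$ is strictly less than $\radius$, so $\smlnorm{R_{\st}\Lambda_{\st}^{k-1+s}L_{\st}^{\trans}} = O((\radius_{\st}+\varepsilon)^{s})$ for any small $\varepsilon > 0$; second, because the eigenvalues of $\Lambda_{\lu}$ lie in $\eigs_{\lu}^{\radius} \subseteq \setof{\lambda \in \complex}{\radius \le \abs\lambda \le 1}$, diagonalisability of $\Lambda_{\lu}$ implies $\smlnorm{(\Lambda_{\lu}^{\trans})^{-s}} \le C\radius^{-s}$, whence for any row vector $u^{\trans}$,
\[
\smlnorm{u^{\trans}\Lambda_{\lu}^{s}} = \smlnorm{(\Lambda_{\lu}^{\trans})^{s}u} \ge \smlnorm u / \smlnorm{(\Lambda_{\lu}^{\trans})^{-s}} \ge C^{-1}\radius^{s}\smlnorm u.
\]
Combining this with the fact that $L_{\lu}$ has full column rank $q$ (so that $v \mapsto L_{\lu}v$ has strictly positive minimum singular value) yields $\smlnorm{c^{\trans}R_{\lu}\Lambda_{\lu}^{k-1+s}L_{\lu}^{\trans}} \ge C'\radius^{s}\smlnorm{c^{\trans}R_{\lu}}$ whenever $c^{\trans}R_{\lu}\ne 0$.

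With these estimates in hand, take $b\in(\spn R_{\lu})^{\perp}$ and $c\notin(\spn R_{\lu})^{\perp}$. Then $b^{\trans}\irf_{s}=b^{\trans}R_{\st}\Lambda_{\st}^{k-1+s}L_{\st}^{\trans}$ decays at rate $O((\radius_{\st}+\varepsilon)^{s})$, while $\smlnorm{c^{\trans}\irf_{s}}\ge C'\radius^{s}-O((\radius_{\st}+\varepsilon)^{s})\gtrsim\radius^{s}$ for $s$ sufficiently large (choose $\varepsilon < \radius-\radius_{\st}$). Dividing, the ratio tends to zero, so $(\spn R_{\lu})^{\perp}$ satisfies the defining property of $S_{r}$. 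Uniqueness follows by a symmetric argument: if $S_{r}'$ is a second $r$-dimensional subspace with the same property and $S_{r}'\ne(\spn R_{\lu})^{\perp}$, a dimension count provides $b\in S_{r}'\setminus(\spn R_{\lu})^{\perp}$ and $c\in(\spn R_{\lu})^{\perp}\setminus S_{r}'$; the rates just derived then give $\smlnorm{b^{\trans}\irf_{s}}/\smlnorm{c^{\trans}\irf_{s}}\to\infty$, contradicting the assumed property of $S_{r}'$. Hence $S_{r}=(\spn R_{\lu})^{\perp}$.

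For part~(ii), under Assumption~\ref{ass:J} the set $\eigs_{\lu}^{1}$ reduces to $\{1\}$, so the $q$ unit roots of $\Phi$ lie in $\eigs_{\lu}^{1}$ and the remaining roots lie strictly inside the unit circle, i.e.\ in $\eigs_{\st}^{1}$; the roots condition of Assumption~\ref{ass:QC} thus holds with $\radius=1$. To verify the rank and diagonalisability conditions I would compute the geometric multiplicity of the eigenvalue $1$ of the companion matrix $F$: a standard block row reduction gives the identity $\rank(F-I_{kp})=(k-1)p+\rank\Phi(1)$, which combined with $\rank\Phi(1)=r$ yields $\dim\ker(F-I_{kp})=q$, matching the algebraic multiplicity. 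Thus the invariant subspace of $F$ at the eigenvalue $1$ is semisimple and $\Lambda_{\lu}$ may be chosen equal to $I_{q}$. With this choice $\R_{\lu}=\col\{R_{\lu}\}_{i=1}^{k}$, whose rank equals $\rank R_{\lu}$; invertibility of $\R$ from Lemma~\ref{lem:GLR}\ref{enu:GLR:L} forces $\rank R_{\lu}=q$, and symmetrically $\rank L_{\lu}=q$, so all clauses of Assumption~\ref{ass:QC} hold. Finally, equation \eqref{eig-eig} with $\Lambda_{\lu}=I_{q}$ reduces to $\Phi(1)R_{\lu}=0$, so $\spn R_{\lu}\subseteq\ker\Phi(1)$, and the count $q=\rank R_{\lu}=\dim\ker\Phi(1)$ forces equality; hence $\cs=\{\ker\Phi(1)\}^{\perp}=(\spn R_{\lu})^{\perp}$, which by part~(i) coincides with $S_{r}$.

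The main obstacle I anticipated is the lower bound on the decay rate of the $\Lambda_{\lu}$ term, since a naive eigenvector expansion produces a sum of exponentials whose cross-terms could in principle cancel (complex conjugate eigenvalues share moduli). The $\Lambda_{\lu}^{-s}$ route described above sidesteps any such delicate spectral analysis by exploiting invertibility of $\Lambda_{\lu}$, which is guaranteed by $\radius>0$ together with the lower bound $\abs\lambda\ge\radius$ on the eigenvalues in $\eigs_{\lu}^{\radius}$. The only other mildly technical point is the companion-form rank identity used to establish $\Lambda_{\lu}=I_{q}$ under Assumption~\ref{ass:J}.
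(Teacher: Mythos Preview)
Your proof is correct and follows essentially the same strategy as the paper: decompose $\irf_{s}$ via \lemref{GLR}\enuref{GLR:IRF}, bound the $\Lambda_{\st}$-term above by a spectral-radius estimate, bound the $\Lambda_{\lu}$-term below using diagonalisability and the full column rank of $L_{\lu}$, and conclude by comparing rates; part~(ii) likewise proceeds by identifying $\ker\Phi(1)$ with $\spn R_{\lu}$, and your companion-form rank calculation makes the semisimplicity at~$1$ (hence $\Lambda_{\lu}=I_{q}$) more explicit than the paper does. The one noteworthy technical variation is your lower-bound device for $\smlnorm{u^{\trans}\Lambda_{\lu}^{s}}$, which inverts the inequality $\smlnorm{u}\leq\smlnorm{(\Lambda_{\lu}^{\trans})^{-s}}\,\smlnorm{(\Lambda_{\lu}^{\trans})^{s}u}$ rather than (as the paper does) first choosing a real block-diagonal representative for $\Lambda_{\lu}$ so that $\lambda_{\min}(\Lambda_{\lu}^{\trans}\Lambda_{\lu})\geq\radius^{2}$ can be iterated---your route is a little cleaner since it avoids committing to any particular normalisation of $\Lambda_{\lu}$.
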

\begin{lem}
\label{lem:repasAR}Suppose \assref{QC} holds. Let $\Lambda=\diag\{\Lambda_{\lu},\Lambda_{\st}\}$,
$R=[R_{\lu},R_{\st}]$ and $\L=[\L_{\lu},\L_{\st}]$ be as in \lemref{GLR}
and \eqref{partition}. Then \eqref{GJ-type-decomp}--\eqref{zproc}
hold with $\Phi_{\lu}=R_{\lu}\Lambda_{\lu}^{k}$, $\Phi_{\st}=R_{\st}\Lambda_{\st}^{k}$,
$z_{\lu,t}\defeq\L_{\lu}^{\trans}\x_{t}$ and $z_{\st,t}\defeq\L_{\st}^{\trans}\x_{t}$.
\end{lem}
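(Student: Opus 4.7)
The result is essentially a mechanical consequence of \lemref{GLR}, applied to the companion form \eqref{compVAR}. Write $\boldsymbol{\err}_t \defeq \col\{\err_t, 0, \ldots, 0\}$ so that $\x_t = F \x_{t-1} + \boldsymbol{\err}_t$. From \lemref{GLR}\enuref{GLR:jordanF} and the block-diagonality of $\Lambda$, we have the invariant-subspace decomposition
\[
F = \R \Lambda \L^{\trans} = \R_{\lu} \Lambda_{\lu} \L_{\lu}^{\trans} + \R_{\st} \Lambda_{\st} \L_{\st}^{\trans},
\]
while \enuref{GLR:L} ensures the biorthogonality $\L^{\trans}\R = I_{kp}$, i.e.\ $\L_{\lu}^{\trans}\R_{\lu} = I_q$, $\L_{\lu}^{\trans}\R_{\st} = 0$, and similarly $\L_{\st}^{\trans}\R_{\st} = I_{kp-q}$, $\L_{\st}^{\trans}\R_{\lu} = 0$.

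For the $z$-processes, premultiplying the companion recursion by $\L_{\lu}^{\trans}$ gives
\[
z_{\lu,t} = \L_{\lu}^{\trans} F \x_{t-1} + \L_{\lu}^{\trans} \boldsymbol{\err}_t.
\]
Using the decomposition of $F$ and biorthogonality, $\L_{\lu}^{\trans} F = \Lambda_{\lu} \L_{\lu}^{\trans}$, so the first term equals $\Lambda_{\lu} z_{\lu,t-1}$. Since the first $p$ rows of $\L_{\lu}$ equal $L_{\lu}$ (by \enuref{GLR:L} and the partition \eqref{partition}), and $\boldsymbol{\err}_t$ is supported on its first $p$ coordinates, the second term equals $L_{\lu}^{\trans}\err_t = \err_{\lu,t}$, yielding \eqref{zLU}. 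The argument for \eqref{zST} is identical with $\lu$ replaced by $\st$.

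For the $x$-process, extract the first $p$ rows of both sides of $\x_t = F \x_{t-1} + \boldsymbol{\err}_t$ to obtain $x_t = [\Phi_1,\ldots,\Phi_k]\x_{t-1} + \err_t$. The first $p$ rows of $F = \R\Lambda\L^{\trans}$ are
\[
[\Phi_1,\ldots,\Phi_k] = (R\Lambda^{k-1})\Lambda\L^{\trans} = R\Lambda^{k}\L^{\trans} = R_{\lu}\Lambda_{\lu}^{k}\L_{\lu}^{\trans} + R_{\st}\Lambda_{\st}^{k}\L_{\st}^{\trans},
\]
since by \enuref{GLR:L} the first $p$ rows of $\R$ are $R\Lambda^{k-1}$ (the $i=1$ block of $\col\{R\Lambda^{k-i}\}_{i=1}^{k}$). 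Multiplying on the right by $\x_{t-1}$ and recognising $\L_{\lu}^{\trans}\x_{t-1} = z_{\lu,t-1}$ and $\L_{\st}^{\trans}\x_{t-1} = z_{\st,t-1}$ produces \eqref{GJ-type-decomp} with $\Phi_{\lu} = R_{\lu}\Lambda_{\lu}^{k}$ and $\Phi_{\st} = R_{\st}\Lambda_{\st}^{k}$, as required.

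There is no real obstacle: the one point that needs care is the index bookkeeping in the block-column structure of $\R$, since the \emph{first} rather than the last $p$ rows of $\R$ carry the factor $\Lambda^{k-1}$ (not $\Lambda^{0}$), and getting the exponent right is what turns the $\Lambda$ in $F = \R\Lambda\L^{\trans}$ into $\Lambda_{\lu}^{k}$, $\Lambda_{\st}^{k}$ in the formulas for $\Phi_{\lu}$, $\Phi_{\st}$.
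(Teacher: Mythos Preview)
Your proof is correct and follows essentially the same approach as the paper: premultiply the companion recursion by $\L^{\trans}$ (using $\L^{\trans}F=\Lambda\L^{\trans}$ and the block-diagonality of $\Lambda$) to obtain \eqref{zproc}, and extract the first $p$ rows of $\R\Lambda\L^{\trans}$ (using that these are $R\Lambda^{k-1}$) to obtain \eqref{GJ-type-decomp}. The paper's proof is more terse, but the argument is identical.
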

One aspect of the cointegrated VAR with exact unit roots, which does
not readily translate to the quasi-cointegrated VAR, is the `error
correction' representation of the adjustments towards equilibrium.
The only situation in which an analogue of that representation is
available is when the $q$ largest roots are identical, so that $\Lambda_{\lu}=\lambda_{0}I_{q}$
for some $\lambda_{0}\in\eigs_{\lu}^{\radius}$, as arises automatically
if $q=1$. (Note that $\lambda_{0}$ must be real in this case.) In
that case, letting $\Delta_{\lambda}x_{t}\defeq x_{t}-\lambda x_{t-1}$,
we have the following.
\begin{lem}
\label{lem:quasi-VECM}Suppose \assref{QC} holds with $\Lambda_{\lu}=\lambda_{0}I_{q}$.
Then $\{x_{t}\}$ satisfies
\begin{equation}
\Delta_{\lambda_{0}}x_{t}=\Pi x_{t-1}+\sum_{i=1}^{k-1}\Psi_{i}\Delta_{\lambda_{0}}x_{t-i}+\err_{t},\label{eq:VECM}
\end{equation}
where $\Pi\defeq-\lambda_{0}^{-k+1}\Phi(\lambda_{0})=\alpha\beta^{\trans}$,
for some $\alpha\in\reals^{p\times r}$ having full column rank. Moreover,
$(\Delta_{\lambda_{0}}x_{t},\beta^{\trans}x_{t})$ follow a VAR($k-1$),
all of whose characteristic roots lie in $\eigs_{\st}^{\radius}$.
\end{lem}
\begin{proof}[Proof of \lemref{GLR}]
Let $J$ denote a $(kp\times kp)$ real Jordan matrix similar to
$F$, each of whose diagonal blocks correspond to roots of $\Phi(\cdot)$,
so that $P^{-1}FP=J$ for some $P\in\reals^{kp\times kp}$. We may
take the diagonal blocks of $J$ to be ordered such that $J=\diag\{J_{\lu},J_{\st}\}$,
where $J_{\lu}\in\reals^{q\times q}$ has all its eigenvalues in $\eigs_{\lu}^{\radius}$.
Letting
\[
X\defeq[\begin{matrix}0_{p} & \cdots & 0_{p} & I_{p}\end{matrix}]P
\]
we have by \citet[Thm.~1.24 and 1.25]{GLR82} that the matrices $(X,J)$
form a standard pair for $\Phi(\cdot)$.\footnote{Note that the `first companion form' matrix defined by these authors
($C_{1}$ on p.~13 of that work) equals $F$ with the ordering of
its rows and columns reversed, so our definitions of $X$ (and below,
$Y$) differ from theirs.} Therefore,
\[
XJ^{k}-\sum_{i=1}^{k}\Phi_{i}XJ^{k-i}=0,
\]
and $\col\{XJ^{k-i}\}_{i=1}^{k}=P$ is invertible, so that the matrix
\begin{equation}
Y\defeq[\begin{matrix}I_{p} & \cdots & 0_{p} & 0_{p}\end{matrix}](P^{\trans})^{-1}\label{eq:Y}
\end{equation}
is well defined. By \citet[Prop.~2.1]{GLR82}, $(Y,J)$ satisfy
\[
J^{k}Y^{\trans}-\sum_{i=1}^{k}J^{k-i}Y^{\trans}\Phi_{i}=0.
\]
Parts~\enuref{GLR:split}--\enuref{GLR:jordanF} of the lemma are
thus satisfied with $(R,\Lambda,L^{\trans},\R,\L^{\trans})=(X,J,Y^{\trans},P,P^{-1})$.
It further follows by recursive substitution that
\[
\irf_{s}=[F^{s}]_{11}=[\R\Lambda^{s}\L^{\trans}]_{11}=R\Lambda^{k-1+s}L^{\trans}
\]
where $[A]_{11}$ denotes the upper left $p\times p$ block of the
matrix $A$; thus part~\enuref{GLR:IRF} is proved.

Finally, let $Q=\diag\{Q_{\lu},Q_{\st}\}$ be as in the final part
of the lemma. It is easily verified that
\begin{align*}
\Lambda_{\ast} & \defeq\diag\{Q_{\lu}^{-1}\Lambda_{\lu}Q_{\lu},Q_{\st}^{-1}\Lambda_{\st}Q_{\st}\}=Q^{-1}\Lambda Q,
\end{align*}
$R_{\ast}\defeq RQ$ and $L_{\ast}\defeq L(Q^{\trans})^{-1}$ have
the required properties. Conversely, if both $(R,\Lambda,L)$ and
$(R_{\ast},\Lambda_{\ast},L_{\ast})$ satisfy conditions \enuref{GLR:split}--\enuref{GLR:IRF},
then both $\Lambda$ and $\Lambda_{\ast}$ are block diagonal matrices
similar to $J=\diag\{J_{\lu},J_{\st}\}$, whence there exists $Q=\diag\{Q_{\lu},Q_{\st}\}$
such that $\Lambda^{\ast}=Q^{-1}\Lambda Q$, etc.
\end{proof}
\begin{proof}[Proof of \lemref{GLR-converse}]
 $\R_{0}\defeq\col\{R_{0}\Lambda_{0}^{k-i}\}_{i=1}^{k}\in\reals^{kp\times q}$
has rank $q$, and \eqref{R0cond} implies that $F\R_{0}=\R_{0}\Lambda_{0}$,
for $F\defeq F(\PHI)$. Since the remaining $kp-q$ eigenvalues of
$F$ are distinct from the eigenvalues of $\Lambda_{0}$, $\R_{0}$
is a simple invariant subspace of $F$ (\citealp[Defn~V.1.2]{SS90}).
Hence there exist $\R,\Lambda,\L\in\reals^{kp\times kp}$ such that
$F=\R\Lambda\L^{\trans}$ and $\L^{\trans}\R=I_{kp}$, and $\R$ and
$\Lambda$ can be partitioned as $\R=[\R_{0},\R_{\st}]$ and $\Lambda=\diag\{\Lambda_{0},\Lambda_{\st}\}$
(\citealp[Thm~V.1.5]{SS90}). Since $\Lambda_{0}$ and $\Lambda_{\st}$
must be similar to the blocks $J_{\lu}$ and $J_{\st}$ of the real
Jordan form of $F$, as introduced in the proof of \lemref{GLR},
the result then follows by the same arguments as were given in that
proof.
\end{proof}
\begin{proof}[Proof of \lemref{qcs}]
\proofpart{1} By \lemref{GLR}\enuref{GLR:IRF}, for any $b\in\reals^{p}$,
\begin{equation}
b^{\trans}\irf_{s}^{w}=b^{\trans}\irf_{s}^{\err}\Upsilon=b^{\trans}R\Lambda^{k-1+s}L^{\trans}\Upsilon=b^{\trans}R_{\lu}\Lambda_{\lu}^{k-1+s}L_{\lu}^{\trans}\Upsilon+b^{\trans}R_{\st}\Lambda_{\st}^{k-1+s}L_{\st}^{\trans}\Upsilon.\label{eq:irf-b}
\end{equation}
Since the spectral radius of $\Lambda_{\st}$ is strictly less than
$\radius$, we have by \citet[Cor.~5.6.13]{HJ13book} that
\begin{equation}
\Lambda_{\st}^{t}/\radius^{t}\goesto0\label{eq:Lst_negl}
\end{equation}
as $t\goesto\infty$. Since $\Lambda_{\lu}$ is diagonalisable under
\enuref{Q:rank}, by \lemref{GLR} we may choose $(R_{\lu},\Lambda_{\lu},L_{\lu})$
such that $\Lambda_{\lu}$ is a real Jordan block diagonal matrix
(as in Corollary 3.4.1.10 of \citealp{HJ13book}). The eigenvalues
of $\Lambda_{\lu}^{\trans}\Lambda_{\lu}=\Lambda_{\lu}\Lambda_{\lu}^{\trans}$
are therefore of the form $\smlabs{\lambda}^{2}$, for $\lambda$
an eigenvalue of $\Lambda_{\lu}$, and thus $\spect_{\min}(\Lambda_{\lu}^{\trans}\Lambda_{\lu})\geq\radius^{2}$,
where $\spect_{\min}(M)$ denotes the smallest eigenvalue of a positive-definite
matrix $M$. Therefore letting $x\defeq R_{\lu}^{\trans}b$,
\begin{multline*}
\smlnorm{x^{\trans}\Lambda_{\lu}^{t}L_{\lu}^{\trans}\Upsilon}^{2}\geq\spect_{\min}(L_{\lu}^{\trans}\Upsilon^{\trans}\Upsilon L_{\lu})\smlnorm{x^{\trans}\Lambda_{\lu}^{t}}^{2}\\
\geq\radius\spect_{\min}(L_{\lu}^{\trans}\Upsilon^{\trans}\Upsilon L_{\lu})\smlnorm{\Lambda_{\lu}^{t-1}x}^{2}\geq\cdots\geq\radius^{2t}\spect_{\min}(L_{\lu}^{\trans}\Upsilon^{\trans}\Upsilon L_{\lu})\smlnorm x^{2}.
\end{multline*}
$\spect_{\min}(L_{\lu}^{\trans}\Upsilon^{\trans}\Upsilon L_{\lu})>0$,
since $\Upsilon$ is nonsingular, and $L_{\lu}$ has full column rank
under \enuref{Q:rank}. Deduce that if $b^{\trans}R_{\lu}\neq0$,
then
\begin{equation}
\liminf_{t\goesto\infty}\smlnorm{b^{\trans}R_{\lu}\Lambda_{\lu}^{t}L_{\lu}^{\trans}\Upsilon}/\radius^{t}>0.\label{eq:Lu_bdd}
\end{equation}
It follows from \eqref{irf-b}--\eqref{Lu_bdd}\noeqref{eq:Lst_negl}
that $b^{\trans}\irf_{s}^{w}/\radius^{s}\goesto0$ as $s\goesto\infty$
if and only if $b\perp\spn R_{\lu}$. Thus $(\spn R_{\lu})^{\perp}$
gives the unique $r$-dimensional subspace of $\reals^{p}$ satisfying
the definition of $S_{r}$.

\proofpart{2} Since $\rank\Phi(1)=p-q$ under \enuref{J:rank}, there
exists $R_{\lu}\in\reals^{p\times q}$ having rank $q$ such that
\begin{equation}
0=\Phi(1)R_{\lu}=R_{\lu}-\sum_{i=1}^{k}\Phi_{i}R_{\lu}=_{(1)}R_{\lu}\Lambda_{\lu}^{k}-\sum_{i=1}^{k}\Phi_{i}R_{\lu}\Lambda_{\lu}^{k-i}\label{eq:Phi1}
\end{equation}
where $=_{(1)}$ follows by taking $\Lambda_{\lu}=I_{q}$. By a similar
argument, here exists a $L_{\lu}\in\reals^{p\times q}$ with $\rank L_{\lu}=q$
and $L_{\lu}^{\trans}\Phi(1)=0$. \assref{J} is thus a special case
of \assref{QC} with $\radius=1$. $S_{r}=(\spn R_{\lu})^{\perp}$
therefore follows immediately from part~(i) of the lemma. Finally,
recall from the second characterisation of the cointegrating space
given in \subsecref{coint-ur} that $\cs=\{\ker\Phi(1)\}^{\perp}$.
By \eqref{Phi1} this also coincides with $(\spn R_{\lu})^{\perp}$.
\end{proof}
\begin{proof}[Proof of \lemref{repasAR}]
 By \eqref{compVAR} and \lemref{GLR},
\[
\L^{\trans}\x_{t}=\L^{\trans}F\x_{t-1}+L^{\trans}\err_{t}=\Lambda\L^{\trans}\x_{t-1}+L^{\trans}\err_{t}.
\]
Since $\Lambda=\diag\{\Lambda_{\lu},\Lambda_{\st}\}$, it is clear
that \eqref{zproc} holds for $z_{\lu,t}$ and $z_{\st,t}$ as defined
in the lemma. Further, taking the first $p$ rows of \eqref{compVAR}
and using \lemref{GLR} again yields
\[
x_{t}=R\Lambda^{k}\L^{\trans}\x_{t-1}+\err_{t}=R_{\lu}\Lambda_{\lu}^{k}\L_{\lu}^{\trans}\x_{t-1}+R_{\st}\Lambda_{\st}^{k}\L_{\st}^{\trans}\x_{t-1}+\err_{t}.\qedhere
\]
\end{proof}
\begin{proof}[Proof of \lemref{quasi-VECM}]
 The representation \eqref{VECM} follows directly from Theorem~1
in \citet{JS99JoE}. By the same arguments as given in the proof of
Corollary~4.3 in \citet{Joh95}, since $\Phi$ has $q$ roots at
$\lambda_{0}$, $\Phi(\lambda_{0})$ must have rank at least equal
to $r=p-q$; that it has rank equal to $r$ then follows from \ref{enu:Q:rank},
which implies that $\Phi(\lambda_{0})R_{\lu}=0$. For the final claim,
we note that under the maintained assumption that $\Lambda_{\lu}=\lambda_{0}I_{q}$,
we have from \eqref{invsubdecomp} that
\begin{equation}
\Delta_{\lambda_{0}}\x_{t}=\x_{t}-\lambda_{0}\x_{t-1}=(F-\lambda_{0}I_{kp})\x_{t-1}+\Err_{t}=\R_{\st}(\Lambda_{\st}-\lambda_{0}I_{kp-q})\L_{\st}^{\trans}\x_{t}+\Err_{t}.\label{eq:quasidiff}
\end{equation}
By \lemref{GLR}, $\L_{\st}\in\reals^{kp\times(kp-q)}$ is a full
column rank matrix such that $\L_{\st}^{\trans}\R_{\lu}=0$, where
$\R_{\lu}=\col\{R_{\lu}\Lambda_{\lu}^{k-i}\}_{i=1}^{k}=\col\{\lambda_{0}^{k-i}R_{\lu}\}_{i=1}^{k}$.
By considering the column span of $\R_{\lu}$, it follows that $\L_{\st}$
must have the same column span as 
\[
\begin{bmatrix}\beta & I_{p}\\
 & -\lambda_{0}I_{p} & \ddots\\
 &  & \ddots & I_{p}\\
 &  &  & -\lambda_{0}I_{p}
\end{bmatrix}.
\]
Hence there exists a full-rank $\Psi\in\reals^{(kp-q)\times(kp-q)}$
such that
\begin{equation}
\L_{\st}^{\trans}\x_{t}=\Psi\begin{bmatrix}\beta^{\trans}x_{t}\\
\Delta_{\lambda_{0}}x_{t}\\
\vdots\\
\Delta_{\lambda_{0}}x_{t-k+2}
\end{bmatrix}.\label{eq:Lst-x}
\end{equation}
We recall from \lemref{repasAR} that the l.h.s.\ is equal to $z_{\st,t}$,
which by that result has a first-order autoregressive representation
with characteristic roots that are the eigenvalues of $\Lambda_{\st}$,
and hence in $\eigs_{\st}^{\radius}$. Indeed, \eqref{zLU} provides
the companion form representation for autoregressive process followed
by $(\Delta_{\lambda_{0}}x_{t},\beta^{\trans}x_{t})$, which accordingly
has the properties claimed.
\end{proof}
\begin{proof}[Proof of \propref{reptheorybody}]
 This is an immediate corollary of \lemref{qcs}.
\end{proof}

\section{Perturbation theory\label{app:perturbation}}

Recall the definition of $\set P\subset\reals^{p\times kp}$ given
in \subsecref{Psetdef}. The normalisation \eqref{Rlurnom} entails
that
\begin{equation}
\begin{bmatrix}A\\
I_{q}
\end{bmatrix}\Lambda_{\lu}^{k}-\sum_{i=1}^{k}\Phi_{i}\begin{bmatrix}A\\
I_{q}
\end{bmatrix}\Lambda_{\lu}^{k-i}=0\label{eq:AlambdaR}
\end{equation}
which by Lemmas~\ref{lem:GLR} and \ref{lem:GLR-converse} uniquely
determines $R_{\lu}=[\begin{smallmatrix}A\\
I_{q}
\end{smallmatrix}]$ and $\Lambda_{\lu}$ as a function of $\PHI\in\set P$. As in \subsecref{functionals},
we shall denote the implied mappings by $R_{\lu}(\PHI)$, $A(\PHI)$,
$\Lambda_{\lu}(\PHI)$, and $\R_{\lu}(\PHI)\defeq\col\{R_{\lu}(\PHI)\Lambda_{\lu}^{k-i}(\PHI)\}_{i=1}^{k}$.
Our first result is that these are smooth (i.e.\ infinitely differentiable);
its proof and those of the subsequent lemmas appear at the end of
this appendix.
\begin{lem}
\label{lem:implicit-maps} $\set P$ is open; and $A(\PHI)$ and $\Lambda_{\lu}(\PHI)$
are smooth on $\set P$.
\end{lem}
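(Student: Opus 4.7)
The plan is to reduce both assertions to the standard perturbation theory of simple invariant subspaces, applied to the companion matrix $F(\PHI)$, which depends polynomially (hence smoothly) on $\PHI$.

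For openness, first observe that the eigenvalues $\{\lambda_{i}(\PHI)\}$ are the roots of the polynomial $\det\Phi(\lambda)$ whose coefficients depend polynomially on $\PHI$, and so vary continuously in $\PHI$; hence condition (i) is open. Under (i), condition (ii) is automatic, by \lemref{GLR-converse} (or equivalently \lemref{GLR} applied to the simple invariant subspace $\spc M(\PHI)\subset\reals^{kp}$ of $F(\PHI)$ associated with the top $q$ eigenvalues). By the uniqueness clause of \lemref{GLR}, all admissible $R_{\lu}$ are related by right-multiplication by an invertible $q\times q$ matrix, so condition (iii), $\rank(G^{\trans}R_{\lu})=q$, is a property of $\spc M(\PHI)$ alone. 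Since $\spc M(\PHI)$ varies continuously in $\PHI$ wherever (i) holds --- a standard consequence of the spectral gap, see e.g.\ \citet[Ch.~V]{SS90} --- condition (iii) is likewise open.

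For smoothness, fix $\PHI_{0}\in\set P$. The perturbation theory of simple invariant subspaces (see e.g.\ \citealp[Thm~V.2.7]{SS90}) supplies, on a neighbourhood of $\PHI_{0}$, a smooth (indeed analytic) basis $\mathcal{V}(\PHI)\in\reals^{kp\times q}$ for $\spc M(\PHI)$; set $M(\PHI)\defeq[\mathcal{V}(\PHI)^{\trans}\mathcal{V}(\PHI)]^{-1}\mathcal{V}(\PHI)^{\trans}F(\PHI)\mathcal{V}(\PHI)$, the smoothly varying representation of $F|_{\spc M}$ in that basis. Let $V(\PHI)\in\reals^{p\times q}$ denote the \emph{last} $p$ rows of $\mathcal{V}(\PHI)$; by (iii) and continuity, $P(\PHI)\defeq G^{\trans}V(\PHI)$ is invertible near $\PHI_{0}$. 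Renormalising yields
\[
R_{\lu}(\PHI)\defeq V(\PHI)P(\PHI)^{-1},\qquad \Lambda_{\lu}(\PHI)\defeq P(\PHI)M(\PHI)P(\PHI)^{-1},
\]
both smooth. By construction $G^{\trans}R_{\lu}(\PHI)=I_{q}$, so $R_{\lu}(\PHI)=\bigl[\begin{smallmatrix}A(\PHI)\\ I_{q}\end{smallmatrix}\bigr]$ with $A(\PHI)\defeq[I_{r}\sep0]R_{\lu}(\PHI)$ smooth. The lower $(k-1)p$ block-rows of $F\mathcal{V}=\mathcal{V}M$ force the structural identities $V^{(j-1)}=V^{(j)}M$ (where $V^{(j)}$ denotes the $j$th $p$-row block of $\mathcal{V}$), from which one verifies that $\mathcal{V}(\PHI)P(\PHI)^{-1}=\col\{R_{\lu}(\PHI)\Lambda_{\lu}(\PHI)^{k-i}\}_{i=1}^{k}$ and hence that \eqref{PsetRlu} holds automatically.

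The hard part, then, is really just the smooth dependence of the simple invariant subspace $\spc M(\PHI)$ on $\PHI$ --- secured by the spectral gap in (i) --- which can be imported from \citet{SS90} without further labour; once that is granted, the normalisation using (iii) and the companion-form block structure of $F(\PHI)$ do the remaining work mechanically.
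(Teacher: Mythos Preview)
Your argument is correct and follows the same overall architecture as the paper's proof --- reduce to the companion matrix, use the spectral gap in~(i) to get a well-behaved invariant subspace, then renormalise via~(iii) --- but the routes diverge at the smoothness step. The paper does not import smooth dependence of the invariant subspace from the literature; instead it introduces an auxiliary map $H^{\ast}(\R_{\lu},\Lambda_{\lu};F)\defeq[\R_{\lu}\Lambda_{\lu}-F\R_{\lu};\ \L_{0,\lu}^{\trans}\R_{\lu}-I_{q}]$, verifies by a direct calculation that its Jacobian in $(\R_{\lu},\Lambda_{\lu})$ is nonsingular (using that $\Lambda_{0,\lu}$ and $\Lambda_{0,\st}$ share no eigenvalues), and then invokes the implicit function theorem before renormalising to the $\G^{\trans}\R_{\lu}=I_{q}$ gauge. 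Your approach is shorter, but two remarks are in order. First, the citation you give (\citealp[Thm~V.2.7]{SS90}) delivers existence and continuity of the perturbed subspace, not analyticity; the smooth dependence you need is true but is more cleanly obtained from the Riesz projector $P(\PHI)=(2\pi\i)^{-1}\oint_{\Gamma}(zI-F(\PHI))^{-1}\diff z$, which is manifestly analytic in $\PHI$ under the gap condition. Second, the paper's implicit-function-theorem route is not merely a stylistic choice: the map $H^{\ast}$ and its Jacobian are reused verbatim in the proof of \lemref{derivatives} to derive the explicit expressions for $J_{A}(\PHI)$ and $J_{\Lambda}(\PHI)$, so the extra work here pays for itself downstream. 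Your renormalisation and your verification of \eqref{PsetRlu} via the block structure $V^{(j-1)}=V^{(j)}M$ are both fine.
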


Our next result gives the first derivatives of the maps $A(\PHI)$
and $\Lambda_{\lu}(\PHI)$; it is closely related to Theorem~2.1
in \citet{Sun91LinAlg}. To express these derivatives more concisely,
let
\begin{equation}
B(\PHI)\defeq(I_{q}\otimes R_{\st})[(\Lambda_{\lu}^{\trans}\otimes I_{kp-q})-(I_{q}\otimes\Lambda_{\st})]^{-1}(I_{q}\otimes L_{\st}^{\trans}),\label{eq:Bdef}
\end{equation}
where we have suppressed the dependence of each of the r.h.s.\ quantities
on $\PHI$. The matrix in square brackets on the r.h.s.\ has eigenvalues
of the form $\lambda-\mu$, where $\lambda$ and $\mu$ are eigenvalues
of $\Lambda_{\lu}$ and $\Lambda_{\st}$ respectively; it is thus
invertible for all $\PHI\in\set P$. Under the normalisation implied
by \eqref{AlambdaR}, $B$ is uniquely determined by $\PHI\in\set P$,
even though $R_{\st}$, $\Lambda_{\st}$ and $L_{\st}$ individually
are not (as follows from the final part of \lemref{GLR}).
\begin{lem}
\label{lem:derivatives} Let $\PHI_{0}\in\set P$, $A_{0}\defeq A(\PHI_{0})$,
$\Lambda_{0,\lu}\defeq\Lambda_{\lu}(\PHI_{0})$, $R_{0,\lu}\defeq[\begin{smallmatrix}A_{0}\\
I_{q}
\end{smallmatrix}]$ and $\R_{0,\lu}\defeq\col\{R_{0,\lu}\Lambda_{0,\lu}^{k-i}\}_{i=1}^{k}$.
Then
\begin{enumerate}
\item \label{enu:deriv:zero}$A_{0}=A(\PHI)$ and $\Lambda_{0,\lu}=\Lambda_{\lu}(\PHI)$
for all $\PHI\in\set P$ such that $(\PHI-\PHI_{0})\R_{0,\lu}=0$
and $\smlabs{\lambda_{q+1}(\PHI)}<\smlabs{\lambda_{q}(\PHI_{0})}$;
\item \label{enu:deriv:values}the first differentials of $A(\cdot)$ and
$\Lambda_{\lu}(\cdot)$ at $\PHI=\PHI_{0}$ satisfy\footnote{For a more compact notation, here and subsequently we express matrix
derivatives in terms of differentials, in the manner of \citet{MN07}.}
\[
\begin{bmatrix}\vek(\deriv A)\\
\vek(\deriv\Lambda_{\lu})
\end{bmatrix}=\begin{bmatrix}J_{A}(\PHI_{0})\\
J_{\Lambda}(\PHI_{0})
\end{bmatrix}\vek\{(\deriv\PHI)\R_{0,\lu}\}
\]
where
\begin{equation}
J(\PHI)\defeq\begin{bmatrix}J_{A}(\PHI)\\
J_{\Lambda}(\PHI)
\end{bmatrix}\defeq\begin{bmatrix}(I_{q}\otimes\beta^{\trans})B\\{}
[(\Lambda_{\lu}^{\trans}\otimes I_{q})-(I_{q}\otimes\Lambda_{\lu})](I_{q}\otimes G^{\trans})B+(I_{q}\otimes L_{\lu}^{\trans})
\end{bmatrix}\label{eq:Jacobs}
\end{equation}
for $G^{\trans}\defeq[0_{q\times r},I_{q}]$, $\beta^{\trans}=[I_{r},-A]$,
and $\Lambda_{\lu}=\Lambda_{\lu}(\PHI)$, etc.; and
\item \label{enu:deriv:cont}$J(\PHI)$ is continuous.
\end{enumerate}
\end{lem}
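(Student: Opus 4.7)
The plan is to address the three parts in order, leveraging the companion-form framework of \lemref{GLR} and the smoothness of $R_\lu(\PHI)$, $\Lambda_\lu(\PHI)$ and $A(\PHI)$ established in \lemref{implicit-maps}. For part \enuref{deriv:zero}, I would rewrite the defining equation \eqref{PsetRlu} as $\PHI_0\R_{0,\lu} = R_{0,\lu}\Lambda_{0,\lu}^k$; the hypothesis $(\PHI-\PHI_0)\R_{0,\lu}=0$ then yields $\PHI\R_{0,\lu} = R_{0,\lu}\Lambda_{0,\lu}^k$, so that $(R_{0,\lu},\Lambda_{0,\lu})$ satisfies \eqref{PsetRlu} with $\PHI$ in place of $\PHI_0$. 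Because the eigenvalues of $\Lambda_{0,\lu}$ all have modulus at least $\smlabs{\lambda_q(\PHI_0)} > \smlabs{\lambda_{q+1}(\PHI)}$, they must coincide with $\{\lambda_1(\PHI),\ldots,\lambda_q(\PHI)\}$, and \lemref{GLR-converse} then completes $(R_{0,\lu},\Lambda_{0,\lu})$ to a full decomposition of the sort given by \lemref{GLR}. Since $R_{0,\lu}=[A_0^\trans,I_q]^\trans$ already satisfies the normalisation $G^\trans R_\lu=I_q$, uniqueness forces $R_\lu(\PHI)=R_{0,\lu}$ and $\Lambda_\lu(\PHI)=\Lambda_{0,\lu}$, whence $A(\PHI)=A_0$.

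For part \enuref{deriv:values}, I would differentiate the companion-form identity $F(\PHI)\R_\lu(\PHI) = \R_\lu(\PHI)\Lambda_\lu(\PHI)$ at $\PHI_0$ (suppressing the subscript $0$ in the notation below). Expanding $\deriv\R_\lu = \R_\lu M + \R_\st N$ in the invariant-subspace basis $[\R_\lu,\R_\st]$ supplied by \lemref{GLR}, and using $F\R_\lu=\R_\lu\Lambda_\lu$ together with $F\R_\st=\R_\st\Lambda_\st$, the differentiated identity rearranges to
\[
(\deriv F)\R_\lu = \R_\lu\{M\Lambda_\lu - \Lambda_\lu M + \deriv\Lambda_\lu\} + \R_\st\{N\Lambda_\lu - \Lambda_\st N\}.
\]
Only the first $p$ rows of $\deriv F$ are non-zero and they equal $\deriv\PHI$, while $\L_\lu$ and $\L_\st$ have top $p$-row blocks $L_\lu$ and $L_\st$, so projecting onto $\L_\lu^\trans$ and $\L_\st^\trans$ yields the two Sylvester-type relations
\[
L_\lu^\trans(\deriv\PHI)\R_\lu = M\Lambda_\lu - \Lambda_\lu M + \deriv\Lambda_\lu, \qquad L_\st^\trans(\deriv\PHI)\R_\lu = N\Lambda_\lu - \Lambda_\st N.
\]
The second is uniquely solvable throughout $\set P$ because the spectra of $\Lambda_\lu$ and $\Lambda_\st$ are disjoint there, giving
\[
\vek(N) = [(\Lambda_\lu^\trans\otimes I_{kp-q}) - (I_q\otimes\Lambda_\st)]^{-1}(I_q\otimes L_\st^\trans)\vek[(\deriv\PHI)\R_\lu].
\]

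To close the system I would exploit the normalisation. Differentiating $G^\trans R_\lu=I_q$ gives $G^\trans\deriv R_\lu=0$; extracting the bottom $p$-row block of $\deriv\R_\lu = \R_\lu M + \R_\st N$, and noting that the bottom blocks of $\R_\lu$ and $\R_\st$ are $R_\lu$ and $R_\st$, gives $\deriv R_\lu = R_\lu M + R_\st N$. Applying $G^\trans$ together with $G^\trans R_\lu=I_q$ then yields $M = -G^\trans R_\st N$. Since $R_\lu=[A^\trans,I_q]^\trans$ implies $\deriv R_\lu = [(\deriv A)^\trans,0]^\trans$, we also have $\beta^\trans\deriv R_\lu = \deriv A$, and therefore
\[
\vek(\deriv A) = (I_q\otimes\beta^\trans R_\st)\vek(N) = (I_q\otimes\beta^\trans)\,B\,\vek[(\deriv\PHI)\R_\lu],
\]
where in the second equality I have used the definition of $B$ in \eqref{Bdef}. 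Substituting $M = -G^\trans R_\st N$ into the first Sylvester relation, vectorising and once more recognising $(I_q\otimes R_\st)\vek(N) = B\vek[(\deriv\PHI)\R_\lu]$ yields the stated expression for $J_\Lambda$. Part \enuref{deriv:cont} then follows from \lemref{implicit-maps} (which gives smoothness of $R_\lu$, $\Lambda_\lu$ and $A$, and hence of $\beta$) together with two short observations: first, that $B$ is invariant under the $\diag\{Q_\lu,Q_\st\}$ ambiguity of \lemref{GLR}, so that smooth local representatives of $(R_\st,\Lambda_\st,L_\st)$ may be chosen; and second, that the inverse defining $B$ is continuous in $\PHI$ throughout $\set P$ because its eigenvalues $\lambda-\mu$ (with $\lambda\in\sigma(\Lambda_\lu)$, $\mu\in\sigma(\Lambda_\st)$) remain bounded away from zero.

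The main obstacle, I expect, will be the bookkeeping in part \enuref{deriv:values}: correctly tracking the block structure of $\deriv F$ against the $p$-row blocks of $\R_\lu,\R_\st,\L_\lu,\L_\st$, and propagating the normalisation $G^\trans R_\lu=I_q$ through the Kronecker identities to eliminate $M$. Once these reductions are in place, the identification of $J_A$ and $J_\Lambda$ is purely algebraic.
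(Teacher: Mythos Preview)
Your proof is correct. Parts~\enuref{deriv:zero} and \enuref{deriv:cont} match the paper's arguments essentially verbatim (the paper also invokes \lemref{GLR-converse} for part~\enuref{deriv:zero}, and for part~\enuref{deriv:cont} simply cites the smoothness established in \lemref{implicit-maps}).

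For part~\enuref{deriv:values} you take a slightly different, and arguably cleaner, route than the paper. The paper introduces an \emph{auxiliary} normalisation $\L_{0,\lu}^{\trans}\R_{\lu}=I_{q}$, under which the implicit function theorem yields smooth maps $\R_{\lu}^{\ast}(\PHI)$ and $\Lambda_{\lu}^{\ast}(\PHI)$; the point of this choice is that $\L_{0,\lu}^{\trans}(\deriv\R_{\lu}^{\ast})=0$, so $\deriv\R_{\lu}^{\ast}$ lies entirely in $\spn\R_{0,\st}$ and the Sylvester equation for it can be solved directly. The paper then converts back to the $G^{\trans}R_{\lu}=I_{q}$ normalisation via the renormalisation formulas $R_{\lu}=R_{\lu}^{\ast}(G^{\trans}R_{\lu}^{\ast})^{-1}$ and $\Lambda_{\lu}=(G^{\trans}R_{\lu}^{\ast})\Lambda_{\lu}^{\ast}(G^{\trans}R_{\lu}^{\ast})^{-1}$, differentiating these to obtain $\deriv A$ and $\deriv\Lambda_{\lu}$. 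You instead work directly under the $G^{\trans}R_{\lu}=I_{q}$ normalisation, expanding $\deriv\R_{\lu}=\R_{\lu}M+\R_{\st}N$ in the invariant-subspace basis; the price is that $M\neq0$, but you recover it from the differentiated normalisation constraint $G^{\trans}\deriv R_{\lu}=0$ as $M=-G^{\trans}R_{\st}N$. The two approaches are algebraically equivalent (your $N$ is the paper's $\L_{0,\st}^{\trans}\deriv\R_{\lu}^{\ast}$, and your $M$ encodes exactly the renormalisation correction), and both yield the same $J_{A}$ and $J_{\Lambda}$; your route avoids introducing and then eliminating the starred objects, at the cost of carrying the extra coordinate $M$ through the calculation.
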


When $\Lambda_{\lu}(\PHI)=I_{q}$, the $pq\times pq$ matrix $J(\PHI)$
simplifies as follows.
\begin{lem}
\label{lem:derivatunity}Suppose $\PHI\in\set P$ with $\Lambda_{\lu}(\PHI)=I_{q}$.
Then $J(\PHI)$ is nonsingular, and
\[
\begin{bmatrix}J_{A}(\PHI)\\
J_{\Lambda}(\PHI)
\end{bmatrix}=\begin{bmatrix}I_{q}\otimes\beta^{\trans}R_{\st}(I_{kp-q}-\Lambda_{\st})^{-1}L_{\st}^{\trans}\\
I_{q}\otimes L_{\lu}^{\trans}
\end{bmatrix}.
\]
\end{lem}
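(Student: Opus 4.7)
The plan splits into two steps: (a) simplify the general formulas in \lemref{derivatives} at $\Lambda_\lu = I_q$, and (b) establish nonsingularity of the resulting matrix. For (a), I would use the Kronecker identity $(A\otimes B)(C\otimes D) = AC\otimes BD$ repeatedly. At $\Lambda_\lu = I_q$, the bracketed factor defining $B(\PHI)$ in \eqref{Bdef} becomes $I_{q(kp-q)} - I_q\otimes\Lambda_\st = I_q\otimes(I_{kp-q}-\Lambda_\st)$, whose inverse is $I_q\otimes(I_{kp-q}-\Lambda_\st)^{-1}$; this collapses $B$ to $I_q\otimes[R_\st(I_{kp-q}-\Lambda_\st)^{-1} L_\st^\trans]$, and substituting into $J_A = (I_q\otimes\beta^\trans)B$ gives the claimed form. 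Meanwhile the factor $(\Lambda_\lu^\trans\otimes I_q) - (I_q\otimes\Lambda_\lu)$ vanishes at $\Lambda_\lu = I_q$, so the first summand of $J_\Lambda$ disappears, leaving $J_\Lambda = I_q\otimes L_\lu^\trans$.

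For (b), since both block rows of $J(\PHI)$ have the form $I_q\otimes(\cdot)$, after a row permutation $J(\PHI)$ is similar to $I_q\otimes K$, where
\[
K \defeq \begin{bmatrix} \beta^\trans R_\st(I_{kp-q}-\Lambda_\st)^{-1} L_\st^\trans \\ L_\lu^\trans \end{bmatrix} \in \reals^{p\times p}.
\]
Nonsingularity of $J(\PHI)$ thus reduces to showing $Kv=0\implies v=0$. The hard part is exploiting both conditions $L_\lu^\trans v = 0$ and $\beta^\trans R_\st(I-\Lambda_\st)^{-1} L_\st^\trans v = 0$ simultaneously, and my plan is to lift $v$ to a $kp$-vector via the companion decomposition. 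Letting $u \defeq (v^\trans,0,\ldots,0)^\trans \in \reals^{kp}$, the identity $\L^\trans\R = I_{kp}$ from \lemref{GLR} combined with $L_\lu^\trans v = 0$ forces $u = \R_\st\eta$ for $\eta \defeq L_\st^\trans v$. Reading off the blocks of $\R_\st = \col\{R_\st\Lambda_\st^{k-i}\}_{i=1}^k$ then yields $R_\st\Lambda_\st^j\eta = 0$ for $j = 0,\ldots,k-2$ and $R_\st\Lambda_\st^{k-1}\eta = v$.

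Setting $w_s \defeq R_\st\Lambda_\st^{k-1+s}\eta$, we have $w_0 = v$, $w_{-j} = 0$ for $j = 1,\ldots,k-1$, and $w_s\to 0$ as $s\to\infty$ (since $\Lambda_\st$ has spectral radius strictly below one). The relation $R_\st\Lambda_\st^k = \sum_{i=1}^k\Phi_i R_\st\Lambda_\st^{k-i}$ (the $\st$-block of \eqref{eig-eig}) combined with the vanishing of $w_s$ for $s\in\{-1,\ldots,-(k-1)\}$ yields the VAR recursion $w_s = \sum_{i=1}^k\Phi_i w_{s-i}$ for all $s\geq 1$. The second assumption on $v$ rewrites as $\sum_{s=0}^\infty w_s = R_\st(I-\Lambda_\st)^{-1}\eta \in \spn R_\lu$, say $= R_\lu c$. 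Summing the recursion from $s=1$ onward and using $\sum_{i=1}^k\Phi_i = I-\Phi(1)$ telescopes to $v = \Phi(1) R_\lu c$. But \eqref{eig-eig} at $\Lambda_\lu = I_q$ forces $\Phi(1) R_\lu = 0$, so $v = 0$, completing the argument.
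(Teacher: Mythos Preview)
Your argument is correct. Part (a) matches the paper's treatment exactly. For part (b), both you and the paper reduce to showing the $p\times p$ matrix $K$ is nonsingular, but the routes diverge from there.

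The paper's approach is shorter and more algebraic: given $Kc=0$, from $L_\lu^\trans c=0$ and the rank relations $\rank\Phi(1)=r$, $\rank L_\lu=q$ (established via \assref{J}), one writes $c=\Phi(1)b$; then the resolvent identity $\Phi(\mu)^{-1}=R(\mu I-\Lambda)^{-1}L^\trans$ from \citet{GLR82} is used to show $\beta^\trans R_\st(I-\Lambda_\st)^{-1}L_\st^\trans\Phi(1)=\beta^\trans$, whence $\beta^\trans b=0$ and so $c=\Phi(1)b=0$. Your approach is more hands-on: you lift $v$ to the companion space, read off the block identities from $\R_\st=\col\{R_\st\Lambda_\st^{k-i}\}$, build the sequence $w_s$ satisfying the VAR recursion, and telescope the sum. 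This avoids the external reference to the resolvent formula and the need to separately establish $\rank\Phi(1)=r$, at the cost of a longer explicit computation. Both arguments ultimately hinge on $\Phi(1)R_\lu=0$, which is immediate from \eqref{eig-eig} at $\Lambda_\lu=I_q$.
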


\begin{proof}[Proof of \lemref{implicit-maps}]
 We first prove $\set P$ is open. For $F\in\reals^{kp\times kp}$,
let $\lambda_{i}(F)$ denote the $i$th eigenvalue of $F$, when these
are placed in descending order of modulus. Let $\set F$ denote the
set of $kp\times kp$ matrices such that
\begin{enumerate}
\item $\smlabs{\lambda_{q+1}(F)}<\smlabs{\lambda_{q}(F)}$; and
\end{enumerate}
there exist $\Lambda_{\lu}\in\reals^{q\times q}$ and $\R_{\lu}\in\reals^{kp\times q}$
such that
\begin{enumerate}[resume]
\item the eigenvalues of $\Lambda_{\lu}$ are $\{\lambda_{i}(F)\}_{i=1}^{q}$,
$F\R_{\lu}=\R_{\lu}\Lambda_{\lu}$; and
\item $\rank\{\G^{\trans}\R_{\lu}\}=q$, where $\G^{\trans}\defeq[0_{q\times(kp-q)},I_{q}]=[0_{q\times k(p-1)},G^{\trans}]$.
\end{enumerate}
In view of \lemref{GLR}, $\PHI\in\set P$ if and only if the companion
form matrix $F(\PHI)$ is in $\set F$. Since $F(\cdot)$ is trivially
continuous, it suffices to show that $\set F$ is open.

To that end, fix $F_{0}\in\set F$, and let $\R_{0,\lu}$ and $\Lambda_{0,\lu}$
denote matrices satisfying (ii) and (iii) above. By the continuity
of eigenvalues and simple invariant subspaces (Theorems~IV.1.1 and
V.2.8 in \citealp{SS90}), for every $\epsilon>0$ there exists a
$\delta>0$ such that whenever $\smlnorm{F-F_{0}}<\delta$, $F$ satisfies
requirements (i) and (ii) above, with associated $\R_{\lu}$ such
that $\smlnorm{\R_{\lu}-\R_{0,\lu}}<\epsilon$. Since the set of full
rank matrices is open, we may take $\epsilon>0$ sufficiently small
such that (iii) also holds. Thus $F\in\set F$, and so $F_{0}$ is
an interior point of $\set F$; deduce $\set F$ is open.

We turn next to the smoothness of $A(\PHI)$ and $\Lambda_{\lu}(\PHI)$.
For $F_{0}\in\set F$ we have the invariant subspace decomposition
(as per \eqref{invsubdecomp} above)
\begin{equation}
F_{0}=\R_{0,\lu}\Lambda_{0,\lu}\L_{0,\lu}^{\trans}+\R_{0,\st}\Lambda_{0,\st}\L_{0,\st}^{\trans}\label{eq:Fsubs}
\end{equation}
where $\R_{0,\lu}$ and $\Lambda_{0,\lu}$ satisfy (ii)--(iii) above.
Since (iii) holds, we may choose $\R_{0,\lu}$ such that $\G^{\trans}\R_{0,\lu}=I_{q}$;
note that $\L_{0}^{\trans}\R_{0}=I_{kp}$ (as per \lemref{GLR}\enuref{GLR:L})
implies $\L_{0,\lu}^{\trans}\R_{0,\lu}=I_{q}$. Define the maps\begin{subequations}\label{eq:HHast}
\begin{align}
H(\R_{\lu},\Lambda_{\lu};F) & \defeq\begin{bmatrix}\R_{\lu}\Lambda_{\lu}-F\R_{\lu}; & \G^{\trans}\R_{\lu}-I_{q}\end{bmatrix}\\
H^{\ast}(\R_{\lu},\Lambda_{\lu};F) & \defeq\begin{bmatrix}\R_{\lu}\Lambda_{\lu}-F\R_{\lu}; & \L_{0,\lu}^{\trans}\R_{\lu}-I_{q}\end{bmatrix},\label{eq:Hast}
\end{align}
\end{subequations}so that $H(\R_{0,\lu},\Lambda_{0,\lu};F_{0})=H^{\ast}(\R_{0,\lu},\Lambda_{0,\lu};F_{0})=0$;
but note that these maps need not otherwise agree, since they impose
distinct normalisations on $\R_{\lu}$. Once we have shown that the
Jacobian of $H^{\ast}$ with respect to $(\R_{\lu},\Lambda_{\lu})$
is nonsingular at $(\R_{0,\lu},\Lambda_{0,\lu};F_{0})$, it will follow
by the implicit mapping theorem (\citealp[Thm.~XIV.2.1]{Lang93})
that there exists a neighbourhood $N\subset\set F$ of $F_{0}$ and
smooth functions $\R_{\lu}^{\ast}:N\setmap\reals^{kp\times q}$, $\Lambda_{\lu}^{\ast}:N\setmap\reals^{q\times q}$
such that
\[
H^{\ast}[\R_{\lu}^{\ast}(F),\Lambda_{\lu}^{\ast}(F);F]=0
\]
for all $F\in N$; by the continuity of $\R_{\lu}^{\ast}(\cdot)$,
we may choose $N$ such that $\rank\{\G^{\trans}\R_{\lu}^{\ast}(F)\}=q$
for all $F\in N$. Thus
\begin{align}
\R_{\lu}(F) & \defeq\R_{\lu}^{\ast}(F)[\G^{\trans}\R_{\lu}^{\ast}(F)]^{-1}\label{eq:renorm1}\\
\Lambda_{\lu}(F) & \defeq[\G^{\trans}\R_{\lu}^{\ast}(F)]\Lambda_{\lu}^{\ast}(F)[\G^{\trans}\R_{\lu}^{\ast}(F)]^{-1}\label{eq:renorm2}
\end{align}
are well defined for all $F\in N$, and have the property that
\[
H[\R_{\lu}(F),\Lambda_{\lu}(F);F]=0
\]
for all $F\in N$. Since the $(\R_{\lu},\Lambda_{\lu})$ satisfying
$H(\R_{\lu},\Lambda_{\lu};F)=0$ is unique, repeating this construction
for every $F_{0}\in\set F$ allows the smooth maps $\R_{\lu}(F)$
and $\Lambda_{\lu}(F)$ to be extended to the whole of $\set F$.
The smoothness of $\Lambda_{\lu}(\PHI)\defeq\Lambda_{\lu}[F(\PHI)]$
and $\R_{\lu}(\PHI)\defeq\R_{\lu}[F(\PHI)]$ follows immediately,
and that of $A(\PHI)$ by noting that it corresponds to rows $(k-1)p+1$
to $(k-1)p+r$ of $\R_{\lu}(\PHI)$.

It thus remains to verify that the Jacobian of $H^{\ast}$ with respect
to $(\R_{\lu},\Lambda_{\lu})$ is nonsingular at $(\R_{0,\lu},\Lambda_{0,\lu};F_{0})$.
Matrix differentiation gives
\[
\deriv H^{\ast}=\begin{bmatrix}\R_{0,\lu}(\deriv\Lambda_{\lu})+(\deriv\R_{\lu})\Lambda_{0,\lu}-F_{0}(\deriv\R_{\lu}); & \L_{0,\lu}^{\trans}(\deriv\R_{\lu})\end{bmatrix}\eqdef\begin{bmatrix}\deriv H_{1}^{\ast}; & \deriv H_{2}^{\ast}\end{bmatrix}
\]
The Jacobian is nonsingular if $\deriv H^{\ast}=0$ implies $\deriv\R_{\lu}=0$
and $\deriv\Lambda_{\lu}=0$. To that end, suppose $\deriv H^{\ast}=0$.
Then $0=\deriv H_{2}^{\ast}=\L_{0,\lu}^{\trans}(\deriv\R_{\lu})$,
and
\[
\deriv\R_{\lu}=(\R_{0}\L_{0}^{\trans})\deriv\R_{\lu}=(\R_{0,\lu}\L_{0,\lu}^{\trans}+\R_{0,\st}\L_{0,\st}^{\trans})\deriv\R_{\lu}=(\R_{0,\st}\L_{0,\st}^{\trans})\deriv\R_{\lu}
\]
and similarly, by \eqref{Fsubs} above,
\[
F_{0}(\deriv\R_{\lu})=(\R_{0,\lu}\Lambda_{0,\lu}\L_{0,\lu}^{\trans}+\R_{0,\st}\Lambda_{0,\st}\L_{0,\st}^{\trans})\deriv\R_{\lu}=\R_{0,\st}\Lambda_{0,\st}\L_{0,\st}^{\trans}(\deriv\R_{\lu}).
\]
Hence
\begin{align*}
\deriv H_{1}^{\ast} & =\R_{0,\lu}(\deriv\Lambda_{\lu})+\R_{0,\st}[\L_{0,\st}^{\trans}(\deriv\R_{\lu})\Lambda_{0,\lu}-\Lambda_{0,\st}\L_{0,\st}^{\trans}(\deriv\R_{\lu})]\\
 & =\begin{bmatrix}\R_{0,\lu} & \R_{0,\st}\end{bmatrix}\begin{bmatrix}\deriv\Lambda_{\lu}\\
\mathcal{T}[\L_{0,\st}^{\trans}(\deriv\R_{\lu})]
\end{bmatrix},
\end{align*}
where $\mathcal{T}(M)\defeq M\Lambda_{0,\lu}-\Lambda_{0,\st}M.$
Since $\R_{0}$ is nonsingular, $\deriv H_{1}^{\ast}=0$ implies that
$\deriv\Lambda_{\lu}=0$ and $\mathcal{T}[\L_{0,\st}^{\trans}(\deriv\R_{\lu})]=0$;
but since $\Lambda_{0,\lu}$ and $\Lambda_{0,\st}$ have no common
eigenvalues, $\mathcal{T}(M)=0$ if and only if $M=0$ (\citealp{SS90},
Thm~V.1.3). Thus $\L_{0,\st}^{\trans}(\deriv\R_{\lu})=0$, whence
\[
\begin{bmatrix}\L_{0,\lu}^{\trans}\\
\L_{0,\st}^{\trans}
\end{bmatrix}\deriv\R_{\lu}=0
\]
from which it follows that $\deriv\R_{\lu}=0$, since $\L_{0}$ is
nonsingular.
\end{proof}

\begin{proof}[Proof of \lemref{derivatives}]
 \proofpart{1} We have
\[
R_{0,\lu}\Lambda_{0,\lu}^{k}-\sum_{i=1}^{k}\Phi_{i}R_{0,\lu}\Lambda_{0,\lu}^{k-i}=\PHI\R_{0,\lu}=_{(1)}\PHI_{0}\R_{0,\lu}=R_{0,\lu}\Lambda_{0,\lu}^{k}-\sum_{i=1}^{k}\Phi_{0,i}R_{0,\lu}\Lambda_{0,\lu}^{k-i}=_{(2)}0
\]
where $=_{(1)}$ is by hypothesis, and $=_{(2)}$ by \lemref{GLR}.
Since $\smlabs{\lambda_{q+1}(\PHI)}<\smlabs{\lambda_{q}(\PHI_{0})}=\smlabs{\lambda_{q}(\Lambda_{0,\lu})}$
and $\PHI\in\set P$, the result then follows by \lemref{GLR-converse}.

\proofpart{2} Analogously to \eqref{HHast} above, define
\begin{align*}
H(\R_{\lu},\Lambda_{\lu};\PHI) & \defeq\begin{bmatrix}\R_{\lu}\Lambda_{\lu}-F(\PHI)\R_{\lu}; & \G^{\trans}\R_{\lu}-I_{q}\end{bmatrix}\\
H^{\ast}(\R_{\lu},\Lambda_{\lu};\PHI) & \defeq\begin{bmatrix}\R_{\lu}\Lambda_{\lu}-F(\PHI)\R_{\lu}; & \L_{0,\lu}^{\trans}\R_{\lu}-I_{q}\end{bmatrix}.
\end{align*}
By the argument given in the proof of \lemref{implicit-maps}, there
are smooth maps $\R_{\lu}(\PHI)$, $\R_{\lu}^{\ast}(\PHI)$, $\Lambda_{\lu}(\PHI)$
and $\Lambda_{\lu}^{\ast}(\PHI)$ such that $H[\R_{\lu}(\PHI),\Lambda_{\lu}(\PHI);\PHI]=0$
and $H^{\ast}[\R_{\lu}^{\ast}(\PHI),\Lambda_{\lu}^{\ast}(\PHI);\PHI]=0$
for all $\PHI\in\set P$. Since $G^{\trans}R_{0,\lu}=I_{q}$ implies
that $\G^{\trans}\R_{0,\lu}=I_{q}$, we have $\R_{\lu}(\PHI)=\R_{\lu}^{\ast}(\PHI)=\R_{0,\lu}$
and $\Lambda_{\lu}(\PHI)=\Lambda_{\lu}^{\ast}(\PHI)=\Lambda_{0,\lu}$
when $\PHI=\PHI_{0}$, but otherwise these maps need not agree. Since
the maps $\R_{\lu}^{\ast}(\PHI)$ and $\Lambda_{\lu}^{\ast}(\PHI)$
are easier to work with, we first obtain the derivatives of these,
and subsequently those of $A(\PHI)$ and $\Lambda_{\lu}(\PHI)$ via
renormalisation, analogously to \eqref{renorm1}--\eqref{renorm2}.

Setting the total differential of $H^{\ast}$ to zero gives
\begin{equation}
0=\deriv H^{\ast}=\begin{bmatrix}\R_{0,\lu}(\deriv\Lambda_{\lu}^{\ast})+(\deriv\R_{\lu}^{\ast})\Lambda_{0,\lu}-F_{0}(\deriv\R_{\lu}^{\ast})-F(\deriv\PHI)\R_{0,\lu}; & \L_{0,\lu}^{\trans}(\deriv\R_{\lu}^{\ast})\end{bmatrix}\label{eq:dHastzero}
\end{equation}
where $F_{0}\defeq F(\PHI)$, whence by similar arguments as were
given in the proof of \lemref{implicit-maps},
\begin{equation}
F(\deriv\PHI)\R_{0,\lu}=\R_{0,\lu}(\deriv\Lambda_{\lu}^{\ast})+\R_{0,\st}\L_{0,\st}^{\trans}(\deriv\R_{\lu}^{\ast})\Lambda_{0,\lu}-\R_{0,\st}\Lambda_{0,\st}\L_{0,\st}^{\trans}(\deriv\R_{\lu}^{\ast}).\label{eq:totaldiff}
\end{equation}
Vectorising gives
\begin{align}
\vek[F(\deriv\PHI)\R_{0,\lu}] & =(I_{q}\otimes\R_{0,\lu})\vek(\deriv\Lambda_{\lu}^{\ast})+M\vek(\deriv\R_{\lu}^{\ast})\label{eq:vecFdP}
\end{align}
for $M\defeq(I_{q}\otimes\R_{0,\st})[(\Lambda_{0,\lu}^{\trans}\otimes I_{kp-q})-(I_{q}\otimes\Lambda_{0,\st})](I_{q}\otimes\L_{0,\st}^{\trans})$.
Since $\L_{0,\st}^{\trans}\R_{0,\lu}=0$ and $\L_{0,\st}^{\trans}\R_{0,\st}=I_{kp-q}$,
setting
\[
M^{\dagger}\defeq(I_{q}\otimes\R_{0,\st})[(\Lambda_{0,\lu}^{\trans}\otimes I_{kp-q})-(I_{q}\otimes\Lambda_{0,\st})]^{-1}(I_{q}\otimes\L_{0,\st}^{\trans})
\]
we have $M^{\dagger}(I_{q}\otimes\R_{0,\lu})=0$ and $M^{\dagger}M=I_{q}\otimes\R_{0,\st}\L_{0,\st}^{\trans}$.
Since $\L_{0,\lu}^{\trans}(\deriv\R_{\lu}^{\ast})=0$ by \eqref{dHastzero},
it follows that
\[
\deriv\R_{\lu}^{\ast}=(\R_{0,\lu}\L_{0,\lu}^{\trans}+\R_{0,\st}\L_{0,\st}^{\trans})\deriv\R_{\lu}^{\ast}=(\R_{0,\st}\L_{0,\st}^{\trans})\deriv\R_{\lu}^{\ast}
\]
whence $M^{\dagger}M\vek(\deriv\R_{\lu}^{\ast})=\vek(\deriv\R_{\lu}^{\ast})$,
and so premultiplying \eqref{vecFdP} by $M^{\dagger}$ gives
\begin{align*}
\vek(\deriv\R_{\lu}^{\ast}) & =M^{\dagger}\vek[F(\deriv\PHI)\R_{0,\lu}^{\ast}].
\end{align*}
By the structure of the companion form matrix, $\L_{0,\st}^{\trans}F(\deriv\PHI)\R_{0,\lu}=L_{0,\st}^{\trans}(\deriv\PHI)\R_{0,\lu}$.
Since $R$ is given by the final $p$ rows of $\R$, we have
\begin{align}
\vek(\deriv R_{\lu}^{\ast}) & =(I_{q}\otimes R_{0,\st})[(\Lambda_{0,\lu}^{\trans}\otimes I_{kp-q})-(I_{q}\otimes\Lambda_{0,\st})]^{-1}(I_{q}\otimes L_{0,\st}^{\trans})\vek\{(\deriv\PHI)\R_{0,\lu}\}\nonumber \\
 & =B(\PHI_{0})\vek\{(\deriv\PHI)\R_{0,\lu}\}.\label{eq:dRast}
\end{align}

To compute the Jacobian of $A(\PHI)$, note that by partitioning the
$p\times p$ identity matrix as
\[
\begin{bmatrix}G_{\perp} & G\end{bmatrix}\defeq\begin{bmatrix}I_{r} & 0\\
0 & I_{q}
\end{bmatrix}
\]
we have $A(\PHI)=G_{\perp}^{\trans}R_{\lu}(\PHI)=G_{\perp}^{\trans}R_{\lu}^{\ast}(\PHI)[G^{\trans}R_{\lu}^{\ast}(\PHI)]^{-1}$.
From $R_{\lu}^{\ast}(\PHI_{0})=R_{0,\lu}$, $G^{\trans}R_{0,\lu}=\G^{\trans}\R_{0,\lu}=I_{q}$
and $G_{\perp}^{\trans}R_{0,\lu}=A_{0}$, it follows that at $\PHI=\PHI_{0}$
\begin{align}
\deriv A & =G_{\perp}^{\trans}(\deriv R_{\lu}^{\ast})-(G_{\perp}^{\trans}R_{0,\lu})G^{\trans}(\deriv R_{\lu}^{\ast})=(G_{\perp}^{\trans}-A_{0}G^{\trans})\deriv R_{\lu}^{\ast}=\beta_{0}^{\trans}\deriv R_{\lu}^{\ast}\label{eq:dA}
\end{align}
for $\beta_{0}^{\trans}=[I_{r},-A_{0}]$. The first part of \eqref{Jacobs}
follows immediately from \eqref{dRast} and \eqref{dA}. For the Jacobian
of $\Lambda_{\lu}(\PHI)$, note that (as per \eqref{renorm2} above)
\[
\Lambda_{\lu}(\PHI)=[\G^{\trans}\R_{\lu}^{\ast}(\PHI)]\Lambda_{\lu}^{\ast}(\PHI)[\G^{\trans}\R_{\lu}^{\ast}(\PHI)]^{-1}
\]
whence at $\PHI=\PHI_{0}$,
\begin{align*}
\deriv\Lambda_{\lu} & =\G^{\trans}(\deriv\R_{\lu}^{\ast})\Lambda_{0,\lu}+\deriv\Lambda_{\lu}^{\ast}-\Lambda_{0,\lu}\G^{\trans}(\deriv\R_{\lu}^{\ast}).
\end{align*}
Recognising that $\G^{\trans}(\deriv\R_{\lu}^{\ast})=G^{\trans}(\deriv R_{\lu}^{\ast})$
and vectorising, we have
\begin{equation}
\vek(\deriv\Lambda_{\lu})=\{(\Lambda_{0,\lu}^{\trans}\otimes I_{q})-(I_{q}\otimes\Lambda_{0,\lu})\}(I_{q}\otimes G^{\trans})\vek(\deriv R_{\lu}^{\ast})+\vek(\deriv\Lambda_{\lu}^{\ast}).\label{eq:dLam}
\end{equation}
$\deriv R_{\lu}^{\ast}$ is given in \eqref{dRast} above. To obtain
$\deriv\Lambda_{\lu}^{\ast}$, note that premultiplying \eqref{totaldiff}
by $\L_{0,\lu}^{\trans}$ yields 
\begin{equation}
\deriv\Lambda_{\lu}^{\ast}=\L_{0,\lu}^{\trans}F(\deriv\PHI)\R_{0,\lu}=L_{0,\lu}^{\trans}(\deriv\PHI)\R_{0,\lu}.\label{eq:dLamast}
\end{equation}
Thus \eqref{dRast}, \eqref{dLam} and \eqref{dLamast} give the second
part of \eqref{Jacobs}.

\proofpart{3} Continuity of $J(\PHI)$ is immediate from $A(\PHI)$
and $\Lambda_{\lu}(\PHI)$ being smooth.
\end{proof}

\begin{proof}[Proof of \lemref{derivatunity}]
 The stated expression for $J(\PHI)$ is immediate from \eqref{Bdef},
\lemref{derivatives}, and $\Lambda_{\lu}(\PHI)=I_{q}$. That $J(\PHI)$
is nonsingular will follow once we have shown that the $(p\times p)$
matrix
\begin{equation}
K\defeq\begin{bmatrix}\beta^{\trans}R_{\st}(I_{kp-q}-\Lambda_{\st})^{-1}L_{\st}^{\trans}\\
L_{\lu}^{\trans}
\end{bmatrix}\label{eq:K-matrix}
\end{equation}
is nonsingular. We first note the following facts. Since $\PHI\in\set P$
with $\Lambda_{\lu}(\PHI)=I_{q}$, it follows from \eqref{AlambdaR}
that $\rank\Phi(1)\leq p-q$. Since $\Phi(\cdot)$ has exactly $q$
roots at unity, the reverse inequality holds by Corollary~4.3 of
\citet{Joh95}, whence $\rank\Phi(1)=p-q$. Thus \assref{J} holds:
this implies that $\spn\beta=\spn\Phi(1)^{\trans}$ and $\rank L_{\lu}=q$
(see \lemref{qcs} and the characterisation of the CS discussed in
\subsecref{coint-ur}).

Now let $c\in\reals^{p}$ be such that $Kc=0$, so that in particular
$L_{\lu}^{\trans}c=0$. Since $\rank\Phi(1)+\rank L_{\lu}=p$, while
\eqref{eig-eig} with $\Lambda_{\lu}=I_{q}$ implies $L_{\lu}^{\trans}\Phi(1)=0$,
it follows that $c\in\spn\Phi(1)$, i.e.\ $c=\Phi(1)b$ for some
$b\in\reals^{p}$. By \citet[Thm~2.4]{GLR82}, $\Phi(\mu)^{-1}=R(\mu I-\Lambda)^{-1}L^{\trans}$
for any $\mu$ not a root of $\Phi(\cdot)$. Since the columns of
the quasi-cointegrating matrix $\beta$ are orthogonal to $R_{\lu}$,
we have
\begin{equation}
\beta^{\trans}=\beta^{\trans}R_{\st}(\mu I_{kp-q}-\Lambda_{\st})^{-1}L_{\st}^{\trans}\Phi(\mu)\goesto\beta^{\trans}R_{\st}(I_{kp-q}-\Lambda_{\st})^{-1}L_{\st}^{\trans}\Phi(1)\label{eq:betastuff}
\end{equation}
by the continuity of the r.h.s., as $\mu\goesto1$, since $\Lambda_{\st}$
has no eigenvalues at unity. Hence
\[
0=Kc=\begin{bmatrix}\beta^{\trans}R_{\st}(I_{kp-q}-\Lambda_{\st})^{-1}L_{\st}^{\trans}\Phi(1)b\\
0
\end{bmatrix}=\begin{bmatrix}\beta^{\trans}b\\
0
\end{bmatrix}
\]
implying $\beta^{\trans}b=0$. But $\spn\beta=\spn\Phi(1)^{\trans}$,
so we must have $\Phi(1)b=0$. Thus $c=0$, from which it follows
that $K$ is nonsingular.
\end{proof}

\section{Asymptotics}

\label{app:asymptotics}

The assumptions \assref{DGP} and \assref{LOC} are maintained throughout
this appendix. We first recall some notation. Let $\PHI_{0}\defeq\lim_{n\goesto\infty}\PHI_{n}$,
where $\{\PHI_{n}\}$ is the sequence specified by \assref{LOC}.
Let $R_{n}\defeq[R_{\lu}(\PHI_{n}),R_{\st}]$ and $\Lambda_{n}\defeq\diag\{\Lambda_{n,\lu},\Lambda_{\st}\}$
be as in \assref{LOC}. Take $\R_{n}\defeq\col\{R_{n}\Lambda_{n}^{k-i}\}_{i=1}^{k}$
and $\L_{n}\defeq(\R_{n}^{\trans})^{-1}$ as in \lemref{GLR}, and
partition these as $\R_{n}=[\R_{n,\lu},\R_{n,\st}]$\label{app:RnSTdef}
and $\L_{n}=[\L_{n,\lu},\L_{n,\st}]$ (as per \eqref{partition});
note that both these matrices are convergent.

Let $z_{\lu,t}\defeq\L_{n,\lu}^{\trans}\x_{t}$ and $z_{\st,t}=\L_{n,\st}^{\trans}\x_{t}$
be as in \lemref{repasAR} (for $\PHI=\PHI_{n}$); these follow the
autoregressions given in \eqref{zproc}. Recall $E\sim\BM(\Sigma)$
and $Z_{C}(r)\defeq\int_{0}^{r}\e^{C(r-s)}L_{\lu}^{\trans}\deriv E(s)$
from \eqref{Zproc}. For $i\in\{\lu,\st\}$, let $\zdet_{i,t}$ denote
the residual from an OLS regression of $\{\zdet_{\lu,t-1}\}_{t=1}^{n}$
onto a constant and linear trend. Recall that $\Zdet_{C}$ denotes
the residual from an $L^{2}[0,1]$ projection of each sample path
of $Z_{C}$ onto a constant and linear trend. As in \subsecref{likelihood},
let $\hat{\Sigma}_{n}$ denote the unrestricted MLE for $\Sigma$,
i.e.\ the OLS residual variance matrix estimator.

Proofs of the following results appear at the end of this section.
\begin{lem}
\label{lem:wkconv}The following hold jointly:
\begin{enumerate}
\item \label{enu:wkconv:donsker}$n^{-1/2}\sum_{t=1}^{\smlfloor{nr}}\err_{t}\wkc E(r)$
\item $n^{-1/2}z_{\lu,\smlfloor{nr}}\wkc Z_{C}(r)$
\item \label{enu:wkconv:Zc}$n^{-1/2}\zdet_{\lu,\smlfloor{nr}}\wkc\Zdet_{C}(r)$
\end{enumerate}
as weak convergences on the space of right-continuous functions $[0,1]\setmap\reals^{m}$
(with respect to the uniform topology); and
\begin{enumerate}[resume]
\item \label{enu:wkconv:si}$n^{-1}\sum_{t=1}^{n}(\zdet_{\lu,t-1}\otimes\err_{t})\wkc\int_{0}^{1}[\Zdet_{C}(r)\otimes\deriv E(r)]\diff r$
\item \label{enu:wkconv:mgclt}$n^{-1/2}\sum_{t=1}^{n}(\zdet_{\st,t-1}\otimes\err_{t})\wkc\xi\sim\normdist[0,\Omega\otimes\Sigma]$
\item \label{enu:wkconv:varmat}$\hat{\Sigma}_{n}\inprob\Sigma$,
\end{enumerate}
where $\Omega\defeq\lim_{n\goesto\infty}\var(z_{\st,n})$ is positive
definite, and $\xi$ is independent of $E$.
\end{lem}
Now define the reparametrisation $\PHI\elmap\f$ by
\begin{equation}
\f\defeq\begin{bmatrix}\f_{\lu}\\
\f_{\st}
\end{bmatrix}=\begin{bmatrix}\vek\{(\PHI-\PHI_{n})\R_{n,\lu}\}\\
\vek\{(\PHI-\PHI_{n})\R_{n,\st}\}
\end{bmatrix}=\vek\{(\PHI-\PHI_{n})\R_{n}\},\label{eq:reparam}
\end{equation}
which is reversed by setting $\PHI=\PHI_{n}+\vek^{-1}(\f)\L_{n}^{\trans}$,
where $\vek^{-1}(x)$ maps $x\in\reals^{kp^{2}}$ to the matrix $X\in\reals^{p\times kp}$
for which $\vek(X)=x$. The parameter space for $\f$ is the open
set
\begin{equation}
\spcf_{n}\defeq\{\vek[(\PHI-\PHI_{n})\R_{n}]\mid\PHI\in\set P\},\label{eq:Psetn}
\end{equation}
and the true parameters correspond to $\f=0$. Although $\spcf_{n}$
depends on $n$, since $\PHI_{n}\goesto\PHI_{0}\in\set P$ and $\set P$
is open (\lemref{implicit-maps}), there is an $\epsilon>0$ such
that $\spcf_{n}$ contains a ball of radius $\epsilon$ centred at
the origin, for all $n$ sufficiently large. Let
\[
\likens_{n}(\f)\defeq\like_{n}[\PHI_{n}+\vek^{-1}(\f)\L_{n}^{\trans},\hat{\Sigma}_{n}].
\]
Define $D_{n}\defeq\diag\{nI_{\snum\lu},n^{1/2}I_{\snum\st}\}$, where
$\snum\lu\defeq pq$ and $\snum\st\defeq p(kp-q)$ correspond to the
dimensions of the vectors $\f_{\lu}$ and $\f_{\st}$ respectively.
\begin{lem}
\label{lem:lhoodexp}There exist $S_{n}$ and $H_{n}$ such that for
all $\f\in\spcf_{n}$,
\[
\likens_{n}(\f)-\likens(0)=S_{n}^{\trans}(D_{n}\f)-\tfrac{1}{2}(D_{n}\f)^{\trans}H_{n}(D_{n}\f)
\]
where
\begin{gather*}
S_{n}\wkc\begin{bmatrix}\int_{0}^{1}[\Zdet_{C}(r)\otimes\Sigma^{-1}\deriv E(r)]\\
\xi
\end{bmatrix}\eqdef\begin{bmatrix}S_{\lu}\\
S_{\st}
\end{bmatrix}\eqdef S\\
H_{n}\wkc\begin{bmatrix}\int\Zdet_{C}\Zdet_{C}^{\trans} & 0\\
0 & \Omega
\end{bmatrix}\otimes\Sigma^{-1}\eqdef\begin{bmatrix}H_{\lu} & 0\\
0 & H_{\st}
\end{bmatrix}\eqdef H,
\end{gather*}
for $\xi$ as in \lemref{wkconv}.
\end{lem}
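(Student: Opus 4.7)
Since $\hat{\Sigma}_{n}$ is held fixed as $\f$ varies and $\f\elmap\PHI$ is affine, $\likens_{n}(\f)-\likens_{n}(0)$ is \emph{exactly} a quadratic polynomial in $\f$; no Taylor remainder appears. The plan is therefore to read off its linear and quadratic coefficients directly, and then identify their limits using \lemref{wkconv}.

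\emph{Step 1 (FWL reduction).} Concentrating out the unrestricted $(m,d)$ in the reduced form gives, by Frisch--Waugh--Lovell,
\[
\likens_{n}(\PHI)=-\tfrac{n}{2}\log\det\hat{\Sigma}_{n}-\tfrac{1}{2}\sum_{t=1}^{n}\|\tilde{y}_{t}-\PHI\tilde{Y}_{t-1}\|_{\hat{\Sigma}_{n}^{-1}}^{2},
\]
where $\tilde{y}_{t}$ and $\tilde{Y}_{t-1}$ are the residuals from projecting $y_{t}$ and $Y_{t-1}\defeq\col\{y_{t-i}\}_{i=1}^{k}$ onto $\operatorname{span}(1,t)$ in $\reals^{n}$. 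Under \assref{LOC}, $y_{t}-\PHI_{n}Y_{t-1}=m_{n}+d_{n}t+\err_{t}$, so $\tilde{y}_{t}-\PHI_{n}\tilde{Y}_{t-1}=\widetilde{\err}_{t}$. Because the deterministic part of $Y_{t-1}$ is linear in $t$ and hence annihilated by detrending, $\tilde{Y}_{t-1}$ coincides with the detrended version of $\x_{t-1}$, and \lemref{repasAR} yields $\L_{n}^{\trans}\tilde{Y}_{t-1}=\zdet_{t-1}\defeq[\zdet_{\lu,t-1}^{\trans},\zdet_{\st,t-1}^{\trans}]^{\trans}$.

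\emph{Step 2 (Expansion and identification).} Substituting $\PHI=\PHI_{n}+\vek^{-1}(\f)\L_{n}^{\trans}$ gives $\tilde{y}_{t}-\PHI\tilde{Y}_{t-1}=\widetilde{\err}_{t}-\vek^{-1}(\f)\zdet_{t-1}$. Using the identity $\vek^{-1}(\f)\zdet_{t-1}=(\zdet_{t-1}^{\trans}\otimes I_{p})\f$, and noting that $\sum_{t}\zdet_{t-1}(\widetilde{\err}_{t}-\err_{t})^{\trans}=0$ (since $\widetilde{\err}_{t}-\err_{t}\in\operatorname{span}(1,t)$ is orthogonal, summing over $t$, to the detrended $\zdet_{t-1}$), the cross-product term collapses to $\sum_{t}(\zdet_{t-1}\otimes\hat{\Sigma}_{n}^{-1}\err_{t})^{\trans}\f$. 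Introducing the rate matrix $D_{n}=\diag\{nI_{\snum\lu},n^{1/2}I_{\snum\st}\}$ then gives $\likens_{n}(\f)-\likens_{n}(0)=S_{n}^{\trans}(D_{n}\f)-\tfrac{1}{2}(D_{n}\f)^{\trans}H_{n}(D_{n}\f)$ with
\begin{align*}
S_{n} & =\begin{bmatrix}n^{-1}\sum_{t}\zdet_{\lu,t-1}\otimes\hat{\Sigma}_{n}^{-1}\err_{t}\\
n^{-1/2}\sum_{t}\zdet_{\st,t-1}\otimes\hat{\Sigma}_{n}^{-1}\err_{t}
\end{bmatrix},\\
H_{n} & =\begin{bmatrix}n^{-2}\sum_{t}\zdet_{\lu,t-1}\zdet_{\lu,t-1}^{\trans} & n^{-3/2}\sum_{t}\zdet_{\lu,t-1}\zdet_{\st,t-1}^{\trans}\\
n^{-3/2}\sum_{t}\zdet_{\st,t-1}\zdet_{\lu,t-1}^{\trans} & n^{-1}\sum_{t}\zdet_{\st,t-1}\zdet_{\st,t-1}^{\trans}
\end{bmatrix}\otimes\hat{\Sigma}_{n}^{-1}.
\end{align*}

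\emph{Step 3 (Limits, and the main obstacle).} Given consistency $\hat{\Sigma}_{n}\inprob\Sigma$ and the continuous mapping theorem, the limit of $S_{n}$ follows directly from parts~\ref{enu:wkconv:si}--\ref{enu:wkconv:mgclt} of \lemref{wkconv}. For $H_{n}$: the $(\lu,\lu)$-block converges by CMT applied to the weak convergence $n^{-1/2}\zdet_{\lu,\smlfloor{nr}}\wkc\Zdet_{C}$ in part~(iii); and the $(\st,\st)$-block converges in probability to $\Omega$ by the LLN for the stationary process $\{z_{\st,t}\}$ (the detrending perturbation is of order $n^{-1}$, since the OLS coefficients on $(1,t)$ for a mean-zero stationary regressand are $O_{p}(n^{-1/2})$). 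The main obstacle is showing the off-diagonal block $n^{-3/2}\sum_{t}\zdet_{\lu,t-1}\zdet_{\st,t-1}^{\trans}\inprob0$, which produces the asymptotic block-diagonality of $H$. The cleanest route is to right-multiply the raw sum by $(I-\Lambda_{\st}^{\trans})$ and invoke the one-step identity $(I-\Lambda_{\st})z_{\st,t-1}=L_{n,\st}^{\trans}\err_{t}-\Delta z_{\st,t}$, which decomposes the result into a martingale-CLT contribution $\sum_{t}z_{\lu,t-1}\err_{t}^{\trans}L_{n,\st}$ and a summation-by-parts boundary term, each $O_{p}(n)$; invertibility of $I-\Lambda_{\st}$ then gives $\sum_{t}z_{\lu,t-1}z_{\st,t-1}^{\trans}=O_{p}(n)$, so that rescaling by $n^{-3/2}$ yields $o_{p}(1)$, the detrending correction being of strictly smaller order. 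Joint convergence of $(S_{n},H_{n})$ to $(S,H)$ then follows from the joint statement of \lemref{wkconv} together with Slutsky's theorem.
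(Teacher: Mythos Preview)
Your argument is correct and follows essentially the same route as the paper: an exact quadratic expansion of the concentrated log-likelihood after FWL detrending, yielding the same expressions for $S_{n}$ and $H_{n}$, followed by an appeal to \lemref{wkconv} and the continuous mapping theorem. The paper's own proof is rather more terse on the limits of $H_{n}$, simply asserting that they ``follow by \lemref{wkconv} and the continuous mapping theorem''; your Step~3 actually supplies the missing details for the $(\st,\st)$ and off-diagonal blocks (which are not explicitly covered by the statement of \lemref{wkconv}), so in that sense your write-up is more complete than the paper's.
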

Define the constraint maps
\begin{align}
\theta_{n}(\f) & \defeq\vek\{\Lambda_{\lu}[\PHI_{n}+\vek^{-1}(\f)\L_{n}^{\trans}]-(I_{q}+C/n)\}\label{eq:theta-n}\\
\gamma_{n}(\f) & \defeq a_{ij}[\PHI_{n}+\vek^{-1}(\f)\L_{n}^{\trans}]-a_{ij}(\PHI_{n}),\nonumber 
\end{align}
and the associated restricted parameter spaces
\begin{align*}
\spcf_{n\mid\theta} & \defeq\{\f\in\spcf_{n}\mid\theta_{n}(\f)=0\}\\
\spcf_{n\mid\theta,\gamma} & \defeq\{\f\in\spcf_{n}\mid\theta_{n}(\f)=0\text{ and }\gamma_{n}(\f)=0\}.
\end{align*}
Let $\hat{\f}_{n}$, $\hat{\f}_{n\mid\theta}$ and $\hat{\f}_{n\mid\theta,\gamma}$
denote exact maximisers of $\likens_{n}(\f)$ over the sets $\spcf_{n}$,
$\spcf_{n\mid\theta}$ and $\spcf_{n\mid\theta,\gamma}$ respectively:
which may be shown to exist at least with with probability approaching
one (w.p.a.1), and may be arbitrarily defined otherwise.
\begin{lem}
\label{lem:consistency} Each of $D_{n}\hat{\f}_{n}$, $D_{n}\hat{\f}_{n\mid\theta}$
and $D_{n}\hat{\f}_{n\mid\theta,\gamma}$ are $O_{p}(1)$.
\end{lem}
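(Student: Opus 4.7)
The plan is to exploit the fact that, by Lemma~\ref{lem:lhoodexp}, after the change of variables $u = D_n \f$ the centred quasi-loglikelihood is \emph{exactly} the quadratic $g_n(u) \defeq S_n^\trans u - \tfrac12 u^\trans H_n u$ on all of $\spcf_n$, with tight score $S_n \wkc S$ and positive-definite limit $H_n \wkc H$. Positive-definiteness of $H$ holds a.s.\ thanks to the non-degeneracy of $Z_C$ and of the stationary long-run variance $\Omega$, so $\lambda_{\min}(H_n) \inprob \lambda_{\min}(H) > 0$ by continuous mapping. Because $\f = 0$ lies in each of $\spcf_n$, $\spcf_{n\mid\theta}$ and $\spcf_{n\mid\theta,\gamma}$, any maximizer $\hat{\f}$ trivially satisfies $g_n(D_n \hat{\f}) \geq g_n(0) = 0$. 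No remainder term requires control.

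For $\hat{\f}_n$, the unique global maximizer of $g_n$ on $\reals^{kp^2}$ is $u_n^\ast = H_n^{-1} S_n = O_p(1)$ by continuous mapping. Since $\spcf_n$ contains a deterministic $\epsilon$-ball about the origin for all $n$ large (as noted immediately below equation~\eqref{eq:Psetn}), and $\f_n^\ast \defeq D_n^{-1} u_n^\ast = O_p(n^{-1/2})$, w.p.a.1\ $\f_n^\ast \in \spcf_n$; hence $\hat{\f}_n = \f_n^\ast$ and $D_n \hat{\f}_n = u_n^\ast = O_p(1)$.

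For the restricted estimators a single truncation argument handles existence and tightness simultaneously. For any $u \in D_n\spcf_n$,
\[
g_n(u) \leq \smlnorm{S_n}\smlnorm{u} - \tfrac12 \lambda_{\min}(H_n) \smlnorm{u}^2,
\]
which is strictly negative whenever $\smlnorm{u} > M_n \defeq 2\smlnorm{S_n}/\lambda_{\min}(H_n) = O_p(1)$. Given $\eta > 0$, choose a deterministic $M$ so that $\liminf_n \Prob(M_n \leq M) \geq 1 - \eta$. On $\{M_n \leq M\}$, provided $n$ is large enough that $M n^{-1/2} < \epsilon$, the set $K_M \defeq \{\f \in \spcf_n : \smlnorm{D_n \f} \leq M\}$ is a compact subset of $\spcf_n$; its intersections with $\spcf_{n\mid\theta}$ and $\spcf_{n\mid\theta,\gamma}$ are closed (by continuity of $\theta_n$ and $\gamma_n$, via Lemma~\ref{lem:implicit-maps}) and contain $\f = 0$, hence are nonempty and compact. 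So $\likens_n$ attains its maximum on each of these compact sets, and by the truncation inequality together with $g_n(0) = 0$ that maximum cannot lie on the boundary $\smlnorm{D_n \f} = M$. Thus both $\hat{\f}_{n\mid\theta}$ and $\hat{\f}_{n\mid\theta,\gamma}$ exist and satisfy $\smlnorm{D_n \hat{\f}_{n\mid\cdot}} \leq M$ w.p.\ at least $1 - \eta$ for all $n$ large, yielding the required $O_p(1)$ bounds.

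The main obstacle is conceptual rather than computational: the constraint sets are open and potentially non-convex, so even a concave quadratic maximization over them need not admit a maximizer a priori. The key point is that strict concavity of $g_n$ combined with the truncation inequality reduces the problem to a maximization over a compact set, where continuity of the implicitly-defined constraint maps from Appendix~\ref{app:perturbation} suffices to conclude. The remaining work is essentially mechanical.
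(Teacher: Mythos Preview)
Your argument is correct and follows essentially the same truncation strategy as the paper's proof: both exploit the exact quadratic form from \lemref{lhoodexp}, the bound $g_n(u)\leq\smlnorm{S_n}\smlnorm u-\tfrac12\lambda_{\min}(H_n)\smlnorm u^2$, and the fact that $\f=0$ lies in every constraint set. You are more explicit than the paper in two respects---you handle the unrestricted case by directly identifying the global quadratic maximiser, and you supply the existence argument for the restricted maximisers (which the paper merely asserts)---and one notational slip: $H_n\wkc H$ is weak convergence, so write $\lambda_{\min}(H_n)\wkc\lambda_{\min}(H)$ rather than $\inprob$; since $\lambda_{\min}(H)>0$ a.s., this still gives $\lambda_{\min}(H_n)^{-1}=O_p(1)$ as you need.
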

Let $\grad_{\f}g(\f_{0})$ denote the gradient of $g:\spcf\setmap\reals^{d_{g}}$
at $\f=\f_{0}$. The derivatives of the maps $\theta_{n}$ and $\gamma_{n}$
can be inferred from \lemref{derivatives}. Part~\enuref{deriv:values}
of that result gives the derivatives with respect to $\f_{\lu}$,
and part~\enuref{deriv:zero} implies that when $\f_{\lu}=0$, the
first (and higher order) derivatives with respect to $\f_{\st}$ are
identically zero. Now letting $e_{d,i}\in\reals^{d}$ denote a vector
with zero everywhere except for a $1$ in the $i$th position, define
\[
\Pi\defeq[\begin{matrix}\Theta; & \Gamma\end{matrix}]\defeq[\begin{matrix}I_{q}\otimes L_{\lu}; & e_{q,j}\otimes L_{\st}(I_{kp-q}-\Lambda_{\st}^{\trans})^{-1}R_{\st}^{\trans}\beta e_{r,i}\end{matrix}],
\]
which by \lemref{derivatunity} has full column rank, and
\begin{align*}
\THETA & \defeq\begin{bmatrix}\Theta\\
0_{\#\st\times q^{2}}
\end{bmatrix} & \PI & \defeq\begin{bmatrix}\Pi\\
0_{\#\st\times(q^{2}+1)}
\end{bmatrix}.
\end{align*}

\begin{lem}
\label{lem:deriv-limits}~
\begin{enumerate}
\item \label{enu:grad-lim}Let $\{\tilde{\f}_{n}\}$ denote a random sequence
in $\spcf_{n}$ with $\tilde{\f}_{n}\inprob0$. Then
\begin{align*}
\grad_{\f}\theta_{n}(\tilde{\f}_{n}) & \inprob\THETA & \grad_{\f}\gamma_{n}(\tilde{\f}_{n}) & \inprob\GAMMA.
\end{align*}
\item \label{enu:proj-lim}Let $\proj_{\THETA,\perp}$ and $\proj_{\PI,\perp}$
denote orthogonal projections from $\reals^{kp^{2}}$ onto the subspaces
orthogonal to the the columns of $\THETA$ and $\PI$. Then
\begin{align*}
D_{n}\hat{\f}_{n\mid\theta} & =\proj_{\THETA,\perp}D_{n}\hat{\f}_{n\mid\theta}+o_{p}(1)\\
D_{n}\hat{\f}_{n\mid\theta,\gamma} & =\proj_{\PI,\perp}D_{n}\hat{\f}_{n\mid\theta,\gamma}+o_{p}(1).
\end{align*}
\end{enumerate}
\end{lem}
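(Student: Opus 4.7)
The plan is to derive both parts from \lemref{derivatives}: part~\enuref{deriv:values} drives the chain-rule computation behind \enuref{grad-lim}, and the stronger constancy property in part~\enuref{deriv:zero} drives the factorisation underpinning \enuref{proj-lim}. In both cases, \lemref{derivatunity} supplies the explicit value of the relevant Jacobian at $\PHI_{0}$, where $\Lambda_{\lu}(\PHI_{0})=I_{q}$.

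For \enuref{grad-lim}, I would apply the chain rule to the composition $\f\mapsto\PHI_{n}+\vek^{-1}(\f)\L_{n}^{\trans}\mapsto\Lambda_{\lu}(\cdot)$. The outer derivative is furnished by \lemref{derivatives}\enuref{deriv:values}, while the inner derivative is the Jacobian of an affine map in $\f$ with coefficients involving $\L_{n}$ and $\R_{\lu}(\PHI)$. The resulting expression for $\grad_{\f}\theta_{n}(\f)$ is jointly continuous in $(\f,\PHI_{n},\L_{n})$ by \lemref{derivatives}\enuref{deriv:cont} together with the smoothness secured in \lemref{implicit-maps}. Since $\tilde{\f}_{n}\inprob0$, $\PHI_{n}\goesto\PHI_{0}$ and $\L_{n}$ is convergent, the composed argument $\PHI_{n}+\vek^{-1}(\tilde{\f}_{n})\L_{n}^{\trans}$ converges in probability to $\PHI_{0}$, so by the continuous mapping theorem the problem reduces to evaluating $\grad_{\f}\theta_{n}$ at $\f=0$ with $\PHI$ replaced by $\PHI_{0}$. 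There \lemref{derivatunity} gives $J_{\Lambda}(\PHI_{0})=I_{q}\otimes L_{\lu}^{\trans}$, which combined with the block identity $\L_{n}^{\trans}\R_{n,\lu}=\begin{bmatrix}I_{q}\\ 0\end{bmatrix}$ (immediate from $\L_{n}^{\trans}\R_{n}=I_{kp}$) delivers the stated block form of $\THETA$; the zero lower block reflects the fact that $\theta_{n}$ does not depend on $\f_{\st}$ to first order. The computation for $\grad_{\f}\gamma_{n}$ is entirely parallel, using $J_{A}(\PHI_{0})$ from \lemref{derivatunity} together with the scalar identity $a_{ij}=(e_{q,j}^{\trans}\otimes e_{r,i}^{\trans})\vek A$.

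For \enuref{proj-lim}, the key observation is that \lemref{derivatives}\enuref{deriv:zero}, applied with $\PHI_{0}$ there replaced by $\PHI_{n}$ here, says that whenever $\f_{\lu}=\vek\{(\PHI-\PHI_{n})\R_{n,\lu}\}=0$ (and $\f\in\spcf_{n}$) one has $\Lambda_{\lu}(\PHI)=\Lambda_{\lu}(\PHI_{n})$ and $A(\PHI)=A(\PHI_{n})$; equivalently, both $\theta_{n}$ and $\gamma_{n}$ vanish identically on $\spcf_{n}\cap\{\f_{\lu}=0\}$. The integral form of the fundamental theorem of calculus then delivers the \emph{exact} factorisation
\[
\theta_{n}(\f)=G_{\theta,n}(\f)\f_{\lu},\qquad G_{\theta,n}(\f)\defeq\int_{0}^{1}[\grad_{\f_{\lu}}\theta_{n}(s\f_{\lu},\f_{\st})]^{\trans}\deriv s,
\]
in a neighbourhood of the origin, and analogously $\gamma_{n}(\f)=G_{\gamma,n}(\f)\f_{\lu}$. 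Setting $\f=\hat{\f}_{n\mid\theta}$ (which equals $o_{p}(1)$ by \lemref{consistency} and annihilates $\theta_{n}$) and invoking \enuref{grad-lim} inside the integrand to conclude $G_{\theta,n}(\hat{\f}_{n\mid\theta})\inprob\Theta^{\trans}$, I obtain $\Theta^{\trans}\hat{\f}_{n\mid\theta,\lu}=o_{p}(\smlnorm{\hat{\f}_{n\mid\theta,\lu}})$. Multiplying by $n$ and noting that $\THETA$ is aligned with the blocks of $D_{n}\hat{\f}_{n\mid\theta}=(n\hat{\f}_{n\mid\theta,\lu}^{\trans},n^{1/2}\hat{\f}_{n\mid\theta,\st}^{\trans})^{\trans}$ gives $\THETA^{\trans}D_{n}\hat{\f}_{n\mid\theta}=o_{p}(\smlnorm{D_{n}\hat{\f}_{n\mid\theta}})$. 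The orthogonal decomposition $D_{n}\hat{\f}_{n\mid\theta}=\proj_{\THETA}D_{n}\hat{\f}_{n\mid\theta}+\proj_{\THETA,\perp}D_{n}\hat{\f}_{n\mid\theta}$, combined with $\smlnorm{\proj_{\THETA}v}\leq\smlnorm{\THETA(\THETA^{\trans}\THETA)^{-1}}\cdot\smlnorm{\THETA^{\trans}v}$ and the full column rank of $\THETA$ (inherited from $L_{\lu}$), then yields the first claim. The second claim follows by the same argument applied to the stacked constraint map $\f\mapsto(\theta_{n}(\f)^{\trans},\gamma_{n}(\f))^{\trans}$, whose derivative limit at $\f=0$ is precisely $\PI$.

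The principal technical subtlety is to secure a remainder of order $o_{p}(\smlnorm{D_{n}\hat{\f}_{n\mid\theta}})$ rather than the weaker $o_{p}(1)$, and it is here that part~\enuref{deriv:zero} of \lemref{derivatives} is indispensable: it provides the \emph{exact} factorisation $\theta_{n}(\f)=G_{\theta,n}(\f)\f_{\lu}$ rather than a first-order Taylor expansion with an $O(\smlnorm{\f}^{2})$ remainder. Were only first-order vanishing available, multiplying the resulting quadratic remainder by $n$ would blow it up (the $\f_{\st}$-block of $\f$ being only $O_{p}(n^{-1/2})$) and could not be absorbed into the required $o_{p}(\smlnorm{D_{n}\hat{\f}_{n\mid\theta}})$ bound.
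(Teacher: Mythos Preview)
Your proposal is correct and follows essentially the same route as the paper. For part~\enuref{grad-lim} both you and the paper invoke continuity of the Jacobian (\lemref{derivatives}\enuref{deriv:cont}, \lemref{implicit-maps}) together with $\PHI_{n}\goesto\PHI_{0}$ and $\tilde{\f}_{n}\inprob0$, then evaluate at $\PHI_{0}$ via \lemref{derivatunity}. For part~\enuref{proj-lim} the paper also exploits \lemref{derivatives}\enuref{deriv:zero} to get $\theta_{n}(0,\hat{\f}_{n,\st\mid\theta})=0$ and then expands $\theta_{n}(\hat{\f}_{n,\lu\mid\theta},\hat{\f}_{n,\st\mid\theta})-\theta_{n}(0,\hat{\f}_{n,\st\mid\theta})$ in $\f_{\lu}$ alone; your explicit integral-form factorisation $\theta_{n}(\f)=G_{\theta,n}(\f)\f_{\lu}$ is simply the same mean-value expansion written out, and is arguably cleaner since it sidesteps the usual caveat about vector-valued mean-value theorems.
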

Let $\Theta_{\perp}\in\reals^{pq\times qr}$ and $\Pi_{\perp}\in\reals^{pq\times(qr-1)}$
denote matrices having full column rank, such that $\Theta_{\perp}^{\trans}\Theta=0$
and $\Pi_{\perp}^{\trans}\Pi=0$. We may take $\Theta_{\perp}=I_{q}\otimes L_{\lu,\perp}$,
for $L_{\lu,\perp}$ a $p\times r$ matrix having rank $r$ and for
which $L_{\lu,\perp}^{\trans}L_{\lu}=0$. Since $\Pi=[\Theta,\Gamma]$
there exists a full column rank matrix $\Xi\in\reals^{qr\times(qr-1)}$
for which $\Pi_{\perp}\defeq\Theta_{\perp}\Xi$.
\begin{prop}
\label{prop:andrews}~
\begin{enumerate}
\item \label{enu:andrews:unres}$D_{n}\hat{\f}_{n}=\begin{bmatrix}n\hat{\f}_{n,\lu}\\
n^{1/2}\hat{\f}_{n,\st}
\end{bmatrix}\wkc\begin{bmatrix}H_{\lu}^{-1}S_{\lu}\\
H_{\st}^{-1}S_{\st}
\end{bmatrix}$,
\item \label{enu:andrews:res}$D_{n}\hat{\f}_{n\mid\theta}=\begin{bmatrix}n\hat{\f}_{n,\lu\mid\theta}\\
n^{1/2}\hat{\f}_{n,\st\mid\theta}
\end{bmatrix}\wkc\begin{bmatrix}\Theta_{\perp}(\Theta_{\perp}^{\trans}H_{\lu}\Theta_{\perp})^{-1}\Theta_{\perp}^{\trans}S_{\lu}\\
H_{\st}^{-1}S_{\st}
\end{bmatrix}$,
\item \label{enu:andrews:lrroot}$2[\likens_{n}(\hat{\f}_{n})-\likens_{n}(\hat{\f}_{n\mid\theta})]\wkc S_{\lu}^{\trans}H_{\lu}^{-1}\Theta(\Theta^{\trans}H_{\lu}^{-1}\Theta)^{-1}\Theta^{\trans}H_{\lu}^{-1}S_{\lu}$.
\end{enumerate}
Let $H_{\Theta,\perp}\defeq\Theta_{\perp}^{\trans}H_{\lu}\Theta_{\perp}$,
and $\mathcal{Q}\in\reals^{qr\times qr}$ denote the orthogonal projection
onto $\spn H_{\Theta,\perp}^{1/2}\Xi$. Then
\begin{enumerate}[resume]
\item \label{enu:andrews:coef}$2[\likens_{n}(\hat{\f}_{n\mid\theta})-\likens_{n}(\hat{\f}_{n\mid\theta,\gamma})]\wkc(H_{\Theta,\perp}^{-1/2}\Theta_{\perp}^{\trans}S_{\lu})^{\trans}[I_{qr}-\mathcal{Q}](H_{\Theta,\perp}^{-1/2}\Theta_{\perp}^{\trans}S_{\lu}).$
\end{enumerate}
\end{prop}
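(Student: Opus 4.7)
The proof rests on three ingredients developed earlier: the \emph{exact} quadratic expansion of $\likens_n$ from \lemref{lhoodexp}, the consistency bounds on $D_n \hat\f_n$, $D_n \hat\f_{n\mid\theta}$ and $D_n \hat\f_{n\mid\theta,\gamma}$ from \lemref{consistency}, and the linearization of the nonlinear constraint maps from \lemref{deriv-limits}. Writing $u \defeq D_n \f$, the plan is to treat each of (i)--(iv) as the maximization of a positive-definite quadratic form $S_n^\trans u - \tfrac12 u^\trans H_n u$ over a linear subspace determined, asymptotically, by the relevant constraint.

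For part~(i), since the representation is exactly quadratic and $H_n \wkc H$ is positive definite in the limit, the unconstrained maximizer $u_n = H_n^{-1} S_n$ corresponds to $\f = D_n^{-1} u_n = O_p(D_n^{-1})$, which lies in $\spcf_n$ with probability approaching one since $\spcf_n$ contains a fixed ball around the origin for all $n$ large. The stated joint limit then follows from the block-diagonal structure of $H$ and from $S_n \wkc S$. For part~(ii), \lemref{deriv-limits}\ref{enu:proj-lim} combined with $D_n \hat\f_{n\mid\theta} = O_p(1)$ implies that $u = D_n \hat\f_{n\mid\theta}$ satisfies $\Theta^\trans u_\lu = o_p(1)$ with $u_\st$ unrestricted, since $\THETA$ has a zero $\st$-block; maximizing the quadratic over $\{u_\lu \in \spn \Theta_\perp\} \times \reals^{\snum\st}$ and exploiting block-diagonality of the limit $H$ then separates the problem into two independent maxima yielding the stated expressions.

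For part~(iii), subtracting the quadratic values at $\hat\f_n$ and $\hat\f_{n\mid\theta}$ and substituting the closed-form solutions from (i)--(ii) gives
\[
2[\likens_n(\hat\f_n) - \likens_n(\hat\f_{n\mid\theta})] = S_{n,\lu}^\trans \bigl[H_{n,\lu}^{-1} - \Theta_\perp(\Theta_\perp^\trans H_{n,\lu} \Theta_\perp)^{-1} \Theta_\perp^\trans\bigr] S_{n,\lu} + o_p(1),
\]
the $\st$-contributions cancelling by block diagonality. The standard projection identity $H_\lu^{-1} - \Theta_\perp(\Theta_\perp^\trans H_\lu \Theta_\perp)^{-1}\Theta_\perp^\trans = H_\lu^{-1} \Theta(\Theta^\trans H_\lu^{-1} \Theta)^{-1} \Theta^\trans H_\lu^{-1}$, valid for positive-definite $H_\lu$ and full-rank $\Theta$ with $\Theta_\perp^\trans \Theta = 0$, together with the continuous mapping theorem, then delivers the limit. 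Part~(iv) follows by the same pattern: the additional restriction $\gamma_n(\f) = 0$ further localizes $u_\lu$ to $\spn(\Theta_\perp \Xi)$ in the limit, so writing $u_\lu = \Theta_\perp v$ reduces the restricted maximization to that of $(\Theta_\perp^\trans S_{n,\lu})^\trans v - \tfrac12 v^\trans H_{\Theta,\perp} v$ over $v \in \spn \Xi$; the difference of restricted and unrestricted maxima then equals $(H_{\Theta,\perp}^{-1/2} \Theta_\perp^\trans S_\lu)^\trans [I_{qr} - \mathcal{Q}](H_{\Theta,\perp}^{-1/2} \Theta_\perp^\trans S_\lu)$ after changing to coordinates orthonormal in the $H_{\Theta,\perp}$ inner product, in which the image of $\spn \Xi$ is $\spn H_{\Theta,\perp}^{1/2} \Xi$ and the associated orthogonal projection is $\mathcal Q$.

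The main obstacle is upgrading \lemref{deriv-limits}\ref{enu:proj-lim}, which gives only an approximate containment $u = \proj_{\cdot,\perp} u + o_p(\|u\|)$, to a statement sharp enough that when the restricted optimum $u$ is substituted back into the quadratic expansion the discrepancy between the actual (curved) constrained maximum and its linearized-constraint counterpart is $o_p(1)$ uniformly over the relevant compact neighbourhood of $0$. The positive-definiteness of $H$ (which provides quadratic decay of the objective away from its maximum) and the smoothness of $\theta_n$ and $\gamma_n$ established in \lemref{implicit-maps} (which provides a second-order bound on the constraint curvature) together let a standard perturbation argument close this gap.
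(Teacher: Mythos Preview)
Your proposal is correct and for parts~(i), (iii) and (iv) matches the paper's proof essentially step for step, including the projection identity
\[
H_\lu^{-1} - \Theta_\perp(\Theta_\perp^\trans H_\lu \Theta_\perp)^{-1}\Theta_\perp^\trans
= H_\lu^{-1}\Theta(\Theta^\trans H_\lu^{-1}\Theta)^{-1}\Theta^\trans H_\lu^{-1}
\]
used to rewrite the difference in (iii).

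The only substantive divergence is in part~(ii). The paper does \emph{not} argue by replacing the curved constraint set with its tangent subspace and then maximizing; instead it writes down the exact first-order (Lagrangian) conditions for the constrained interior maximum,
\[
S_n - H_n(D_n\hat\f_{n\mid\theta}) = (nD_n^{-1})\grad_\f\theta_n(\hat\f_{n\mid\theta})\,(n^{-1}\mu_n) \eqdef \THETA_n(n^{-1}\mu_n),
\]
shows $\THETA_n\inprob\THETA$, and eliminates the multipliers by premultiplying by a $\THETA_{n,\perp}$ with $\THETA_{n,\perp}^\trans\THETA_n=0$ exactly; \lemref{deriv-limits}\ref{enu:proj-lim} is then invoked once to pass from $D_n\hat\f_{n\mid\theta}$ to its projection. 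This sidesteps precisely the ``approximate feasibility'' issue you flag in your final paragraph: because the FOC is an exact equation satisfied by the true constrained optimum, no separate argument is needed to show that the linearized-constraint maximizer can be perturbed back onto the curved constraint set. Your route is valid and your closing paragraph correctly identifies the extra step it requires (construct a feasible $\f_n$ within $o_p(D_n^{-1})$ of the tangent-space maximizer via the smoothness of $\theta_n$, then sandwich the constrained optimum between the two), but the paper's Lagrangian device handles it more economically.
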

The preceding gives the limiting distribution of $\hat{\PHI}_{n}$
under the reparametrisation \eqref{reparam}; the limiting distributions
of estimators of $A$ and $\Lambda_{\lu}$ will then follow by an
application of the delta method, as per
\begin{prop}
\label{prop:deltamethod}Let $\{\PHI_{n}\}$ be as in \assref{LOC},
$\PHI_{0}\defeq\lim_{n\goesto\infty}\PHI_{n}\in\set P$, and $\{\tilde{\PHI}_{n}\}$
a random sequence in $\set P$ with $\tilde{\PHI}_{n}=\PHI_{n}+o_{p}(1)$.
Then
\begin{equation}
\begin{bmatrix}\vek\{A(\tilde{\PHI}_{n})-A(\PHI_{n})\}\\
\vek\{\Lambda_{\lu}(\tilde{\PHI}_{n})-\Lambda_{\lu}(\PHI_{n})\}
\end{bmatrix}=\left(\begin{bmatrix}J_{A}(\PHI_{0})\\
J_{\Lambda}(\PHI_{0})
\end{bmatrix}+o_{p}(1)\right)\vek\{(\tilde{\PHI}_{n}-\PHI_{n})\R_{n,\lu}\}\label{eq:expansion}
\end{equation}
where $\R_{n,\lu}\defeq\R_{\lu}(\PHI_{n})$.
\end{prop}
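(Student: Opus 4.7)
The plan is to prove the claim for $A(\cdot)$; the argument for $\Lambda_{\lu}(\cdot)$ is identical with $J_{\Lambda}$ in place of $J_{A}$. Since $\set P$ is open by \lemref{implicit-maps}, $\PHI_{n}\to\PHI_{0}\in\set P$, and $\tilde{\PHI}_{n}-\PHI_{n}=o_{p}(1)$, with probability approaching one the segment $\bar{\PHI}_{n,t}\defeq\PHI_{n}+t(\tilde{\PHI}_{n}-\PHI_{n})$, $t\in[0,1]$, lies inside a compact neighbourhood of $\PHI_{0}$ in $\set P$; condition on this event throughout. Applying the integral form of the mean value theorem to $\vek A$, smooth by \lemref{implicit-maps}, and invoking \lemref{derivatives}\enuref{deriv:values} at each $\bar{\PHI}_{n,t}$, gives
\[
\vek\{A(\tilde{\PHI}_{n})-A(\PHI_{n})\}=\int_{0}^{1}J_{A}(\bar{\PHI}_{n,t})\vek\{(\tilde{\PHI}_{n}-\PHI_{n})\R_{\lu}(\bar{\PHI}_{n,t})\}\diff t.
\]

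The next step is to show that $V(t)\defeq(\tilde{\PHI}_{n}-\PHI_{n})\R_{\lu}(\bar{\PHI}_{n,t})$ coincides with $M_{n}\defeq(\tilde{\PHI}_{n}-\PHI_{n})\R_{n,\lu}$ up to an error $o_{p}(\smlnorm{M_{n}})$, uniformly in $t$. The key observation is that \lemref{derivatives}\enuref{deriv:zero}, applied simultaneously to $A$ and $\Lambda_{\lu}$, forces $R_{\lu}$ and $\R_{\lu}=\col\{R_{\lu}\Lambda_{\lu}^{k-i}\}_{i=1}^{k}$—which via the normalisation \eqref{Rlurnom} are algebraic functions of $(A,\Lambda_{\lu})$—to be locally constant along the subspace $\{H:H\R_{\lu}(\PHI)=0\}$ at every base point $\PHI\in\set P$. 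By linearity, the directional derivative $\mathrm{D}\R_{\lu}(\PHI)\cdot H$ therefore factors as $\Psi(\PHI)[H\R_{\lu}(\PHI)]$ for some continuous operator-valued $\Psi$. Consequently $V$ satisfies the linear ODE $V'(t)=(\tilde{\PHI}_{n}-\PHI_{n})\Psi(\bar{\PHI}_{n,t})V(t)$ with $V(0)=M_{n}$, and a Gronwall bound exploiting $\tilde{\PHI}_{n}-\PHI_{n}=o_{p}(1)$ together with the boundedness of $\Psi$ on a neighbourhood of $\PHI_{0}$ delivers the desired uniform approximation.

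Substituting back and using continuity of $J_{A}$ (\lemref{derivatives}\enuref{deriv:cont}) together with $\bar{\PHI}_{n,t}\inprob\PHI_{0}$ uniformly in $t$, one obtains $\vek\{A(\tilde{\PHI}_{n})-A(\PHI_{n})\}=[J_{A}(\PHI_{0})+o_{p}(1)]\vek M_{n}+r_{n}$ with $r_{n}=o_{p}(\smlnorm{M_{n}})$. This scalar remainder is absorbed into the matrix $o_{p}(1)$ perturbation of $J_{A}(\PHI_{0})$ via the elementary device of defining the perturbation to equal $r_{n}\vek(M_{n})^{\trans}/\smlnorm{\vek M_{n}}^{2}$ on $\{M_{n}\neq 0\}$ and zero otherwise; on $\{M_{n}=0\}$, \lemref{derivatives}\enuref{deriv:zero} applied at $\PHI_{n}$ gives $r_{n}=0$ w.p.a.1. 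The main technical obstacle lies in the Gronwall step: a direct second-order Taylor expansion of $A$ only yields a remainder of size $O_{p}(\smlnorm{\tilde{\PHI}_{n}-\PHI_{n}}^{2})$, which is \emph{not} $o_{p}(\smlnorm{M_{n}})$ in the intended applications—in \propref{andrews}, both quantities are $O_{p}(n^{-1})$. The factorisation of $\mathrm{D}\R_{\lu}$ through $H\R_{\lu}(\PHI)$, itself a consequence of \lemref{derivatives}\enuref{deriv:zero}, is precisely what reduces the effective error to this sharper bound.
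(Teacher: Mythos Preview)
Your argument is correct, but the paper's proof is considerably simpler because it avoids the ODE/Gronwall step altogether. Instead of integrating along the straight segment from $\PHI_{n}$ to $\tilde{\PHI}_{n}$, the paper splits the perturbation via the identity $I_{kp}=\R_{n,\lu}\L_{n,\lu}^{\trans}+\R_{n,\st}\L_{n,\st}^{\trans}$, writing $\tilde{\PHI}_{n}=\PHI_{n}+\tilde{\Delta}_{n,\st}+\tilde{\Delta}_{n,\lu}$ with $\tilde{\Delta}_{n,\st}\R_{n,\lu}=0$. A single application of \lemref{derivatives}\enuref{deriv:zero} then gives $A(\PHI_{n}+\tilde{\Delta}_{n,\st})=A(\PHI_{n})$ and $\R_{\lu}(\PHI_{n}+\tilde{\Delta}_{n,\st})=\R_{n,\lu}$ exactly (w.p.a.1), so the whole difference reduces to a Taylor expansion in the \emph{lu}-direction alone. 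Since $\smlnorm{\tilde{\Delta}_{n,\lu}}\asymp\smlnorm{M_{n}}$, the second-order remainder is $O_{p}(\smlnorm{M_{n}}^{2})=o_{p}(\smlnorm{M_{n}})$ directly, with no need to track $\R_{\lu}$ along a path. In effect, the paper exploits the same structural fact you use infinitesimally---that the derivative of $(A,\Lambda_{\lu},\R_{\lu})$ vanishes on $\{H:H\R_{\lu}=0\}$---but applies it once, globally, to kill the \emph{st}-contribution outright. Your route works and is perhaps more portable to settings without such a clean complementary decomposition, but here the two-leg path is both shorter and more transparent about why the naive Taylor bound suffices after the split.
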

\begin{proof}[Proof of \lemref{wkconv}]
 \enuref{wkconv:donsker}--\enuref{wkconv:si} follow by Donsker's
theorem for partial sums, Lemma~3.1 in \citet{Phi88Ecta} and the
continuous mapping theorem; \enuref{wkconv:mgclt} by the martingale
central limit theorem (\citealp[Thm.~3.2]{HH80}); and \enuref{wkconv:varmat}
by arguments similar to those given in Section~3.2.2 of \citet{Lut07}.
\end{proof}

\begin{proof}[Proof of \lemref{lhoodexp}]
 Let $\Phi_{i}\defeq\PHI\R_{n,i}$ and $\Phi_{n,i}\defeq\PHI_{n}\R_{n,i}$
for $i\in\{\lu,\st\}$. Then
\begin{align*}
\like_{n}(\PHI,\Sigma) & =-\frac{n}{2}\log\det\Sigma-\min_{m,d}\frac{1}{2}\sum_{t=1}^{n}\norm{y_{t}-m-dt-\PHI\y_{t-1}}_{\Sigma^{-1}}^{2}\\
 & =-\frac{n}{2}\log\det\Sigma-\min_{m,d}\frac{1}{2}\sum_{t=1}^{n}\norm{x_{t}-m-dt-\PHI\x_{t-1}}_{\Sigma^{-1}}^{2}\\
 & =-\frac{n}{2}\log\det\Sigma-\min_{m,d}\frac{1}{2}\sum_{t=1}^{n}\norm{x_{t}-m-dt-\Phi_{\lu}z_{\lu,t-1}-\Phi_{\st}z_{\st,t-1}}_{\Sigma^{-1}}^{2}\\
 & =-\frac{n}{2}\log\det\Sigma-\frac{1}{2}\sum_{t=1}^{n}\norm{\xdet_{t}-\Phi_{\lu}\zdet_{\lu,t-1}-\Phi_{\st}\zdet_{\st,t-1}}_{\Sigma^{-1}}^{2}
\end{align*}
Twice differentiating the r.h.s.\ (as in \citealt[Sec.~3.4]{Lut07})
with respect to $\Phi_{\lu}$ and $\Phi_{\st}$, and noting that $\f_{i}=\vek(\Phi_{i}-\Phi_{n,i})$,
we thus have
\begin{align*}
\likens_{n}(\f)-\likens_{n}(0)=\like_{n}(\PHI,\hat{\Sigma}_{n})-\like_{n}(\PHI_{n},\hat{\Sigma}_{n}) & =S_{n}^{\trans}(D_{n}\f)-\tfrac{1}{2}(D_{n}\f)^{\trans}H_{n}(D_{n}\f)
\end{align*}
where
\begin{gather*}
S_{n}\defeq\begin{bmatrix}n^{-1}\sum_{t=1}^{n}(\zdet_{\lu,t-1}\otimes\hat{\Sigma}_{n}^{-1}\errdet_{t})\\
n^{-1/2}\sum_{t=1}^{n}(\zdet_{\st,t-1}\otimes\hat{\Sigma}_{n}^{-1}\errdet_{t})
\end{bmatrix}=_{(1)}\begin{bmatrix}\frac{1}{n}\sum_{t=1}^{n}(\zdet_{\lu,t-1}\otimes\hat{\Sigma}_{n}^{-1}\err_{t})\\
\frac{1}{n^{1/2}}\sum_{t=1}^{n}(\zdet_{\st,t-1}\otimes\hat{\Sigma}_{n}^{-1}\err_{t})
\end{bmatrix}\\
H_{n}\defeq\begin{bmatrix}n^{-2}\sum_{t=1}^{n}\zdet_{\lu,t-1}\zdet_{\lu,t-1}^{\trans} & n^{-3/2}\sum_{t=1}^{n}\zdet_{\lu,t-1}\zdet_{\st,t-1}^{\trans}\\
n^{-3/2}\sum_{t=1}^{n}\zdet_{\st,t-1}\zdet_{\lu,t-1}^{\trans} & n^{-1}\sum_{t=1}^{n}\zdet_{\st,t-1}\zdet_{\st,t-1}^{\trans}
\end{bmatrix}\otimes\hat{\Sigma}_{n}^{-1},
\end{gather*}
and $\errdet_{t}$ denotes the residual from an OLS regression of
$\{\err_{t}\}_{t=1}^{n}$ on a constant and a linear trend; $=_{(1)}$
holds because each element of $\zdet_{\lu,t-1}$ and $\zdet_{\st,t-1}$
is orthogonal to a constant and linear trend. The stated convergences
of $S_{n}$ and $H_{n}$ then follow by \lemref{wkconv} and the continuous
mapping theorem.
\end{proof}
\begin{proof}[Proof of \lemref{consistency}]
 By \lemref{lhoodexp}, we have
\begin{align*}
\likens_{n}(\f)-\likens_{n}(0) & \leq\smlnorm{D_{n}\f}[\smlnorm{S_{n}}-\tfrac{1}{2}\lambda_{\min}(H_{n})\smlnorm{D_{n}\f}].
\end{align*}
Let $M<\infty$ and $\epsilon>0$. Since $D_{n}=\diag\{nI_{\snum\lu},n^{1/2}I_{\snum\st}\}$,
$S_{n}=O_{p}(1)$ and $H_{n}\wkc H$ is positive definite w.p.a.1,
it is evident that
\begin{align*}
\Prob\left\{ \sup_{\{\f\in\spcf_{n}\mid\smlnorm{D_{n}\f}\geq M\}}[\likens_{n}(\f)-\likens_{n}(0)]<-\epsilon\right\}  & \ge\Prob\left\{ M[\smlnorm{S_{n}}-\tfrac{1}{2}\lambda_{\min}(H_{n})M]<-\epsilon\right\} 
\end{align*}
and
\begin{align*}
\limsup_{n\goesto\infty}\Prob\left\{ M[\smlnorm{S_{n}}-\tfrac{1}{2}\lambda_{\min}(H_{n})M]<-\epsilon\right\}  & \geq\Prob\left\{ M[\smlnorm S-\tfrac{1}{2}\lambda_{\min}(H)M]<-\epsilon\right\} \goesto1
\end{align*}
as $M\goesto\infty$. Deduce that $D_{n}\hat{\f}_{n}=O_{p}(1)$. Since
$\spcf_{n\mid\theta,\gamma}\subset\spcf_{n\mid\theta}\subset\spcf_{n}$
and $0\in\spcf_{n\mid\theta,\gamma}$, that $D_{n}\hat{\f}_{n\mid\theta}$
and $D_{n}\hat{\f}_{n\mid\theta,\gamma}$ are stochastically bounded
follows by the same argument.
\end{proof}
\begin{proof}[Proof of \lemref{deriv-limits}]
 \proofpart{1} Since $\PHI_{n}\goesto\PHI_{0}$, $\L_{n}\goesto\L_{0}$
and $\Lambda_{\lu}(\cdot)$ is continuously differentiable (\lemref{implicit-maps}),
\[
\grad_{\f}\theta_{n}(\tilde{\f}_{n})\inprob\grad_{\f}\vek\{\Lambda_{\lu}[\PHI_{0}+\vek^{-1}(\f)\L_{0}^{\trans}]\}|_{\f=0}=_{(1)}\begin{bmatrix}I_{q}\otimes L_{\lu}\\
0_{\#\st\times q^{2}}
\end{bmatrix}=\THETA
\]
where $=_{(1)}$ follows by Lemmas~\ref{lem:derivatives} and \ref{lem:derivatunity}.
The probability limit of $\grad_{\f}\gamma_{n}(\tilde{\f}_{n})$ follows
similarly.

\proofpart{2} By \lemref{consistency} and the remarks following
\eqref{Psetn}, there exists a ball $B(0,\epsilon)$ of radius $\epsilon>0$,
centred on the origin, such that $B(0,\epsilon)\subset\spcf_{n}$
for all $n$ sufficiently large, and $\Prob\{\hat{\f}_{n\mid\theta}\in B(0,\epsilon)\}\goesto1$.
We may take $\epsilon$ sufficiently small that $\PHI_{\f}\defeq\PHI_{n}+\vek^{-1}(\f)\L_{n}^{\trans}$
has $\smlabs{\lambda_{q+1}(\PHI_{\f})}<\smlabs{\lambda_{q}(\PHI_{n})}$
for all $n$ sufficiently large, for all $\f\in B(0,\epsilon)$. In
particular, suppose $\f_{\lu}=0$; then $(\PHI_{\f}-\PHI_{n})\R_{n,\lu}=0$
and we have by \lemref{derivatives}\enuref{deriv:zero} that $\Lambda_{\lu}(\PHI_{\f})=\Lambda_{\lu}(\PHI_{n})=C/n$.
It follows that $\theta_{n}(0,\hat{\f}_{n,\st\mid\theta})=0$ w.p.a.1.,
whence
\begin{multline*}
0=\theta_{n}(\hat{\f}_{n,\lu\mid\theta},\hat{\f}_{n,\st\mid\theta})=\theta_{n}(\hat{\f}_{n,\lu\mid\theta},\hat{\f}_{n,\st\mid\theta})-\theta_{n}(0,\hat{\f}_{n,\st\mid\theta})\\
=[\Theta+o_{p}(1)]^{\trans}\hat{\f}_{n,\lu\mid\theta}=\Theta^{\trans}\hat{\f}_{n,\lu\mid\theta}+o_{p}(\smlnorm{\hat{\f}_{n,\lu\mid\theta}})
\end{multline*}
by part~(i) of the lemma and a mean value expansion. Hence, letting
$\proj_{\Theta}$ and $\proj_{\Theta,\perp}$ denote the matrices
that orthogonally project from $\reals^{\#\lu}$ onto $\spn\Theta$
and $(\spn\Theta)^{\perp}$ respectively, we have
\begin{align*}
D_{n}\hat{\f}_{n\mid\theta} & =\begin{bmatrix}nI_{\#\lu} & 0\\
0 & n^{1/2}I_{\#\st}
\end{bmatrix}\begin{bmatrix}\proj_{\Theta}+\proj_{\Theta,\perp} & 0\\
0 & I_{\#\st}
\end{bmatrix}\begin{bmatrix}\hat{\f}_{n,\lu\mid\theta}\\
\hat{\f}_{n,\st\mid\theta}
\end{bmatrix}\\
 & =\begin{bmatrix}\proj_{\Theta,\perp} & 0\\
0 & I_{\#\st}
\end{bmatrix}\begin{bmatrix}n\hat{\f}_{n,\lu\mid\theta}\\
n^{1/2}\hat{\f}_{n,\st\mid\theta}
\end{bmatrix}+o_{p}(n\smlnorm{\hat{\f}_{n,\lu\mid\theta}})=\proj_{\THETA,\perp}D_{n}\hat{\f}_{n\mid\theta}+o_{p}(\smlnorm{D_{n}\hat{\f}_{n\mid\theta}}).\qedhere
\end{align*}
\end{proof}

\begin{proof}[Proof of \propref{andrews}]
 \proofpart{1} Immediate from \lemref{lhoodexp}.

\proofpart{2} As in the proof of \lemref{deriv-limits}\enuref{proj-lim},
we may take $\epsilon>0$ such that $B(0,\epsilon)\subset\spcf_{n}$
for all $n$ sufficiently large, and $\Prob\{\hat{\f}_{n\mid\theta}\in B(0,\epsilon)\}\goesto1$.
Hence w.p.a.1., $\hat{\f}_{n\mid\theta}$ satisfies the first-order
conditions for a constrained interior maximum,
\[
\grad_{\f}\like_{n}^{\ast}(\hat{\f}_{n\mid\theta})=D_{n}S_{n}-D_{n}H_{n}(D_{n}\hat{\f}_{n\mid\theta})=\grad_{\f}\theta_{n}(\hat{\f}_{n\mid\theta})\mu_{n},
\]
where $\mu_{n}\in\reals^{q^{2}}$ is a vector of Lagrange multipliers;
whence
\begin{equation}
S_{n}-H_{n}(D_{n}\hat{\f}_{n\mid\theta})=(nD_{n}^{-1})\grad_{\f}\theta_{n}(\hat{\f}_{n\mid\theta})(n^{-1}\mu_{n})\eqdef\THETA_{n}(n^{-1}\mu_{n})\label{eq:FOCconst}
\end{equation}
w.p.a.1. By a similar argument as given in the proof of \lemref{deriv-limits}\enuref{proj-lim},
it follows from \lemref{derivatives}\enuref{deriv:zero} that $\grad_{\f_{\st}}\theta_{n}(0,\hat{\f}_{n,\st\mid\theta})=0$
w.p.a.1, and so by a a mean value expansion and \lemref{consistency},
\[
\grad_{\f_{\st}}\theta_{n}(\hat{\f}_{n\mid\theta})=\grad_{\f_{\st}}\theta_{n}(\hat{\f}_{n,\lu\mid\theta},\hat{\f}_{n,\st\mid\theta})-\grad_{\f_{\st}}\theta_{n}(0,\hat{\f}_{n,\st\mid\theta})=O_{p}(\smlnorm{\hat{\f}_{n,\lu\mid\theta}})=O_{p}(n^{-1}).
\]

Deduce from the preceding and \lemref{deriv-limits}\enuref{grad-lim}
that
\[
\THETA_{n}=(nD_{n}^{-1})\grad_{\f}\theta_{n}(\hat{\f}_{n\mid\theta})=\begin{bmatrix}\grad_{\f_{\lu}}\theta_{n}(\hat{\f}_{n\mid\theta})\\
n^{1/2}\grad_{\f_{\st}}\theta_{n}(\hat{\f}_{n\mid\theta})
\end{bmatrix}\inprob\THETA,
\]
which has full column rank. Let $\THETA_{\perp}\defeq\diag\{\Theta_{\perp},I_{\#\st}\}$,
a full column rank matrix for which $\THETA_{\perp}^{\trans}\THETA=0$;
then $\THETA_{n,\perp}\defeq[I_{kp^{2}}-\THETA_{n}(\THETA_{n}^{\trans}\THETA_{n})^{-1}\THETA_{n}^{\trans}]\THETA_{\perp}\inprob\THETA_{\perp}$
and $\THETA_{n,\perp}^{\trans}\THETA_{n}=0$ for all $n$. Hence w.p.a.1
\begin{align*}
0 & =_{(1)}\THETA_{n,\perp}^{\trans}S_{n}-\THETA_{n,\perp}^{\trans}H_{n}(D_{n}\hat{\f}_{n\mid\theta})\\
 & =_{(2)}\THETA_{n,\perp}^{\trans}S_{n}-\THETA_{n,\perp}^{\trans}H_{n}[\THETA_{\perp}(\THETA_{\perp}^{\trans}\THETA_{\perp})^{-1}\THETA_{\perp}^{\trans}(D_{n}\hat{\f}_{n\mid\theta})+o_{p}(\smlnorm{D_{n}\hat{\f}_{n\mid\theta}})]
\end{align*}
where $=_{(1)}$ follows from premultiplying \eqref{FOCconst} by
$\THETA_{n,\perp}^{\trans}$, and $=_{(2)}$ from \lemref{deriv-limits}\enuref{proj-lim}.
A further appeal to that result and rearranging the preceding yields
\[
D_{n}\hat{\f}_{n\mid\theta}=\proj_{\THETA,\perp}D_{n}\hat{\f}_{n\mid\theta}+o_{p}(\smlnorm{D_{n}\hat{\f}_{n\mid\theta}})=\THETA_{\perp}(\THETA_{n,\perp}^{\trans}H_{n}\THETA_{\perp})^{-1}\THETA_{n,\perp}^{\trans}S_{n}+o_{p}(1+\smlnorm{D_{n}\hat{\f}_{n\mid\theta}}).
\]
The result then follows by Lemmas \ref{lem:lhoodexp} and \ref{lem:consistency}.

\proofpart{3} From parts~(i) and (ii) and \lemref{lhoodexp} we
have
\begin{gather}
2[\likens_{n}(\hat{\f}_{n})-\like_{n}^{\ast}(0)]\wkc S_{\lu}^{\trans}H_{\lu}^{-1}S_{\lu}+S_{\st}^{\trans}H_{\st}^{-1}S_{\st}\label{eq:lr-uncon}\\
2[\likens_{n}(\hat{\f}_{n\mid\theta})-\like_{n}^{\ast}(0)]\wkc S_{\lu}^{\trans}\Theta_{\perp}(\Theta_{\perp}^{\trans}H_{\lu}\Theta_{\perp})^{-1}\Theta_{\perp}^{\trans}S_{\lu}+S_{\st}^{\trans}H_{\st}^{-1}S_{\st}\label{eq:lr-con1}
\end{gather}
whence the result follows by subtracting \eqref{lr-con1} from \eqref{lr-uncon}
and noting that
\[
H_{\lu}^{-1/2}\Theta(\Theta^{\trans}H_{\lu}^{-1}\Theta)^{-1}\Theta^{\trans}H_{\lu}^{-1/2}+H_{\lu}^{1/2}\Theta_{\perp}(\Theta_{\perp}^{\trans}H_{\lu}\Theta_{\perp})^{-1}\Theta_{\perp}^{\trans}H_{\lu}^{1/2}=I_{pq}
\]
since the columns of $H_{\lu}^{-1/2}\Theta$ and $H_{\lu}^{1/2}\Theta_{\perp}$
are mutually orthogonal, and collectively span the whole of $\reals^{pq}$.

\proofpart{4} The same argument as which yielded \eqref{lr-con1}
also gives
\begin{equation}
2[\likens_{n}(\hat{\f}_{n\mid\theta,\gamma})-\like_{n}^{\ast}(0)]\wkc S_{\lu}^{\trans}\Pi_{\perp}(\Pi_{\perp}^{\trans}H_{\lu}\Pi_{\perp})^{-1}\Pi_{\perp}^{\trans}S_{\lu}+S_{\st}^{\trans}H_{\st}^{-1}S_{\st}\label{eq:lr-con2}
\end{equation}
so that subtracting \eqref{lr-con2} from \eqref{lr-con1}, and recalling
$\Pi_{\perp}=\Theta_{\perp}\Xi$, yields
\begin{align*}
2[\likens_{n}(\hat{\f}_{n\mid\theta})-\likens_{n}(\hat{\f}_{n\mid\theta,\gamma})] & \wkc S_{\lu}^{\trans}\Theta_{\perp}(\Theta_{\perp}^{\trans}H_{\lu}\Theta_{\perp})^{-1}\Theta_{\perp}^{\trans}S_{\lu}^{\trans}-S_{\lu}^{\trans}\Pi_{\perp}(\Pi_{\perp}^{\trans}H_{\lu}\Pi_{\perp})^{-1}\Pi_{\perp}^{\trans})S_{\lu}\\
 & =(\Theta_{\perp}^{\trans}S_{\lu})^{\trans}[H_{\Theta,\perp}^{-1}-\Xi(\Xi^{\trans}H_{\Theta,\perp}\Xi)^{-1}\Xi^{\trans}](\Theta_{\perp}^{\trans}S_{\lu})\\
 & =(H_{\Theta,\perp}^{-1/2}\Theta_{\perp}^{\trans}S_{\lu})^{\trans}[I_{qr}-H_{\Theta,\perp}^{1/2}\Xi(\Xi^{\trans}H_{\Theta,\perp}\Xi)^{-1}\Xi^{\trans}H_{\Theta,\perp}^{1/2}](H_{\Theta,\perp}^{-1/2}\Theta_{\perp}^{\trans}S_{\lu}).
\end{align*}
\end{proof}

\begin{proof}[Proof of \propref{deltamethod}]
 Recall the definitions of $\R_{n}=[\R_{n,\lu},\R_{n,\st}]$ and
$\L_{n}=[\L_{n,\lu},\L_{n,\st}]$ given at the beginning of this appendix.
Since $I_{kp}=\R_{n,\lu}\L_{n,\lu}^{\trans}+\R_{n,\st}\L_{n,\st}^{\trans}$,
we may write
\[
\tilde{\PHI}_{n}=\PHI_{n}+[(\tilde{\PHI}_{n}-\PHI_{n})\R_{n,\lu}]\L_{n,\lu}^{\trans}+[(\tilde{\PHI}_{n}-\PHI_{n})\R_{n,\st}]\L_{n,\st}^{\trans}\eqdef\PHI_{n}+\tilde{\Delta}_{n,\lu}+\tilde{\Delta}_{n,\st}.
\]
Since $\tilde{\Delta}_{n,\lu}=o_{p}(1)$ and $\PHI_{n}\goesto\PHI_{0}$,
we have $\smlabs{\lambda_{q+1}(\PHI_{n}+\tilde{\Delta}_{n,\st})}<\smlabs{\lambda_{q}(\PHI_{n})}$
w.p.a.1, and so by \lemref{derivatives}\enuref{deriv:zero}
\begin{align}
A(\tilde{\PHI}_{n})-A(\PHI_{n}) & =A(\PHI_{n}+\tilde{\Delta}_{n,\st}+\tilde{\Delta}_{n,\lu})-A(\PHI_{n}+\tilde{\Delta}_{n,\st})\label{eq:Adifference}
\end{align}
w.p.a.1. Since $A(\cdot)$ is smooth, a second-order Taylor series
expansion and \lemref{derivatives}\enuref{deriv:values} yield
\begin{align}
 & \vek\{A(\PHI_{n}+\tilde{\Delta}_{n,\st}+\tilde{\Delta}_{n,\lu})-A(\PHI_{n}+\tilde{\Delta}_{n,\st})\}\nonumber \\
 & \qquad\qquad\qquad\qquad=[J_{A}(\PHI_{n}+\tilde{\Delta}_{n,\st})+o_{p}(1)]\vek\{\tilde{\Delta}_{n,\lu}\R_{\lu}(\PHI_{n}+\tilde{\Delta}_{n,\st})\}\nonumber \\
 & \qquad\qquad\qquad\qquad=[J_{A}(\PHI_{0})+o_{p}(1)]\vek\{\tilde{\Delta}_{n,\lu}\R_{n,\lu}\}\label{eq:Aexp}
\end{align}
where the second equality holds w.p.a.1, and follows from the continuity
of $J_{A}$ (\lemref{derivatives}\enuref{deriv:cont}), $\PHI_{n}+\tilde{\Delta}_{n,\st}=\PHI_{0}+o_{p}(1)$,
and $\R_{\lu}(\PHI_{n}+\tilde{\Delta}_{n,\st})=\R_{\lu}(\PHI_{n})=\R_{n,\lu}$
(w.p.a.1, as implied by \lemref{derivatives}\enuref{deriv:zero}).
Finally, since 
\begin{equation}
\tilde{\Delta}_{n,\lu}\R_{n,\lu}=[(\tilde{\PHI}_{n}-\PHI_{n})\R_{n,\lu}]\L_{n,\lu}^{\trans}\R_{n,\lu}=(\tilde{\PHI}_{n}-\PHI_{n})\R_{n,\lu},\label{eq:DeltaRlu}
\end{equation}
the first part of \eqref{expansion} follows from \eqref{Adifference}--\eqref{DeltaRlu}.
The proof of the second part is analogous.
\end{proof}

\section{Limiting experiments}

The assumptions \assref{DGP} and \assref{LOC} are maintained throughout
this appendix. Recall the re-parametrisation given in \eqref{reparm-thm}
above, which in view of \eqref{reparam} we can equivalently write
as\begin{subequations}\label{eq:reparm}
\begin{align}
\vall & \defeq n\vek\begin{bmatrix}A[\PHI_{n}+\vek^{-1}(\f)\L_{n}^{\trans}]-A(\PHI_{n})\\
\Lambda_{\lu}[\PHI_{n}+\vek^{-1}(\f)\L_{n}^{\trans}]-\Lambda_{\lu}(\PHI_{n})
\end{bmatrix}\label{eq:vall}\\
f & \defeq n^{1/2}\f_{\st}.\label{eq:fST}
\end{align}
\end{subequations}Under \assref{LOC}, $R_{n,\st}$ and $\Lambda_{n,\st}$
associated with $\{\PHI_{n}\}\subset\set P$ are constant (see \enuref{LOC:stat}),
so $\R_{n,\st}=\R_{0,\st}$ for all $n\in\naturals$, so that in particular
$\f_{\st}=\vek\{(\PHI-\PHI_{n})\R_{n,\st}\}$. Let $\psi_{n}(\f)$
denote the smooth mapping $\f\elmap(\vall,f)$ implied by \eqref{reparm},
which has domain $\pset_{n}$ (defined in \eqref{Psetn} above) and
$\psi_{n}(0)=0$ for all $n\in\naturals$.
\begin{lem}
\label{lem:invpsi}~
\begin{enumerate}
\item \label{enu:invpsi:forward}There exists an $n_{0}\in\naturals$ and
an open neighbourhood $N\subset\reals^{kp^{2}}$ of the origin, such
that $\psi_{n}$ is a smooth diffeomorphism on $N$, for all $n\geq n_{0}$
\item \label{enu:invpsi:back}Let ${\cal K}\subset\reals^{kp^{2}}$ be any
compact neighbourhood of zero. Then there exists an $n_{1}\geq n_{0}$
such that $\psi_{n}^{-1}$ is well defined (and smooth) on ${\cal K}$,
for all $n\geq n_{1}$. Moreover, for any $(\vall,f)\in{\cal K}$,
$\f_{n}\defeq\psi_{n}^{-1}(\vall,f)$ is such that $D_{n}\f_{n}=O(1)$. 
\end{enumerate}
\end{lem}
Thus so long as we restrict attention to $\f\in N$, we may equivalently
parametrise the model in terms of $(\vall,f)$. For a given $(\vall,f)\in\reals^{kp^{2}}$,
$\psi_{n}^{-1}$ is well-defined (and smooth) at $(\vall,f)$ for
all $n$ sufficiently large, in which case we shall define (with a
slight abuse of notation) $\like_{n}(\vall,f)\defeq\like_{n}(\f,\Sigma)$,
where $\f=\psi_{n}^{-1}(\vall,f)$; and set $\like_{n}(\vall,f)\defeq-\infty$
otherwise (to simplify arguments, we treat $\Sigma$ as known here.)
To state our next result, recall the definitions of $S_{\vall}$ and
$H_{\vall}$ given in \eqref{SHpi}.
\begin{lem}
\label{lem:limexp}Jointly over any finite collection of $(\vall,f)\in\reals^{kp^{2}}$,
\[
\like_{n}(\vall,f)-\like_{n}(0,0)\wkc[S_{\vall}^{\trans}\vall-\tfrac{1}{2}\vall^{\trans}H_{\vall}\vall]+[S_{\st}^{\trans}f-\tfrac{1}{2}f^{\trans}H_{\st}f].
\]
\end{lem}
We next show that, up to the term depending on $f$, the preceding
is also the limit of the loglikelihood ratio process in a multivariate
predictive regression with a known covariance matrix; recall \assref{PR}
given in \subsecref{likelihoodasymp}.
\begin{lem}
\label{lem:predreg}Suppose that $\{y_{\pr,t}\}$ and $\{z_{\pr t}\}$
are generated under \assref{PR}, and that $\xi_{t}=[\begin{smallmatrix}\xi_{yt}\\
\xi_{zt}
\end{smallmatrix}]\distiid N[0,\Omega]$ with $\Omega=K\Sigma K^{\trans}$. Then for
\[
\vall=n\vek\begin{bmatrix}A-A(\PHI_{n})\\
\Lambda-\Lambda_{\lu}(\PHI_{n})
\end{bmatrix},
\]
we have, jointly over any finite collection of $\vall\in\reals^{pq}$
\[
\like_{n,\pr}(\vall)-\like_{n,\pr}(0)\wkc S_{\vall}^{\trans}\vall-\tfrac{1}{2}\vall^{\trans}H_{\vall}\vall,
\]
where $\like_{n,\pr}(\vall)$ is the loglikelihood defined in \thmref{emw}.
\end{lem}
Finally, we show that when (the entirety of) $\PHI$ is in unknown,
and the model is estimated subject to the constraint \eqref{vall},
then the limit of the \emph{concentrated} loglikelihood ratio process
is asymptotically identical to that of the predictive regression,
up to (random) terms that do not depend on $\vall$. Let $\hat{f}_{n\mid\vall}\defeq n^{1/2}\hat{\f}_{n,\st\mid\vall}$,
where $\hat{\f}_{n\mid\vall}$ denotes the maximiser of $\likens_{n}(\f)$
subject to $\f$ satisfying \eqref{vall}.
\begin{lem}
\label{lem:like-cons}Jointly over every finite collection of $\vall\in\reals^{pq}$,
\[
\like_{n}(\vall,\hat{f}_{n\mid\vall})-\like_{n}(0)\wkc S_{\vall}^{\trans}\vall-\tfrac{1}{2}\vall^{\trans}H_{\vall}\vall+S_{\st}^{\trans}H_{\st}^{-1}S_{\st}
\]
\end{lem}
\begin{proof}[Proof of \lemref{invpsi}]
 Consider the mapping $\Psi$ and the permutation matrix $M\in\reals^{pq\times pq}$
such that
\begin{align}
\Psi(\PHI) & \defeq\begin{bmatrix}\vek A(\PHI)\\
\vek\Lambda_{\lu}(\PHI)\\
\vek\PHI\R_{0,\st}
\end{bmatrix} & {\cal M}\Psi(\PHI)\defeq\begin{bmatrix}M & 0\\
0 & I_{\#\st}
\end{bmatrix}\Psi(\PHI) & =\begin{bmatrix}\vek\begin{bmatrix}A(\PHI)\\
\Lambda_{\lu}(\PHI)
\end{bmatrix}\\
\vek\PHI\R_{0,\st}
\end{bmatrix}\label{eq:Psi-and-M}
\end{align}
By Lemmas~\ref{lem:implicit-maps} and \ref{lem:derivatives}, $\Psi$
is smooth and at $\PHI=\PHI_{0}$ has first differential
\begin{equation}
\deriv\Psi=\begin{bmatrix}J_{A}(\PHI_{0}) & 0\\
J_{\Lambda}(\PHI_{0}) & 0\\
0 & I_{\#\st}
\end{bmatrix}\left(\begin{bmatrix}\R_{0,\lu}^{\trans}\\
\R_{0,\st}^{\trans}
\end{bmatrix}\otimes I_{p}\right)\vek(\deriv\PHI).\label{eq:dPsi}
\end{equation}
The Jacobian on the r.h.s.\ is invertible by \lemref{GLR}\ref{enu:GLR:L}
and \lemref{derivatunity} (for the latter, since $\Lambda_{\lu}(\PHI_{0})=I_{q}$).
Thus by the inverse mapping theorem, there is an open neighbourhood
$N_{\set P}\subset\set P$ of $\PHI_{0}$ on which $\Psi$ has a smooth
inverse.

Now let $\tau_{n}(\f)\defeq\PHI_{n}+\vek^{-1}(\f)\L_{n}^{\trans}$,
which converges (uniformly on compacta) to a linear and invertible
mapping $\tau_{0}(\f)$ for which $\tau_{0}(0)=\PHI_{0}$. Hence there
exists an $n_{0}\in\naturals$ and a (fixed) open neighbourhood $N\subset\pset_{n}$
of zero such that $\tau_{n}(N)\subset N_{\set P}$, for all $n\geq n_{0}$,
with $\tau_{n}$ being invertible on $N$. By composition, the sequence
of maps defined by
\begin{equation}
D_{n}^{-1}\psi_{n}(\f)={\cal M}\{\Psi[\tau_{n}(\f)]-\Psi(\PHI_{n})\}\label{eq:almost-psi}
\end{equation}
is smooth and invertible on $N$, for all $n\geq0$, and has a smooth
inverse there; hence part~\enuref{invpsi:forward} holds. Finally,
since the image of $N$ under the r.h.s.\ must itself be an open
neighbourhood of zero, and
\begin{equation}
\begin{bmatrix}\vall\\
f
\end{bmatrix}=\psi_{n}(\f)=D_{n}{\cal M}\{\Psi[\tau_{n}(\f)]-\Psi(\PHI_{n})\},\label{eq:repmap}
\end{equation}
we may deduce that for any compact neighbourhood ${\cal K}$ of zero,
there is an $n_{1}\geq n_{0}$ such that the inverse $\psi_{n}^{-1}$
is well-defined and smooth for all $(\vall,f)\in{\cal K}$, for all
$n\geq n_{1}$. Finally, to show that the $\f_{n}\defeq\psi_{n}^{-1}(\vall,f)$
has $D_{n}\f_{n}=O(1)$, we note that since the r.h.s.\ of \eqref{almost-psi}
is (locally to zero) a diffeomorphism, which itself equals zero at
$\f=0$, the fact that
\[
{\cal M}\{\Psi[\tau_{n}(\f_{n})]-\Psi(\PHI_{n})\}=D_{n}^{-1}\begin{bmatrix}\vall\\
f
\end{bmatrix}\goesto0
\]
must imply that $\f_{n}\goesto0$. Hence follows from \eqref{dPsi}
and \eqref{repmap} that, by a Taylor expansion of \eqref{repmap}
around $\f=0$,
\[
\begin{bmatrix}MJ+o_{p}(1) & 0\\
0 & I_{\#\st}
\end{bmatrix}\begin{bmatrix}\f_{n,\lu}\\
\f_{n,\st}
\end{bmatrix}=D_{n}^{-1}\begin{bmatrix}\vall\\
f
\end{bmatrix}
\]
whence $D_{n}\f_{n}=O(1)$ as claimed. Thus part~\enuref{invpsi:back}
holds.
\end{proof}
\begin{proof}[Proof of \lemref{limexp}]
In view of \lemref{invpsi}, we may take $n$ sufficiently large
such that $\psi_{n}^{-1}$ is well defined at $(\vall,f)$. Let $\f_{n}^{\ast}\defeq\psi_{n}^{-1}(\vall,f)=o(1)$,
$\PHI_{n}^{\ast}\defeq\PHI_{n}+\vek^{-1}(\f_{n}^{\ast})\L_{n}^{\trans}$,
and $M\in\reals^{pq\times pq}$ be as in \eqref{Psi-and-M}. Then
by \propref{deltamethod}, for $J\defeq J(\PHI_{0})$
\begin{equation}
\vall=n\vek\begin{bmatrix}A(\PHI_{n}^{\ast})-A(\PHI_{n})\\
\Lambda_{\lu}(\PHI_{n}^{\ast})-\Lambda_{\lu}(\PHI_{n})
\end{bmatrix}=[MJ+o(1)]n\f_{n,\lu}^{\ast}\label{eq:pi-deriv}
\end{equation}
where by \lemref{derivatunity} and the definition of $M$,
\[
MJ=M\begin{bmatrix}I_{q}\otimes{\cal J}\\
I_{q}\otimes L_{\lu}^{\trans}
\end{bmatrix}=I_{q}\otimes\begin{bmatrix}{\cal J}\\
L_{\lu}^{\trans}
\end{bmatrix}\eqdef I_{q}\otimes K
\]
for ${\cal J}$ and $K$ as defined in \eqref{Kdef}. Noting also
that $n^{1/2}\f_{n,\st}^{\ast}=f$, it follows from \lemref{lhoodexp}
and \eqref{pi-deriv} that
\begin{align*}
\like_{n}(\vall,f)-\like_{n}(0,0) & =\like_{n}(\f_{n,\lu}^{\ast},\f_{n,\st}^{\ast})-\like_{n}(0,0)\\
 & =S_{n}^{\trans}(D_{n}\f_{n}^{\ast})-\tfrac{1}{2}(D_{n}\f_{n}^{\ast})^{\trans}H_{n}(D_{n}\f_{n}^{\ast})\\
 & \wkc[S_{\lu}^{\trans}(MJ)^{-1}\vall-\tfrac{1}{2}\vall^{\trans}[(MJ)^{-1}]^{\trans}H_{\lu}(MJ)^{-1}\vall]+[S_{\st}^{\trans}f-\tfrac{1}{2}f^{\trans}H_{\st}f].
\end{align*}
To complete the proof, we note that
\begin{align*}
[(MJ)^{-1}]^{\trans}S_{\lu} & =(I_{q}\otimes K^{-1})^{\trans}\int_{0}^{1}[\Zdet_{C}(r)\otimes\Sigma^{-1}\deriv E(r)]\\
 & =\int_{0}^{1}[\Zdet_{C}(r)\otimes(K\Sigma K^{\trans})^{-1/2}\deriv W(r)]=S_{\vall}
\end{align*}
where we have used that $E(s)=\Sigma^{-1/2}W(s)$ and,
\begin{align*}
[(MJ)^{-1}]^{\trans}H_{\lu}(MJ)^{-1} & =(I_{q}\otimes K^{-1})^{\trans}\left(\int\Zdet_{C}\Zdet_{C}^{\trans}\otimes\Sigma^{-1}\right)(I_{q}\otimes K^{-1})\\
 & =\int\Zdet_{C}\Zdet_{C}^{\trans}\otimes(K\Sigma K^{\trans})^{-1}=H_{\vall}.\qedhere
\end{align*}
\end{proof}
\begin{proof}[Proof of \lemref{predreg}]
 Letting $\vAll=[\begin{smallmatrix}A_{\pr}\\
\Lambda_{\pr}
\end{smallmatrix}]$ and noting that $\vall=n\vek(\vAll-[\begin{smallmatrix}A(\PHI_{n})\\
\Lambda_{\lu}(\PHI_{n})
\end{smallmatrix}])$, we have
\[
\like_{n}^{\pr}(\vall)=K_{n}-\frac{1}{2}\sum_{t=1}^{n}\smlnorm{x_{t}-\vAll z_{t-1}}_{\Omega^{-1}},
\]
where $K_{n}\defeq-\frac{n}{2}\log(2\pi\log\det\Omega)$. It then
follows by exactly the same arguments as were used in the proof of
\lemref{lhoodexp} that
\[
\like_{n,\pr}(\vall)-\like_{n,\pr}(0)=S_{n,\pr}^{\trans}\vall-\tfrac{1}{2}\vall^{\trans}H_{n,\pr}\vall
\]
where
\begin{align*}
S_{n,\pr} & =\frac{1}{n}\sum_{t=1}^{n}(\zdet_{t-1}\otimes\Omega^{-1/2}\eta_{t}) & H_{n,\pr} & =\frac{1}{n}\sum_{t=1}^{n}(\zdet_{t-1}\zdet_{t-1}^{\trans}\otimes\Omega^{-1}).
\end{align*}
Under \assref{PR}, it follows by \lemref{wkconv}\ref{enu:wkconv:Zc}
that $n^{-1/2}\zdet_{\smlfloor{nr}}\wkc\Zdet_{C,\pr}(r)$ on $D[0,1]$,
where $\Zdet_{C,\pr}(r)$ denotes the residual from the projection
of 
\begin{equation}
Z_{C,\pr}(r)\defeq\int_{0}^{r}\e^{C(r-s)}\Omega_{zz}^{1/2}\deriv W(s)\label{eq:Zpr}
\end{equation}
on a constant and a linear trend, and we have partitioned $\Omega=[\begin{smallmatrix}\Omega_{yy} & \Omega_{yz}\\
\Omega_{zy} & \Omega_{zz}
\end{smallmatrix}]$ conformably with $\xi_{t}=[\begin{smallmatrix}\xi_{yt}\\
\xi_{zt}
\end{smallmatrix}]$. Then by the continuous mapping theorem and the same arguments as
used in the proof of \lemref{wkconv}\enuref{wkconv:si},
\begin{align}
S_{n,\pr} & \wkc\int_{0}^{1}[\Zdet_{C,\pr}(r)\otimes\Omega^{-1/2}W(r)]\diff r & H_{n,\pr} & \wkc\int\Zdet_{C,\pr}\Zdet_{C,\pr}^{\trans}\otimes\Omega^{-1}.\label{eq:SHpr}
\end{align}
Thus we can bring \eqref{Zpr} into agreement with \eqref{Zproc},
and the limits on the r.h.s.\ of \eqref{SHpr} with \eqref{SHpi},
by setting
\[
\Omega=\begin{bmatrix}\Omega_{yy} & \Omega_{yz}\\
\Omega_{zy} & \Omega_{zz}
\end{bmatrix}=\begin{bmatrix}{\cal J}\Sigma{\cal J}^{\trans} & {\cal J}\Sigma L_{\lu}\\
L_{\lu}^{\trans}\Sigma{\cal J}^{\trans} & L_{\lu}^{\trans}\Sigma L_{\lu}
\end{bmatrix}=K\Sigma K^{\trans}.\qedhere
\]
\end{proof}
\begin{proof}[Proof of \lemref{like-cons}]
 We first show that $D_{n}\hat{\f}_{n\mid\vall}=O_{p}(1)$. By \lemref{invpsi},
for all $n$ sufficiently large, there exists a (deterministic) sequence
$\f_{n\mid\vall}\in\pset_{n}$ with $D_{n}\f_{n\mid\vall}=O(1)$,
such that \eqref{vall} holds at $\f=\f_{n\mid\vall}$. It follows
from \lemref{lhoodexp} that for each $\epsilon>0$, there exists
an $N<\infty$ such that
\[
\limsup_{n\goesto\infty}\Prob\{\like_{n}^{\ast}(\f_{n\mid\vall})-\like_{n}^{\ast}(0)<-N\}<\epsilon/2.
\]
On the other hand, adapting the argument given in the proof of \lemref{consistency},
we may also choose $M<\infty$ sufficiently large such that
\begin{align*}
\Prob\left\{ \sup_{\{\f\in\spcf_{n}\mid\smlnorm{D_{n}\f}\geq M\}}[\likens_{n}(\f)-\likens_{n}(0)]<-2N\right\}  & \ge\Prob\left\{ M[\smlnorm{S_{n}}-\tfrac{1}{2}\lambda_{\min}(H_{n})M]\leq-2N\right\} \\
 & >1-\epsilon/2
\end{align*}
for all $n$ sufficiently large. Deduce that with probability at least
$1-\epsilon$, $\like_{n}^{\ast}(\f_{n\mid\vall})$ must strictly
exceed $\like_{n}(\f)$ over all $\f\in\pset_{n}$ with $\smlnorm{D_{n}\f}\geq M$;
it follows that the constrained maximiser $\hat{\f}_{n\mid\vall}$
must have $\smlnorm{D_{n}\hat{\f}_{n\mid\vall}}<M$. Deduce that $D_{n}\hat{\f}_{n\mid\vall}=O_{p}(1)$
as claimed.

Now it follows from \eqref{dPsi} and \eqref{repmap} that, at $\f=\hat{\f}_{n\mid\vall}$,
\[
\begin{bmatrix}\deriv\vall\\
\deriv f
\end{bmatrix}=\begin{bmatrix}MJ+o_{p}(1) & 0\\
0 & I_{\#\st}
\end{bmatrix}D_{n}\deriv\f
\]
and hence
\[
D_{n}\deriv\f=\begin{bmatrix}(MJ)^{-1}+o_{p}(1) & 0\\
0 & I_{\#\st}
\end{bmatrix}\begin{bmatrix}\deriv\vall\\
\deriv f
\end{bmatrix}
\]
at $(\vall,\hat{f}_{n\mid\vall})$. Thus
\begin{equation}
n\hat{\f}_{n,\lu}=[(MJ)^{-1}+o_{p}(1)]\vall,\label{eq:consLUlim}
\end{equation}
and since $\hat{f}_{n\mid\vall}$ must satisfy the first-order conditions
for a maximum, we have from \lemref{lhoodexp} that 
\begin{align*}
\grad_{f}\like_{n}(\vall,\hat{f}_{n\mid\vall}) & =\grad_{f}[S_{n}^{\trans}(D_{n}\f)-\tfrac{1}{2}(D_{n}\f)^{\trans}H_{n}(D_{n}\f)]_{f=\hat{f}_{n\mid\vall}}\\
 & =S_{n,\st}-(n\hat{\f}_{n,\lu\mid\vall})^{\trans}H_{n,\ls}-H_{n,\st}n^{1/2}\hat{\f}_{n,\st\mid\vall}
\end{align*}
whence 
\begin{equation}
n^{1/2}\hat{\f}_{n,\st\mid\vall}=H_{n,\st}^{-1}S_{n,\st}+o_{p}(1)\wkc H_{\st}^{-1}S_{\st}.\label{eq:consSTlim}
\end{equation}
Thus, in view of \eqref{consLUlim} and \eqref{consSTlim}, the weak
limit of
\begin{align*}
\like_{n}(\vall,\hat{f}_{n\mid\vall})-\like_{n}(0) & =\like_{n}^{\ast}(\hat{\f}_{n\mid\vall})-\like_{n}^{\ast}(0)=S_{n}^{\trans}(D_{n}\hat{\f}_{n\mid\pi})-\tfrac{1}{2}(D_{n}\hat{\f}_{n\mid\pi})^{\trans}H_{n}(D_{n}\hat{\f}_{n\mid\pi})
\end{align*}
is as claimed.
\end{proof}

\section{Proofs of theorems}

\label{app:theoremproofs}
\begin{proof}[Proof of \thmref{emw}]
 This follows directly from Lemmas~\ref{lem:limexp}--\ref{lem:like-cons},
noting in particular that $\likens_{n}(\vall)=\like_{n}(\vall,\hat{f}_{n\mid\vall})$,
where the latter is as appears in \lemref{like-cons}.
\end{proof}
\begin{proof}[Proof of \thmref{estimators}]
 \proofpart{1} In the notation of \appref{asymptotics}, $\hat{\f}_{n,\lu}=\vek\{(\hat{\PHI}_{n}-\PHI_{n})\R_{n,\lu}\}$.
By \propref{andrews}\enuref{andrews:unres}
\begin{align*}
n\vek\{(\hat{\PHI}_{n}-\PHI_{n})\R_{n,\lu}\} & \wkc\left[\left(\int\Zdet_{C}\Zdet_{C}^{\trans}\right)^{-1}\otimes I_{p}\right]\int_{0}^{1}[\Zdet_{C}(r)\otimes\deriv E(r)]\\
 & =\vek\left\{ \int(\deriv E)\Zdet_{C}^{\trans}\left(\int\Zdet_{C}\Zdet_{C}^{\trans}\right)^{-1}\right\} ,
\end{align*}
and so by \propref{deltamethod} 
\begin{equation}
\begin{bmatrix}\vek\{A(\hat{\PHI}_{n})-A(\PHI_{n})\}\\
\vek\{\Lambda_{\lu}(\hat{\PHI}_{n})-\Lambda_{\lu}(\PHI_{n})\}
\end{bmatrix}\wkc\begin{bmatrix}J_{A}(\PHI_{0})\\
J_{\Lambda}(\PHI_{0})
\end{bmatrix}\vek\left\{ \int(\deriv E)\Zdet_{C}^{\trans}\left(\int\Zdet_{C}\Zdet_{C}^{\trans}\right)^{-1}\right\} .\label{eq:limdeltameth}
\end{equation}
Since $\PHI_{n}\goesto\PHI_{0}$ with $\Lambda_{\lu}(\PHI_{0})=I_{q}$
under \assref{LOC}, we have by \lemref{derivatunity} that
\begin{equation}
\begin{bmatrix}J_{A}(\PHI_{0})\\
J_{\Lambda}(\PHI_{0})
\end{bmatrix}=\begin{bmatrix}I_{q}\otimes\beta^{\trans}R_{\st}(I_{kp-q}-\Lambda_{\st})^{-1}L_{\st}^{\trans}\\
I_{q}\otimes L_{\lu}^{\trans}
\end{bmatrix}.\label{eq:limjacob}
\end{equation}
The result then follows from \eqref{limdeltameth} and \eqref{limjacob},
by reversing the vectorisation.

\proofpart{2} In the notation of \appref{asymptotics}, maximising
$\like_{n}^{\ast}(\PHI)$ subject to $\Lambda_{\lu}(\PHI)=\Lambda_{n,\lu}=I_{q}+C/n$
corresponds to maximising $\like_{n}(\f)$ subject to $\theta_{n}(\f)=0$.
Thus $\hat{\f}_{n,\lu\mid\theta}=\vek\{(\hat{\PHI}_{n\mid\Lambda_{n,\lu}}-\PHI_{n})\R_{n,\lu}\}$,
and so by \propref{andrews}\enuref{andrews:res}
\[
n\vek\{(\hat{\PHI}_{n\mid\Lambda_{n,\lu}}-\PHI_{n})\R_{n,\lu}\}\wkc\Theta_{\perp}(\Theta_{\perp}^{\trans}H_{\lu}\Theta_{\perp})^{-1}\Theta_{\perp}^{\trans}S_{\lu}
\]
where $\Theta_{\perp}=I_{q}\otimes L_{\lu,\perp}$. Hence by \propref{deltamethod},
\[
\vek\{A(\hat{\PHI}_{n\mid\Lambda_{n,\lu}})-A(\PHI_{n})\}\wkc J_{A}(\PHI_{0})\Theta_{\perp}(\Theta_{\perp}^{\trans}H_{\lu}\Theta_{\perp})^{-1}\Theta_{\perp}^{\trans}S_{\lu}.
\]

To determine the distribution of the r.h.s., we note that
\begin{equation}
\Theta_{\perp}^{\trans}S_{\lu}=\int_{0}^{1}[\Zdet_{C}(r)\otimes L_{\lu,\perp}^{\trans}\Sigma^{-1}\deriv E(r)]\eqdef\int_{0}^{1}[\Zdet_{C}(r)\otimes\deriv U(r)].\label{eq:ThetaSlu}
\end{equation}
Recall that $\Zdet_{C}$ is a function only of $Z_{C}$, which from
\eqref{Zproc} is given by
\begin{equation}
Z_{C}(r)=\int_{0}^{r}\e^{C(r-s)}L_{\lu}^{\trans}\deriv E(s)\eqdef\int_{0}^{r}\e^{C(r-s)}\deriv V(s).\label{eq:Zprocagain}
\end{equation}
$(U,V)=(L_{\lu,\perp}^{\trans}\Sigma^{-1}E,L_{\lu}^{\trans}E)$ is
a pair of vector Brownian motions, with covariance
\[
\expect U(1)V(1)^{\trans}=L_{\lu,\perp}^{\trans}\Sigma^{-1}\expect[E(1)E(1)^{\trans}]L_{\lu}=L_{\lu,\perp}^{\trans}L_{\lu}=0;
\]
whence $U$ and $V$ are independent. In particular, we have from
\eqref{Zprocagain} that $U$ is independent of $\Zdet_{C}$. This,
combined with the fact that
\begin{align*}
J_{A}(\PHI_{0})\Theta_{\perp}(\Theta_{\perp}^{\trans}H_{\lu}\Theta_{\perp})^{-1} & =\left(\int\Zdet_{C}\Zdet_{C}^{\trans}\right)^{-1}\otimes\mathcal{J}L_{\lu,\perp}(L_{\lu,\perp}^{\trans}\Sigma^{-1}L_{\lu,\perp})^{-1}
\end{align*}
depends only on $\Zdet_{C}$, implies $J_{A}(\PHI_{0})\Theta_{\perp}(\Theta_{\perp}^{\trans}H_{\lu}\Theta_{\perp})^{-1}\Theta_{\perp}^{\trans}S_{\lu}$
is mixed normal with variance
\[
\left(\int\Zdet_{C}\Zdet_{C}^{\trans}\right)^{-1}\otimes\mathcal{J}L_{\lu,\perp}(L_{\lu,\perp}^{\trans}\Sigma^{-1}L_{\lu,\perp})^{-1}L_{\lu,\perp}^{\trans}\mathcal{J}^{\trans},
\]
which proves \eqref{Arstr}.

Finally, note that the preceding holds for any choice of $L_{\lu,\perp}\in\reals^{p\times r}$
having full column rank and $L_{\lu,\perp}^{\trans}L_{\lu}=0$. Let
$\alpha\defeq\Phi_{0}(1)\beta(\beta^{\trans}\beta)^{-1}\in\reals^{p\times r}$,
where $\Phi_{0}(1)\defeq\lim_{n\goesto\infty}\Phi_{n}(1)$; then
\[
L_{\lu}^{\trans}\alpha=L_{\lu}^{\trans}\Phi_{0}(1)\beta(\beta^{\trans}\beta)^{-1}=0
\]
by \eqref{eig-eig} with $\Lambda_{\lu}=\Lambda_{\lu}(\PHI_{0})=I_{q}$.
Further, $\rank\alpha=r$ since $\spn\Phi_{0}(1)=\spn\beta$, and
thus we may indeed choose $L_{\lu,\perp}=\alpha$. In this case,
\begin{align*}
\mathcal{J}L_{\lu,\perp} & =\beta^{\trans}R_{\st}(I_{kp-q}-\Lambda_{\st})^{-1}L_{\st}^{\trans}\Phi_{0}(1)\beta(\beta^{\trans}\beta)^{-1}=_{(1)}\beta^{\trans}\beta(\beta^{\trans}\beta)^{-1}=I_{r},
\end{align*}
where $=_{(1)}$ follows from \eqref{betastuff} above. Thus \eqref{johvar}
is proved.
\end{proof}
\begin{proof}[Proof of \thmref{lrstats}]
 We first prove \eqref{multDF}. In the notation of \appref{asymptotics},
$\lr_{n}(\Lambda_{n,\lu})=2[\like_{n}^{\ast}(\hat{\f}_{n})-\like_{n}^{\ast}(\hat{\f}_{n\mid\theta})]$.
By \propref{andrews}\enuref{andrews:lrroot},
\[
\lr_{n}(\Lambda_{n,\lu})\wkc S_{\lu}^{\trans}H_{\lu}^{-1}\Theta(\Theta^{\trans}H_{\lu}^{-1}\Theta)^{-1}\Theta^{\trans}H_{\lu}^{-1}S_{\lu}\eqdef\lr,
\]
where $\Theta=I_{q}\otimes L_{\lu}$, $S_{\lu}=\int[\Zdet_{C}(r)\otimes\Sigma^{-1}\deriv E]$,
and $H_{\lu}=\int\Zdet_{C}\Zdet_{C}^{\trans}\otimes\Sigma^{-1}$.
To obtain the claimed expression for $\lr$, note that
\[
S_{\lu}=\int[\Zdet_{C}(r)\otimes\Sigma^{-1}\deriv E]=\vek\left\{ \Sigma^{-1}\int(\deriv E)\Zdet_{C}^{\trans}\right\} 
\]
and
\[
H_{\lu}^{-1}\Theta(\Theta^{\trans}H_{\lu}^{-1}\Theta)^{-1}\Theta^{\trans}H_{\lu}^{-1}=\left(\int\Zdet_{C}\Zdet_{C}^{\trans}\right)^{-1}\otimes\Sigma L_{\lu}(L_{\lu}^{\trans}\Sigma L_{\lu})^{-1}L_{\lu}^{\trans}\Sigma
\]
whence, using $\vek(A)^{\trans}\vek(B)=\tr(A^{\trans}B)$,
\begin{align}
\lr & =\tr\left\{ \Delta^{-1/2}L_{\lu}^{\trans}\int(\deriv E)\Zdet_{C}^{\trans}\left(\int\Zdet_{C}\Zdet_{C}^{\trans}\right)^{-1}\int\Zdet_{C}(\deriv E)^{\trans}L_{\lu}\Delta^{-1/2}\right\} \label{eq:LRmed}
\end{align}
where $\Delta\defeq L_{\lu}^{\trans}\Sigma L_{\lu}$. To simplify
this further, note that $L_{\lu}^{\trans}E$ is a $q$-dimensional
Brownian motion with variance $\Delta$, and so for $W_{\ast}(r)\defeq\Delta^{-1/2}L_{\lu}^{\trans}E(r)\sim\BM(I_{q})$,
we have
\begin{multline*}
Z_{C}(r)=\int_{0}^{r}\e^{C(r-s)}L_{\lu}^{\trans}\deriv E(s)=\int_{0}^{r}\e^{C(r-s)}\Delta^{1/2}\deriv W_{\ast}(s)\\
=_{(1)}\Delta^{1/2}\int_{0}^{r}\e^{C_{\ast}(r-s)}\deriv W_{\ast}(s)\eqdef\Delta^{1/2}Z_{C_{\ast}}(r)
\end{multline*}
where $C_{\ast}\defeq\Delta^{-1/2}C\Delta^{1/2}$ is as in the statement
of the theorem, and $=_{(1)}$ follows from $\e^{C}D=D\e^{D^{-1}CD}$
for any nonsingular $D$. Hence $\Zdet_{C}(r)=\Delta^{1/2}\Zdet_{C_{\ast}}(r)$,
whereupon \eqref{multDF} follows from \eqref{LRmed} and the definition
of $W_{\ast}$.

We next prove \eqref{chisqlim}. Maximisation of $\like_{n}^{\ast}(\PHI)$
subject to $\Lambda_{\lu}(\PHI)=I_{q}+C/n$ and $a_{ij}(\PHI)=a_{0}$
corresponds, in the notation of \appref{asymptotics}, to maximisation
of $\like_{n}(\f)$ subject to $\theta_{n}(\f)=0$ and $\gamma_{n}(\f)=0$.
Therefore by \propref{andrews}\enuref{andrews:coef},
\begin{align*}
\lr_{n}[a_{ij}(\PHI_{n});\Lambda_{n,\lu}] & =2[\like_{n}(\hat{\f}_{n\mid\theta})-\like_{n}(\hat{\f}_{n\mid\theta,\gamma})]\wkc(H_{\Theta,\perp}^{-1/2}\Theta_{\perp}^{\trans}S_{\lu})^{\trans}[I_{qr}-\mathcal{Q}](H_{\Theta,\perp}^{-1/2}\Theta_{\perp}^{\trans}S_{\lu}).
\end{align*}
Recall from \eqref{ThetaSlu} and the subsequent arguments that
\[
\vek\{\Theta_{\perp}^{\trans}S_{\lu}\}\eqdist\left(\int\Zdet_{C}\Zdet_{C}^{\trans}\otimes L_{\lu,\perp}^{\trans}\Sigma^{-1}L_{\lu,\perp}\right)^{1/2}\eta
\]
for $\eta\sim\normdist[0,I_{qr}]$ independent of $\Zdet_{C}$, and
therefore also of
\[
H_{\Theta,\perp}=\Theta_{\perp}^{\trans}H_{\lu}\Theta_{\perp}=\int\Zdet_{C}\Zdet_{C}^{\trans}\otimes L_{\lu,\perp}^{\trans}\Sigma^{-1}L_{\lu,\perp}.
\]
Thus $\vek\{H_{\Theta,\perp}^{-1/2}\Theta_{\perp}^{\trans}S_{\lu}\}\sim\normdist[0,I_{qr}]$
is independent of $H_{\lu}$, and therefore also of $\mathcal{Q}$.
The result follows by noting that $H_{\Theta,\perp}^{1/2}\Xi$ has
rank $qr-1$ a.s., whence $I_{qr}-\mathcal{Q}$ projects orthogonally
onto a subspace of dimension 1, a.s.
\end{proof}

\section{Computational appendix}

\label{app:numerical}

\subsection{Test statistics}

Computation of $\ci_{a_{ij}\mid\Lambda_{0}}$ and $\cinp$ involves
maximising $\likens_{n}(\PHI)$ subject to the restrictions that $\Lambda_{\lu}(\PHI)=\Lambda_{0}$
for some specified $\Lambda_{0}\in\set L$, and possibly also that
for and $a_{ij}(\PHI)=a_{0}$ for some $a_{0}\in\reals$. To implement
this estimator numerically, we introduce the constraint
\begin{equation}
\begin{bmatrix}A\\
I_{q}
\end{bmatrix}\Lambda_{0}^{k}-\sum_{i=1}^{k}\Phi_{i}\begin{bmatrix}A\\
I_{q}
\end{bmatrix}\Lambda_{0}^{k-i}=0,\label{eq:estconstraint}
\end{equation}
which incorporates \eqref{eig-eig} and \eqref{Rlurnom} above: it
forces $\Phi(\lambda)$ to have roots at the eigenvalues of $\Lambda_{0}$,
and the associated $R_{\lu}$ matrix to respect the normalisation
\eqref{Rlurnom}. We then proceed as follows:
\begin{enumerate}
\item Given $A\in\reals^{r\times q}$ and $\Lambda_{0}\in\set L$, maximise
$\likens_{n}(\PHI)$ over $\PHI\in\reals^{p\times kp}$, subject to
\eqref{estconstraint}, to obtain the maximum likelihood estimate
$\hat{\PHI}_{n\mid A,\Lambda_{0}}$ using two-step, restricted least-squares
estimation. This is straightforward, since \eqref{estconstraint}
is a linear restriction on $\PHI$ (see \citealp{Lut07}, Ch.\ 7).
\item Using a general purpose optimiser, compute 
\begin{equation}
\max_{A\in\reals^{r\times q}}\likens_{n}(\hat{\PHI}_{n\mid A,\Lambda_{0}}).\label{eq:qcc-estimator}
\end{equation}
\end{enumerate}
The maximum of $\likens_{n}(\PHI)$ subject to $\Lambda_{\lu}(\PHI)=\Lambda_{0}$
and $a_{ij}(\PHI)=a_{0}$ obtains by holding restricting $a_{ij}$
when computing the maximum in \eqref{qcc-estimator}. Point estimates
of $\PHI$ that merely impose the requirement that $\Lambda_{0}\in\set L$
(as appears e.g.\ in \eqref{LRroot}) obtain by maximising $\likens_{n}(\hat{\PHI}_{n\mid A,\Lambda_{0}})$
over both $A$ and $\Lambda_{0}\in\set L$.

When $q=1$, and in the special case where $\Lambda_{0}=\lambda_{0}I_{q}$,
\eqref{qcc-estimator} simplifies so that model \eqref{redform} becomes
\begin{equation}
\Delta_{\lambda_{0}}y_{t}=m+dt-\lambda_{0}^{-k+1}\Phi(\lambda_{0})y_{t-1}+\sum_{i=1}^{p-1}\Psi_{i}\Delta_{\lambda_{0}}y_{t-i}\label{eq:rrr-model}
\end{equation}
where $\Delta_{\lambda_{0}}y_{t}\defeq y_{t}-\lambda_{0}y_{t-1}$
denotes a quasi-difference (see \lemref{quasi-VECM}). Since $\Phi(\lambda_{0})$
has rank $p-q=r$, $\likens_{n}(\PHI)$ can then be efficiently maximised,
subject to \textbf{$\Lambda_{\lu}(\PHI)=\lambda_{0}I_{q}$, }via a
reduced rank regression, exactly as in \citet[Ch.~6]{Joh95}.

When $q\geq2$, some care needs to be taken with the parametrisation
of $\set L$. If we take this to be either the set of real normal
($\Ln$) or symmetric ($\Ls$) matrices, then each $\Lambda_{\lu}\in\set L$
can be expressed as $\Lambda_{\lu}=QD_{\lu}Q^{\trans}$, where $Q\in\reals^{q\times q}$
is an orthogonal matrix ($Q^{\trans}Q=I_{q})$ and $D_{\lu}$ is a
block diagonal, with blocks that are either: $1\times1$ and equal
to each of the real eigenvalues of $\Lambda_{\lu}$, or $(2\times2)$
and of the form $[\begin{smallmatrix}a & b\\
-b & a
\end{smallmatrix}]$, if $\Lambda_{\lu}$ has a pair of complex eigenvalues at $\lambda=a\pm\i b$
(\citealp{HJ13book}, Thm.~2.5.6 and 2.5.8). Since $Q$ can be constructed
from $q(q-1)/2$ plane rotations (\citealp[Prob.~2.1.P29]{HJ13book}),
both $\Ln$ and $\Ls$ can thus be expressed in terms of of $q(q+1)/2$
free parameters lying in a compact set.

\subsection{\label{app:nearly-opt-tests}Nearly optimal tests}

For simplicity of exposition, suppose that $p=2$ and $q=r=1$, so
that $A=a$ and $\Lambda_{\lu}=\lambda_{\lu}$. We want to test
\[
H_{0}:a=a_{0}\sep\lambda_{\lu}\in\set L\qquad\text{against}\qquad H_{1}:a\neq a_{0}\sep\lambda_{\lu}\in\set L,
\]
where $\set L=[\radius,1]$ for some user-chosen $\radius<1$. Let
$F_{0}$ and $F_{1}$ denote discrete distributions on $\reals\times\set L$,
that respectively concentrate on those subsets of the parameter space
consistent with the null and the alternative. Consider a test of the
form
\[
\np_{n}(a)\defeq\indic\left\{ \int_{\reals\times\set L}\e^{\likens_{n}(a,\lambda)}F_{1}(\deriv a,\deriv\lambda)>\crit_{\alpha}\int_{\set L}\e^{\likens_{n}(a_{0},\lambda)}F_{0}(a_{0},\deriv\lambda)\right\} 
\]
To implement the procedure of \citet{EMW15Ecta}, we require estimates
of the following probabilities:
\begin{enumerate}
\item the null rejection probability $\Prob_{(a_{0},\lambda)}\{\np_{n}(a_{0})=1\}$
for $\lambda$ in (a discretisation of) $\set L$;
\item the weighted null rejection probability $\int\Prob_{(a_{0},\lambda)}\{\np_{n}(a_{0})=1\}F_{0}(a_{0},\deriv\lambda)$;
and
\item the power under the weighted alternative, $\int_{\reals\times\set L}\Prob_{(a,\lambda)}\{\np_{n}(a_{0})=1\}F_{1}(\deriv a,\deriv\lambda)$.
\end{enumerate}
In view of \thmref{emw}, aside from $(a,\lambda)$ the only parameter
that these probabilities depend on, in large samples, is the long-run
covariance matrix $\Omega\defeq K\Sigma K^{\trans}$.

Suppose for the moment that $\Omega=\Omega_{0}$ is known. To estimate
the probabilities in (i)--(iii), we only need to simulate data from
a VAR with the same implied values of $a$, $\lambda$ and $\Omega_{0}$.
Fixing $(a,\lambda)$, consider the bivariate VAR(1) with autoregressive
coefficient matrix $\Phi(a,\lambda)\defeq R(a)\Lambda(\lambda)L(a)^{\trans}$,
where
\begin{align*}
R(a)=\begin{bmatrix}R_{\lu}(a) & R_{\st}\end{bmatrix} & =\begin{bmatrix}a & 1\\
1 & 0
\end{bmatrix} & \Lambda(\lambda) & =\diag\{\lambda,\lambda_{\st}\} & L(a)^{\trans} & =\begin{bmatrix}L_{\lu}(a)^{\trans}\\
L_{\st}(a)^{\trans}
\end{bmatrix}\defeq R(a)^{-1},
\end{align*}
where we may take $\lambda_{\st}=0$. The implied quasi-cointegrating
relation is $\beta(a)^{\trans}=(1,-a)$. Then taking
\[
K(a)\defeq\begin{bmatrix}\beta(a)^{\trans}R_{\st}(1-\lambda_{\st})^{-1}L_{\st}(a)^{\trans}\\
L_{\lu}(a)^{\trans}
\end{bmatrix}
\]
we can ensure that the implied $\Omega$ in this model agrees with
$\Omega_{0}$, at the null value of $a$, by setting the variance
matrix $\Sigma$ of the reduced-form errors to
\[
\Sigma_{0}\defeq K(a_{0})^{-1}\Omega_{0}[K(a_{0})^{\trans}]^{-1}.
\]

Now for $(a,\lambda)\in\reals\times\set L$, let $Y_{n}^{(b)}(a,\lambda)\defeq\{y_{t}^{(b)}\}_{t=1}^{n}$
denote a sample of length $n$ (i.e.\ of the same length as the observed
sample) generated as
\[
y_{t}^{(b)}=\Phi(a,\lambda)y_{t-1}^{(b)}+\Sigma_{0}^{1/2}w_{t}
\]
where $w_{t}\distiid N[0,I_{p}]$, with $y_{0}^{(b)}=0$. For each
$(a,\lambda)$ lying on a discrete grid that contains the supports
of $F_{0}$ and $F_{1}$, we generate a total of $B=20,000$ such
samples. Let $\like_{n}^{\ast}[\PHI;Y_{n}^{(b)}(a,\lambda)]$ denote
the model loglikelihood (with the deterministic terms concentrated
out), computed on the basis of the data $Y_{n}^{(b)}(a,\lambda)$.
For another (or possibly the same) $(a^{\prime},\lambda^{\prime})\in\reals\times\set L$,
define
\[
\hat{\like}_{n}^{(b)}(a^{\prime},\lambda^{\prime}\mid a,\lambda)\defeq\max_{\{\PHI\in\set P\mid\Lambda_{\lu}(\PHI)=\lambda^{\prime},A(\PHI)=a^{\prime}\}}\like_{n}^{\ast}[\PHI;Y_{n_{0}}^{(b)}(a,\lambda)]
\]
to be the concentrated loglikelihood at $A(\PHI)=a^{\prime}$ and
$\Lambda_{\lu}(\PHI)=\lambda^{\prime}$. Then we can compute
\begin{align*}
\np_{n}^{(b)}(a_{0}\mid a,\lambda) & \defeq\indic\biggl\{\int_{\reals\times\set L}\exp\{\likens_{n}(a^{\prime},\lambda^{\prime}\mid a,\lambda)\}F_{1}(\deriv a^{\prime},\deriv\lambda^{\prime})\\
 & \qquad\qquad\qquad\qquad\qquad>\crit_{\alpha}\int_{\set L}\exp\{\likens_{n}(a_{0},\lambda^{\prime}\mid a,\lambda)\}F_{0}(a_{0},\deriv\lambda^{\prime})\biggr\}
\end{align*}
as the realisation of the nearly optimal test on the dataset $Y_{n}^{(b)}(a,\lambda)$.
Hence we can estimate the probabilities in (i)--(iii) above by replacing
each instance of $\Prob_{(a,\lambda)}\{\np_{n}(a_{0})=1\}$ with
\[
\frac{1}{B}\sum_{b=1}^{B}\np_{n}^{(b)}(a_{0}\mid a,\lambda)
\]
for each value of $(a,\lambda)\in\reals\times\set L$ (in practice,
for a discrete subset thereof).

Finally, since $\Omega_{0}$ is unknown, it needs to be consistently
estimated. To that end, we recognise that for $\err_{\lu,t}=L_{\lu}^{\trans}\err_{t}$
\[
\Omega=K\Sigma K^{\trans}=\lrvar\left(\begin{bmatrix}\beta^{\trans}x_{t}\\
\err_{\lu,t}
\end{bmatrix}\right),
\]
where the final equality follows in particular by observing that
\begin{align*}
\sum_{l=-\infty}^{\infty}\expect(\beta^{\trans}x_{t})\err_{\lu,t-l}^{\trans} & =\beta^{\trans}\err_{t}\err_{\lu,t}+\beta^{\trans}R_{\st}\Lambda_{\st}^{k}\sum_{l=1}^{\infty}\expect z_{\st,t-1}\err_{\lu,t-l}^{\trans}\\
 & =\beta^{\trans}\left[I+R_{\st}\Lambda_{\st}^{k}\sum_{l=0}^{\infty}\Lambda_{\st}^{l}L_{\st}^{\trans}\right]\Sigma L_{\lu}\\
 & =_{(1)}\beta^{\trans}R_{\st}\left[\Lambda_{\st}^{k-1}+\Lambda_{\st}^{k}(I-\Lambda_{\st})^{-1}\right]L_{\st}^{\trans}\Sigma L_{\lu}\\
 & =\beta^{\trans}R_{\st}\Lambda_{\st}^{k-1}(I-\Lambda_{\st})^{-1}L_{\st}^{\trans}\Sigma L_{\lu}\\
 & =_{(2)}\beta^{\trans}R_{\st}(I-\Lambda_{\st})^{-1}L_{\st}^{\trans}\Sigma L_{\lu},
\end{align*}
where $=_{(1)}$ follows by $R\Lambda^{k-1}L^{\trans}=I_{p}$ and
$\beta^{\trans}R_{\lu}=0$, and $=_{(2)}$ by $R\Lambda^{i}L^{\trans}=0$
for $i\in\{0,\ldots,k-2\}$, which are themselves implied by $\R\L^{\trans}=I_{kp}$
(see \lemref{GLR}). Hence we can estimate $\Omega$ by computing
ML estimates of $\beta$ and $L_{\lu}$ (under only the restriction
that $\Lambda_{\lu}(\PHI)\in\set L$), and then computed an estimate
(after demeaning and detrending) of the long-run covariance matrix
of $\hat{\beta}^{\trans}y_{t}$ and $\hat{L}_{\lu}^{\trans}\hat{\err}_{t}$,
where $\hat{\err}_{t}$ denote the VAR residuals.

\end{document}